\documentclass[final,11pt]{article}
\usepackage{amsfonts, color, morefloats, pslatex, a4wide}
\usepackage{amssymb, amsthm, amsmath, latexsym, cases}
\newtheorem{theorem}{Theorem}
\newtheorem{lemma}[theorem]{Lemma}

\newtheorem{corollary}[theorem]{Corollary}

\newtheorem{open}[theorem]{Open Problem}

\newtheorem{example}[theorem]{Example}

\newtheorem{conj}[theorem]{Conjecture}

\usepackage[para]{threeparttable}

\newcommand{\ord}{{\mathrm{ord}}}

\newcommand{\rank}{{\mathrm{rank}}}
\newcommand{\lcm}{{\mathrm{lcm}}}
\newcommand{\tr}{{\mathrm{Tr}}}

\newcommand{\gf}{{\mathrm{GF}}}
\newcommand{\PG}{{\mathrm{PG}}}

\newcommand{\PAut}{{\mathrm{PAut}}}
\newcommand{\MAut}{{\mathrm{MAut}}}
\newcommand{\GAut}{{\mathrm{Aut}}}

\newcommand{\TT}{{\mathtt{T}}}
\newcommand{\wt}{{\mathtt{wt}}}

\newcommand{\Z}{\mathbb{{Z}}}

\newcommand{\m}{\mathbb{M}}

\newcommand{\C}{{\mathcal{C}}}

\newcommand{\cSim}{{\mathrm{Sim}}}
\newcommand{\cHam}{{\mathrm{Ham}}}

\newcommand{\ba}{{\mathbf{a}}}

\newcommand{\bc}{{\mathbf{c}}}

\newcommand{\bu}{{\mathbf{u}}}
\newcommand{\bv}{{\mathbf{v}}}

\newcommand{\bg}{{\mathbf{g}}}
\newcommand{\be}{{\mathbf{e}}}
\newcommand{\bG}{{\mathbf{G}}}
\newcommand{\bH}{{\mathbf{H}}}

\newcommand{\bzero}{{\mathbf{0}}}
\newcommand{\bone}{{\mathbf{1}}}

\newcommand{\0}{\textbf{0}}
\newcommand{\1}{\textbf{1}}

%\newcommand{\cR}{{\mathcal{R}}}

%\newcommand{\AGL}{{\mathrm{AGL}}}

%\newenvironment{proof}[1][Proof]{\begin{trivlist}
%\item[\hskip \labelsep {\bfseries #1}]}{\end{trivlist}}
%\newenvironment{definition}[1][Definition]{\begin{trivlist}
%\item[\hskip \labelsep {\bfseries #1}]}{\end{trivlist}}

%\newenvironment{remark}[1]
%[Remark]{\begin{trivlist}
%\item[\hskip \labelsep {\bfseries #1}]}{\end{trivlist}}

\makeatletter

\newcommand{\Rmnum}[1]{\expandafter\@slowromancap\romannumeral #1@}
\makeatother

\begin{document}

\title{The Extended Codes of Some Linear Codes\thanks{
Z. Sun's research was supported by The National Natural Science Foundation of China under Grant Number  62002093. 
C. Ding's research was supported by The Hong Kong Research Grants Council, Proj. No. $16301123$. 
}}

\author{
Zhonghua Sun\thanks{School of Mathematics, Hefei University of Technology, Hefei, 230601, Anhui, China. Email:  sunzhonghuas@163.com}  \and 
Cunsheng Ding\thanks{Department of Computer Science
                           and Engineering, The Hong Kong University of Science and Technology,
Clear Water Bay, Kowloon, Hong Kong, China. Email: cding@ust.hk} \and Tingfang Chen \thanks{Department of Computer Science
                           and Engineering, The Hong Kong University of Science and Technology,
Clear Water Bay, Kowloon, Hong Kong, China. Email: tchenba@connect.ust.hk}
}

\maketitle

\begin{abstract} 
The classical way of extending an $[n, k, d]$ linear code $\C$ is to add an overall parity-check coordinate to each codeword of the linear code $\C$. This extended code, denoted by $\overline{\C}(-\bone)$ and called the standardly extended code of $\C$, is a linear code with parameters $[n+1, k, \bar{d}]$, where $\bar{d}=d$ or $\bar{d}=d+1$. 
This is one of the two extending techniques for linear codes in the literature.  
The standardly extended codes of some families of binary linear codes have been studied to some extent. However, not much is known about the standardly extended codes of nonbinary codes. For example, the minimum distances of the standardly extended codes of the nonbinary Hamming codes remain open for over 70 years. The first objective of this paper is to introduce the nonstandardly extended codes of a linear code and develop some general theory for this type of extended linear codes. The second objective is to study 
this type of extended codes of a number of families of linear codes, including cyclic codes and nonbinary Hamming codes.  Four families of distance-optimal or dimension-optimal linear codes are obtained with this extending technique.  The parameters of certain extended codes of many families of linear codes are settled in this paper. 

\vspace*{.3cm}
\noindent 
{\bf Keywords:} Extended code, Hamming code, constacyclic code, cyclic code, linear code 
\end{abstract}

%\tableofcontents

\section{Introduction}

\subsection{Linear codes, constacyclic and cyclic codes}  

Throughout this section, let $q$ be a prime power, $\gf(q)$ be the finite field with $q$ elements and $n$ be a positive integer. An $[n, k, d]$ code $\C$ over $\gf(q)$ is $k$-dimensional linear subspace of $\gf(q)^n$ with minimum distance $d$. By the parameters of a linear code, we mean its length, dimension and minimum distance. Let $A_i$ denote the total number of codewords with weight $i$ in $\C$. Then the sequence $(A_0, A_1, \ldots, A_n)$ is called the weight distribution of $\C$ and the polynomial $\sum_{i=0}^n A_i z^i$ is referred to as the weight enumerator of $\C$. The dual code of $\C$, denoted by $\C^\perp$, is defined by
$$\left\{(c_0,c_1, \ldots, c_{n-1})\in \gf(q)^n: \sum_{i=0}^{n-1} c_i b_i=0, ~\forall ~(b_0,b_1,\ldots, b_{n-1})\in \C  \right\}.$$
It is known that $\C^\perp$ is an $[n, n-k]$ code over $\gf(q)$.

An $[n, k, d]$ code over $\gf(q)$ is said to be {\it distance-optimal} (resp. {\it distance-almost-optimal}) if there is no $[n, k, d']$ code over $\gf(q)$ with $d'>d$ (resp. $d'>d+1$). An $[n, k, d]$ code over $\gf(q)$ is said to be {\it dimension-optimal} (resp. {\it dimension-almost-optimal}) if there is no $[n, k', d]$ code over $\gf(q)$ with $k' > k$ (resp. $k'>k+1$). A linear code is said to be {\it optimal} if it is distance-optimal, or dimension-optimal. By the best known parameters of $[n, k]$ linear codes over $\gf(q)$ we mean an $[n, k, d]$ code over $\gf(q)$ with the largest known $d$ reported in the tables of best linear codes known maintained at http://www.codetables.de.

Let $\lambda \in \gf(q) \backslash \{0\}$. An $[n,k,d]$ code $\C$ over $\gf(q)$ is {\it $\lambda$-constacyclic} if 
$(c_0,c_1, \ldots, c_{n-1}) \in \C$  
implies $(\lambda c_{n-1}, c_0, c_1, \ldots, c_{n-2}) \in \C$. Define
\begin{align*}
\Phi: \ \gf(q)^n &\longrightarrow 	\gf(q)[x]/\langle x^n-\lambda\rangle \\
(c_0,c_1, \ldots, c_{n-1})&\longmapsto \sum_{i=0}^{n-1} c_i x^i.
\end{align*}
It is known that every ideal of $\gf(q)[x]/\langle  x^n-\lambda \rangle$ is {\it principal} and a code $\C \subset \gf(q)^n$ is 
$\lambda$-constacyclic if and only if $\Phi(\C)$ is an ideal of $\gf(q)[x]/\langle x^n-\lambda \rangle$. Due to this, we will 
identify $\C$ with $\Phi(\C)$ for any $\lambda$-constacyclic code $\C$ of length $n$ over $\gf(q)$. Let $\C=\langle g(x) \rangle$ be a $\lambda$-constacyclic code of length $n$ over $\gf(q)$, where $g(x)$ is monic and has the smallest degree. Then $g(x)$ is called the {\it generator polynomial} and $h(x)=(x^n-\lambda)/g(x)$ is referred to as the {\it check polynomial} of $\C$. The dual code $\C^\perp$ is generated by the reciprocal polynomial of the check polynomial $h(x)$ of $\C$ and is a $\lambda^{-1}$-constacyclic code of length $n$ over $\gf(q)$ \cite{KS90}. By definition, $1$-constacyclic codes are the classical cyclic codes. 

\subsection{Automorphism groups and equivalence of linear codes} 

Two linear codes $\C_1$ and $\C_2$ are said to be  {\em permutation-equivalent\index{permutation equivalent for codes}} if there is a permutation of coordinates which sends $\C_1$ to $\C_2$. This permutation could be described by employing a \textit{permutation matrix}\index{permutation matrix}, which is a square matrix with exactly one 1 in each row and column and 0s elsewhere. The set of coordinate permutations that map a code $\C$ to itself forms a group, which is referred to as the \emph{permutation automorphism group\index{permutation automorphism group of codes}} of $\C$ and denoted by $\PAut(\C)$. 

A \emph{monomial matrix\index{monomial matrix}} over $\gf(q)$ is a square matrix having exactly one nonzero element of $\gf(q)$  in each row and column. A monomial matrix $\mathbf{M}$ can be written either in the form $\mathbf{D}\mathbf{P}$ or the form $\mathbf{P}\mathbf{D}_1$, where $\mathbf{D}$ and $\mathbf{D}_1$ are diagonal matrices and $\mathbf{P}$ is a permutation matrix. Two linear codes $\C_1$ and $\C_2$ of the same length over $\gf(q)$ are said to be \emph{scalar-equivalent\index{scalar-equivalent}} if there is an invertible diagonal matrix $\mathbf{D}$ over $\gf(q)$ such that $\C_2=\C_1\mathbf{D}:=\{\bc \mathbf{D} :~\bc \in \C_1 \}$. 

Let $\C_1$ and $\C_2$ be two linear codes of the same length over $\gf(q)$. Then $\C_1$ and $\C_2$ are said to be \emph{monomially-equivalent\index{monomially equivalent}} if there is a nomomial matrix $\mathbf{M}$ over $\gf(q)$ such that $\C_2=\C_1\mathbf{M}$. Monomial equivalence and permutation equivalence are precisely the same for binary codes. If $\C_1$ and $\C_2$ are monomially-equivalent, then they have the same weight distribution. The set of monomial matrices that map $\C$ to itself forms the group $\MAut(\C)$, which is called the \emph{monomial automorphism group\index{monomial automorphism group}} of $\C$. By definition, we have $\PAut(\C) \subseteq \MAut(\C)$. 

Two codes $\C_1$ and $\C_2$ are said to be \textit{equivalent}\index{equivalent} if there is a monomial matrix $\mathbf{M}$ and an automorphism $\gamma$ of $\gf(q)$ such that $\C_1=\C_2 \mathbf{M} \gamma$.  All three are the same if the codes are binary; monomial equivalence and equivalence are the same if the field considered has a prime number of elements. 

The \textit{automorphism group}\index{automorphism group} of $\C$, denoted by $\GAut(\C)$, is the set of maps of the form $\mathbf{M}\gamma$, where $\mathbf{M}$ is a monomial matrix and $\gamma$ is a field automorphism, that map $\C$ to itself. In the binary case, $\PAut(\C)$,  $\MAut(\C)$ and $\GAut(\C)$ are the same. If $q$ is a prime, $\MAut(\C)$ and $\GAut(\C)$ are identical. In general, we have $ \PAut(\C) \subseteq \MAut(\C) \subseteq \GAut(\C)$. 

\subsection{Extended codes $\overline{\C}(\bu)$ of linear codes}

An important topic in coding theory is the construction of new linear codes over finite fields with interesting 
parameters and properties. 
Given a linear code $\C$ over a finite field $\gf(q)$, one can obtain a new code from $\C$ in different ways. 
For example, puncturing and shortening $\C$ on a set of coordinate positions are two ways for obtaining new 
linear codes. Both the puncturing and shortening approaches are interesting and important, as every 
linear code $\C$ with dual distance at least $3$ is permutation-equivalent to a punctured one-weight irreducible 
cyclic code and a shortened code of the dual code of such  one-weight irreducible 
cyclic code \cite{HD19,LDT21}.  Another way for obtaining a new code from a given linear code $\C$ is to extend 
$\C$. There are different ways to extend a given linear code $\C$. Below is one of the two extending techniques.

Let $\bu=(u_1, u_2, \ldots, u_n) \in \gf(q)^n$ be any nonzero 
vector. Any given $[n, k, d]$ code $\C$ over $\gf(q)$ can be extended into an  $[n+1, k, \overline{d}]$ code 
$\overline{\C}(\bu)$ over $\gf(q)$, where 
\begin{eqnarray}\label{eqn-extendedcodegeneral}
\overline{\C}(\bu)=\left\{(c_1, c_2, \ldots, c_n, c_{n+1}):  (c_1,c_2, \ldots, c_n) \in \C, \, c_{n+1}=\sum_{i=1}^n u_i c_i\right\}. 
\end{eqnarray} 
By definition, $\overline{d}=d$ or $\overline{d}=d+1$. The given linear code $\C$ has  $q^n-1$ extended codes, as 
$\bu$ has $q^n-1$ choices. But many of them are not interesting. 
%Below we classify these extended codes in two ways. 

When $\bu \in \C^\perp$, the extended coordinate of each codeword in $\overline{\C}(\bu)$ is always 
zero and the extended code $\overline{\C}(\bu)$ is not interesting, as the extended coordinate is not 
useful for error-detection and error-correction. Such extended code is said to be \emph{trivial}.  

When $\bu$ has weight less than $n$, some coordinates of the codewords in $\C$ are not involved in the computation of the extended coordinate of all the codewords in the  extended code $\overline{\C}(\bu)$, i.e., the extended coordinate of the codewords in $\overline{\C}(\bu)$ is independent of some coordinates in the codewords in $\C$. Such an extended code $\overline{\C}(\bu)$ is said to be \emph{incomplete}. When $\bu$ has weight $n$, $\overline{\C}(\bu)$ is called a \emph{completely extended code.}

For each $[n, k, d]$ code $\C$ over $\gf(q)$, define 
%\begin{eqnarray}
$\Pi(\C)=\gf(q)^n \setminus \C^\perp.$  
%\end{eqnarray} 
Let $\bone$ and $\bzero$ denote the all-one vectors and the zero vectors over $\gf(q)$ with various lengths. Any extended code of a zero code $\{\bzero\}$ is another zero code. 
Hence, we do not consider the extended codes of the zero codes. For every nonzero code $\C$, the set $\Pi(\C)$ 
is nonempty. The extended code $\overline{\C}(-\bone)$ of certain linear codes 
$\C$ has been studied in the literature and will be called the \emph{standardly extended code}. 
For each $a \in \gf(q)^*$, the extended code $\overline{\C}(a \bone)$ is monomially-equivalent to 
$\overline{\C}(- \bone)$ and they have the same minimum distance and weight distribution. 
All extended codes $\overline{\C}(a \bone)$ with $a \neq 0$ are called the \emph{standardly extended codes}.  
All the other extended codes $\overline{\C}(\bu)$ with $\bu \neq a \bone$ for all $a \in \gf(q)^*$ are called 
\emph{nonstandardly extended codes}. Some nonstandardly extended codes $\overline{\C}(\bu)$ may be 
monomially-equivalent to a standardly extended code $\overline{\C}(a \bone)$. But the nonstandardly extended codes $\overline{\C}(\bu)$ 
have various kinds of weight distributions in general and belong to some different monomially-equivalent classes. 
In this paper, 
we consider those extended codes $\overline{\C}(\bu)$ with only $\bu \in \Pi(\C)$.   

\subsection{Some motivations of studying extended codes $\overline{\C}(\bu)$ of linear codes}

\subsubsection{First motivation: the standardly extended codes of linear codes are not well studied yet} 

There are a very small number of references and results on the  standardly extended codes $\overline{\C}(-\bone)$ of linear codes $\C$ 
over finite fields (see Section \ref{sec-sum} for a summary of earlier works). The minimum distance of the standardly extended code $\overline{\C}(-\bone)$ of most known linear codes $\C$ is open, not to mention the weight distribution of the extended code $\overline{\C}(-\bone)$.

In short, not much on the  standardly extended codes $\overline{\C}(-\bone)$ of known families of linear codes $\C$ over finite fields 
is known. Studying the standardly extended codes $\overline{\C}(-\bone)$ of known linear codes $\C$ may get new results 
on the original codes $\C$, and may obtain linear codes with new parameters or weight distributions or new applications. 
This is the first motivation of studying extended codes $\overline{\C}(\bu)$.

\subsubsection{Second motivation: nonstandardly extended codes of linear codes could be much more interesting in some senses compared with the standardly extended codes}

If $\bone \in \C^\perp$ for a given linear code $\C$, then the standardly extended code $\overline{\C}(-\bone)$ is trivial and not interesting at all. 
Even if $\overline{\C}(-\bone)$ is nontrivial and interesting in some senses, some nonstandardly extended code  $\overline{\C}(\bu)$ 
could be very much more interesting. Below is a very convincing example for elaborating this statement  above. 

\begin{example}\label{exam-111} 
Let $\lambda$ be a primitive element of $\gf(4)$ with $\lambda^2+\lambda+1=0$. It can be verified that 
$x^7+\lambda=(x+\lambda)(x^3 + \lambda^2 x + 1)(x^3 + \lambda x^2 + 1)$. Let $\C$ denote the $\lambda$-constacyclic 
code of length $7$ over $\gf(4)$ with generator polynomial $x^3 + \lambda x^2 + 1$. Then $\C$ and $\C^\perp$ have parameters $[7,4,3]$ and $[7,3,4]$, respectively. 
The standardly extended code $\overline{\C}(-\bone)$ has parameters $[8, 4, 3]$ and weight enumerator 
$ 
1 +  6z^3 + 18z^4 + 48 z^5  + 108z^6 +  42z^7 + 33 z^8. 
$  
The dual code $\overline{\C}(-\bone)^\perp$ has parameters $[8, 4, 3]$. Hence,  the standardly extended code $\overline{\C}(-\bone)$ 
is nontrivial. 

However, the nonstandardly extended code $\overline{\C}((\lambda,   1,   \lambda,   \lambda,  \lambda^2,   1, \lambda^2))$ of $\C$  has 
parameters $[8,4,4]$ and weight enumerator $1+42z^4+168z^6+45z^8$. The dual code  $\overline{\C}((\lambda,   1,   \lambda,   \lambda,  \lambda^2,   1, \lambda^2))^\perp$ has also parameters $[8,4,4]$.  This nonstandardly extended code $\overline{\C}((\lambda,   1,   \lambda,   \lambda,  \lambda^2,   1, \lambda^2))$ is much more interesting than the standardly extended code $\overline{\C}(-\bone)$ in 
the following senses. 
\begin{itemize}
\item The nonstandardly extended code $\overline{\C}((\lambda,   1,   \lambda,   \lambda,  \lambda^2,   1, \lambda^2))$ is distance-optimal  and has a larger minimum distance $4$. 
\item The nonstandardly extended code $\overline{\C}((\lambda,   1,   \lambda,   \lambda,  \lambda^2,   1, \lambda^2))$ has only 3 
nonzero weights, while the standardly extended code $\overline{\C}(-\bone)$ has 6 nonzero weights. 
\item The minimum weight codewords in  $\overline{\C}((\lambda,   1,   \lambda,   \lambda,  \lambda^2,   1, \lambda^2))$ support a 
$3$-design, while the minimum weight codewords in  $\overline{\C}(-\bone)$ do not support a $1$-design. This makes a big difference. 
\item The automorphism group of  $\overline{\C}((\lambda,   1,   \lambda,   \lambda,  \lambda^2,   1, \lambda^2))$ has size $4032$, 
while the automorphism group of $\overline{\C}(-\bone)$ has size only $12$. This makes a very big difference between these two extended codes, as the size of the automorphism group is related to the decoding complexity of the code and possibly some 
applications of the code (e.g., the construction of $t$-designs with a code). 
\end{itemize} 
\end{example}
As will be seen later, for several infinity families linear codes $\C$, the standardly extended code $\overline{\C}(-\bone)$ is trivial, but the following hold.
\begin{enumerate}
\item There is $\bu \in \Pi(\C)$ such that $\overline{\C}(\bu)$ is an MDS code (see Theorems \ref{INF-T} and \ref{INF-TT}).
\item There is $\bu \in \Pi(\C)$ such that $\overline{\C}(\bu)^\perp$ is a distance-optimal code (see Theorem \ref{INF-T}).	
\end{enumerate}
Hence, while $\overline{\C}(-\bone)$ is trivial, some nonstandardly extended codes $\overline{\C}(\bu)$ of some linear codes 
could be very interesting. 
These examples demonstrate that certain nonstandardly extended codes $\overline{\C}(\bu)$ could be much more interesting in several senses than the standardly extended code $\overline{\C}(-\bone)$ for some linear codes $\C$. But little is known about the nonstandardly extended codes $\overline{\C}(\bu)$. Notice that  the extended code $\overline{\C}(-\zeta \bone)$ of a   quadratic-residue code treated in the literature is a standardly extended code  by definition, as $\overline{\C}(-\zeta \bone)$ is monomially-equivalent to $\overline{\C}(-\bone)$. Studying the nonstandardly extended codes  $\overline{\C}(\bu)$ of some linear codes may obtain interesting codes with new parameters and properties as well as new applications, as demonstrated by Example \ref{exam-111}. This is the second and also a major motivation of studying the extended codes $\overline{\C}(\bu)$.  Hopefully, Example \ref{exam-111} and this second motivation could convince the reader that it is really interesting to study some nonstandardly extended codes $\overline{\C}(\bu)$ of some linear codes. 

%The authors of this paper are not aware of any reference, where a nonstandardly extended code of a linear code was studied.    

\subsection{Some  earlier works on the standardly extended code $\overline{\C}(-\bone)$}\label{sec-sum} 

The standardly extended code $\overline{\C}(-\bone)$ of certain families of cyclic codes over finite fields has been studied to some extent in the literature. Affine-invariant codes are standardly extended cyclic codes and they include the extended narrow-sense primitive BCH codes \cite{BC99}, the generalized Reed-Muller codes and the Dilix codes \cite[Chapt. 6]{DingBk2}. The standardly extended code $\overline{\C}(-\bone)$ of the binary and ternary Golay code has been well investigated  \cite{HP03}. A standardly extended code $\overline{\C}(-\zeta\bone)$ of the odd-like quadratic-residue codes was studied \cite[Sec. 6.6.3]{HP03}, and its automorphism group was settled \cite{Huff95}. The extended codes $\overline{\C}(-\bone)$ of the binary Hamming and Simplex codes were analysed. The standardly extended code of the narrow-sense Reed-Solomon code and some generalized Reed-Solomon codes is known to be MDS (\cite[Sect. 11.5]{MS77}, \cite[Sect. 5.3]{HP03}). The standardly extended codes of a family of reversible MDS cyclic codes were studied in \cite{SD23}. But extended codes of most MDS codes have not been investigated. The standardly extended codes of some duadic codes are known to be self-dual, but their parameters are unknown \cite[Chapt. 6]{HP03}. Some extended Goppa codes were treated in \cite{BM73}, \cite[Chapt. 12]{MS77}. The standardly extended codes of the narrow-sense primitive BCH codes were studied in \cite{DZ16} and the standardly extended codes of several families of cyclic codes were investigated in \cite{AD21,DDT,Ding18,DingBk2,Pless86}. 

\subsection{Another type of extended codes in the literature} 

\subsubsection{The description of this type of extended codes}

For a given $[n, k,d]$ linear code $\C$ over $\gf(q)$, another extending method goes as follows. Choose an $\ell \times n$ matrix $\bG$ whose row vectors linearly span $\C$ and $\bg \in \gf(q)^\ell$. 
Then construct an $\ell \times (n+1)$ matrix 
\begin{eqnarray}\label{eqn-tildeGv}
\underline{\bG}(\bG, \bg)=(\bG~\bg^T),
\end{eqnarray} 
where $\bg^T$ denotes the transpose of $\bg$. Let $\underline{\C}(\bG, \bg)$ be the linear code linearly spanned by the row vectors of the matrix $\underline{\bG}(\bG,\bg)$. By definition, $\ell \ge k$ and the extended code $\underline{\C}(\bG, \bg)$ has parameters $[n+1, \underline{k},\underline{d}]$, where $\underline{k} \in \{k, k+1\}$ and $\underline{d} \in \{d, d+1\}$.  It is easy to see that the following hold:
\begin{itemize}
\item The exact values of $\underline{k}$ and $\underline{d}$ depend on both $\bG$ and $\bg$. 
\item The weight distribution of the extended code $\underline{\C}(\bG, \bg)$ depends on both $\bG$ and $\bg$.   
\item The extended code $\underline{\C}(\bG, \bg)$ has dual distance $1$ if $\bg$ is the zero vector.  
\end{itemize} 
There are a huge number of choices of such matrix $\bG$ for the given $\C$ and many choices of a nonzero  vector $\bg$. Hence, by this extending technique a given $[n, k, d]$ linear code $\C$ over $\gf(q)$ could be extended into a huge number of linear codes $\underline{\C}(\bG, \bg)$. 

\subsubsection{Differences and connections between the two types of extended codes}

For a given $[n, k,d]$ linear code $\C$ over $\gf(q)$, the two kinds of extended codes $\overline{\C}(\bu)$ and $\underline{\C}(\bG, \bg)$ are different but related in certain cases.  While the extending technique $\overline{\C}(\bu)$ does not change the dimension of $\C$, the extending technique $\underline{\C}(\bG, \bg)$ may increase the dimension of $\C$ by one in certain cases.  This is one of the major differences between 
the two types of extending techniques.  
We have the following lemmas whose proofs are straightforward and omitted.

\begin{lemma}\label{lem-nov121} 
For each $\overline{\C}(\bu)$ with any $1 \times n$ vector $\bu \in \gf(q)^n$, let $\bG$ be any generator matrix of $\C$ and $\bg^T=\bG\bu^{T}$. Then $\overline{\C}(\bu)=\underline{\C}(\bG, \bg)$. 
\end{lemma} 

Note that there are many pairs of $\bG$ and $\bg$ such that $\overline{\C}(\bu)=\underline{\C}(\bG, \bg)$ according to Lemma \ref{lem-nov121}. Hence, any result on an extended code $\overline{\C}(\bu)$ can be transformed into the same result on many extended codes $\underline{\C}(\bG, \bg)$.  

\begin{lemma}\label{lem-nov131}
For a given extended code $\underline{\C}(\bG, \bg)$ with a $1 \times \ell$ vector $\bg \in \gf(q)^\ell$, there is a $1 \times n$ vector in $\gf(q)^n$ such that $\overline{\C}(\bu)= \underline{\C}(\bG, \bg)$ if and only if  $\bG\bu^T=\bg^T$. 
\end{lemma} 

Lemma \ref{lem-nov131} says that a result on an extended code $\underline{\C}(\bG, \bg)$ may not be transformed into the same result on any extended code $\overline{\C}(\bu)$. Hence, the two extending techniques are different.  

\begin{lemma}\label{lem-nov122} 
For a given extended code  $\underline{\C}(\bG, \bg)$ with $\bG$ being a $k \times n$ generator matrix of $\C$ and $\bg$ being a $1 \times k$ vector over $\gf(q)$,  there are $q^{n-k}$ vectors $\bu$ in $\gf(q)^n$ such that $\overline{\C}(\bu)=\underline{\C}(\bG, \bg)$, as $\rank(\bG)=\rank((\bG~\bg^T))=k \leq n$. These $\bu$ are the solutions to the system of equations $\bG\bu^T=\bg^T$. 
\end{lemma} 

Lemma \ref{lem-nov122} documents a special case in which a result on an extended code $\underline{\C}(\bG, \bg)$ can be transformed to the same result on certain extended codes $\overline{\C}(\bu)$. 

According to Lemma \ref{lem-nov131}, the extended coordinate in an extended code $\underline{\C}(\bG, \bg)$ may not depend on other coordinates linearly in certain cases. Naturally, coding theorists are more interested in the extended codes $\overline{\C}(\bu)$, as the extended coordinate in $\overline{\C}(\bu)$ depends on other coordinates linearly according to the specific relation defined by $\bu$ and is redundant. The purpose of adding the extended coordinate in $\overline{\C}(\bu)$ is for error-detection and error-correction. In particular, coding theorists are mostly interested in these extended codes $\overline{\C}(\bu)$ with $\bu$ having full Hamming weight $n$. This explains why the classical extended code $\overline{\C}(-\bone)$ for certain families of linear codes was investigated by coding theorists in the literature.

\subsubsection{Some earlier references on the type of extended codes $\underline{\C}(\bG, \bg)$}

Let $\C$ be an $[n, k, d]$ linear code over $\gf(q)$. If there is an $[n+1, k, d+1]$ linear code $\C'$ such that $\C$ is a punctured code of $\C'$, then $\C$ is said to be \emph{extendable} and $\C'$ is called an \emph{extension} of $\C$. A code $\C$ is \emph{doubly extendable} if there exists an extension of $\C$ which is also extendable. Multiple extendability is similarly defined. 

The extendability of certain linear codes $\C$ with respect to this extending technique $\underline{\C}(\bG, \bg)$ and parameters of certain extended linear codes $\underline{\C}(\bG, \bg)$  were investigated in \cite{CM09,Hill99,HLL94,Hirsch98,KM18,KSMM22,KTM16,Kohnert,LR06,LRS16,Muruta01,Muruta03,Muruta04,Muruta051,Muruta052,Muruta08,Muruta11,MO06,MO07,MO09,MTK08,MY12,Roth,SR,Simonis,YM10}. The results about some extended linear codes $\underline{\C}(\bG, \bg)$ presented in these references may not be transformed into similar results about any extended code $\overline{\C}(\bu)$ according to Lemma \ref{lem-nov131}. Notice that this paper will not investigate this type of extended codes $\underline{\C}(\bG, \bg)$.

\subsection{The objectives of this paper} 

This paper is mainly concerned with the following open problem:
\begin{open}\label{prob-1120} 
What are the parameters of an extended code $\overline{\C}(\bu)$ for a given linear code  $\C$ and vector $\bu$? 
\end{open} 

The objectives of this paper are the following:
\begin{enumerate}
\item Present several fundamental results about the extended codes $\overline{\C}(\bu)$ of linear codes over finite fields.
\item Study the parameters of the extended codes $\overline{\C}(\bu)$ of some infinite families of linear codes over finite fields.
\item Analyse the parameters of the extended codes $\overline{\C}(\bu)$ of nonbinary Hamming codes.
\item Construct several infinite families of linear codes with optimal parameters with the extending technique $\overline{\C}(\bu)$.	
\item Solve Open Problem \ref{prob-1120}  for certain families of linear codes $\C$ and certain vectors $\bu$.
\end{enumerate}

\subsection{The organisation of this paper}

The rest of this paper is organized as follows. In Section \ref{Sec:2}, we present some auxiliary results. In Section \ref{Sec:3}, we prove some general results for the extended codes. In Section \ref{Sec:4}, we study the parameters of the dual of the extended codes. In Section \ref{sec-5555}, we analyse the parameters of the extended codes of several infinite families of linear codes over finite fields. We present two families of near MDS codes. In Section \ref{nsec::5}, we analyse the parameters of the standardly extended codes of several families of cyclic codes. We obtain several infinite families of distance-optimal codes and distance-almost-optimal codes. In Section \ref{Sec:6}, we analyse the parameters of the extended codes of nonbinary Hamming codes. In Section \ref{Sec:7}, we conclude this paper and make some concluding remarks. 

\section{Some auxiliary results}\label{Sec:2}

Throughout this paper, for a linear code $\C$, we use $\dim(\C)$ and $d(\C)$ to denote its dimension and minimum distance, respectively.

\subsection{Cyclotomic cosets}

Let $n$ be a positive integer with $\gcd(q, n)=1$, $r$ be a positive divisor of $q-1$, and let $\lambda$ be an element of $\gf(q)$ with order $r$. To deal with $\lambda$-constacyclic codes of length $n$ over $\gf(q)$, we have to study the factorization of $x^n-\lambda$ over $\gf(q)$. To this end, we need to introduce $q$-cyclotomic cosets modulo $rn$. 

Let $\Z_{rn}=\{0,1,2,\cdots,rn-1\}$ be the ring of integers modulo $rn$. For any $i \in \Z_{rn}$, the \emph{$q$-cyclotomic coset of $i$ modulo $rn$} is defined by 
\[C^{(q,rn)}_i=\{iq^j \bmod {rn}: \ 0 \leq j\leq \ell_i-1 \} \subseteq \Z_{rn}, \]
where $\ell_i$ is the smallest positive integer such that $i \equiv i q^{\ell_i} \pmod{rn}$, and is the \textit{size} of the $q$-cyclotomic coset $C^{(q,rn)}_i$. The smallest integer in $C^{(q, rn)}_i$ is called the \textit{coset leader} of $C^{(q, rn)}_i$. Let $\Gamma_{(q, rn)}$ be the set of all the coset leaders, then
 $$\bigcup_{i \in  \Gamma_{(q, rn)} } C_i^{(q,rn)}=\Z_{rn}.$$
%We have then
% $$C^{(q, rn)}_i \cap C^{(q, rn)}_j = \emptyset$$
% for any two distinct elements $i, j\in \Gamma_{(q, rn)}$, and  
% $$\bigcup_{i \in  \Gamma_{(q, rn)} } C_i^{(q,rn)}=\Z_{rn}.$$

Let $m=\ord_{r n}(q)$. It is easily seen that there is a primitive element $\alpha$ of $\gf(q^m)$ such that $\beta=\alpha^{(q^m-1)/rn}$ and $\beta^n=\lambda$. Then $\beta$ is a primitive $rn$-th root of unity in $\gf(q^m)$. 
 The \textit{minimal polynomial} $\m_{\beta^i}(x)$ of $\beta^i$ over $\gf(q)$ is the monic polynomial of the smallest degree over $\gf(q)$ with $\beta^i$ as a zero. We have 
 $$
\m_{\beta^i}(x)=\prod_{j \in C_i^{(q,rn)}} (x-\beta^j) \in \gf(q)[x], 
$$ 
which is irreducible over $\gf(q)$. %It then follows that 
%$$ x^{rn}-1=x^{rn}-\lambda^r=\prod_{i \in  \Gamma_{(q, rn)}} \m_{\beta^i}(x).$$ 
Define $\Gamma_{(q, rn,r)}^{(1)}=\{i: i \in \Gamma_{(q,rn)}, \, i \equiv 1 \pmod{r} \}$. Then
 $$x^{n}-\lambda=\prod_{i \in  \Gamma_{(q,rn,r)}^{(1)}} \m_{\beta^i}(x).$$

\subsection{The BCH bound for constacyclic codes}

The following lemma documents the BCH bound for constacyclic codes over finite fields, which is a generalization of the BCH bound of cyclic codes. 
%[The BCH bound for constacyclic codes] 

\begin{lemma}\label{lem-BCHbound}\cite[Lemma 4]{KS90}
Let $\gcd(n, q)=1$ and $\beta$ be a primitive $rn$-th root of unity such that $\beta^n=\lambda$. Let $\mathcal{C}=\langle g(x)\rangle$ be a $\lambda$-constacyclic code of length $n$ over $\gf(q)$. If there are integers $b$ and $\delta$ with $2\leq \delta \leq n$ such that
$$g(\beta^{b})=g(\beta^{b+r})=\cdots=g(\beta^{b+r(\delta-2)})=0,$$
then $d(\C)\geq \delta$.
\end{lemma}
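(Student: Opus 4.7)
The plan is to mimic the classical BCH bound proof for cyclic codes, adapting it to the constacyclic setting by exploiting the fact that $\beta^{r}$ is a primitive $n$-th root of unity in $\gf(q^m)$. Let $c(x)=\sum_{i=0}^{n-1}c_ix^i$ be an arbitrary nonzero codeword in $\C$. Since $g(x)$ divides $c(x)$ in $\gf(q)[x]/\langle x^n-\lambda\rangle$ (and since each $\beta^{1+rj}$ is a root of $x^n-\lambda$ because $(\beta^{1+rj})^n=\beta^n\cdot\beta^{rjn}=\lambda\cdot 1=\lambda$), the hypothesis gives $c(\beta^{1+r(b+t)})=0$ for every $t=0,1,\dots,\delta-2$.

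Now suppose, for a contradiction, that $w:=\wt(c)\le \delta-1$, and let $\{i_1,i_2,\dots,i_w\}\subseteq\{0,1,\dots,n-1\}$ be the support of $c$. Taking the first $w$ of the vanishing conditions yields the homogeneous linear system
\begin{equation*}
\sum_{s=1}^{w} c_{i_s}\,\beta^{(1+r(b+t))\,i_s}=0,\qquad t=0,1,\dots,w-1,
\end{equation*}
in the unknowns $c_{i_1},\dots,c_{i_w}$. Factor out the scalars $\beta^{(1+rb)i_s}$ column-by-column to write the coefficient matrix as $V\cdot D$, where $D$ is the diagonal matrix with entries $\beta^{(1+rb)i_s}$ and $V$ is the Vandermonde matrix with $(t,s)$-entry $(\beta^{r i_s})^{t}$.

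The determinant of $D$ is nonzero because $\beta$ is a unit. For the Vandermonde part, $\det V=\prod_{1\le s<s'\le w}(\beta^{r i_{s'}}-\beta^{r i_s})$. Since $\beta$ has order $rn$ in $\gf(q^m)^*$, the element $\beta^r$ has order exactly $n$, so it is a primitive $n$-th root of unity; hence the map $i\mapsto \beta^{ri}$ is injective on $\{0,1,\dots,n-1\}$. As the $i_s$ are pairwise distinct elements of that range, every factor $\beta^{ri_{s'}}-\beta^{ri_s}$ is nonzero and $\det V\neq 0$. Consequently the only solution of the system is the zero solution, forcing $c_{i_s}=0$ for all $s$, contradicting the choice of the $i_s$ as the support. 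Therefore $\wt(c)\ge \delta$ for every nonzero $c\in\C$, and $d(\C)\ge\delta$.

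The argument is essentially routine once one observes that $\beta^r$ is a primitive $n$-th root of unity; that single fact is what lets the Vandermonde nonvanishing go through exactly as in the cyclic case. I do not foresee a genuine obstacle: the only mild subtlety is checking that the $\beta^{1+rj}$ actually lie among the roots of $x^n-\lambda$ (so that evaluating the polynomial representative $c(x)$ at these points is well-defined modulo $\langle x^n-\lambda\rangle$), and this is immediate from $\beta^n=\lambda$ together with $\beta^{rn}=1$.
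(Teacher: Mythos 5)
Your proof is correct: the reduction to a Vandermonde system in the distinct values $\beta^{ri_s}$, justified by the observation that $\beta^r$ has order exactly $n$, is the standard argument for the constacyclic BCH bound and matches the proof in the cited reference \cite[Lemma 4]{KS90}. The paper itself offers no proof of this lemma (it is quoted as a known result), so there is nothing further to compare; your verification that each $\beta^{1+rj}$ is a root of $x^n-\lambda$, which makes the evaluation of codeword representatives well defined, is exactly the point that needs checking and you have handled it correctly.
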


\subsection{The trace representation of constacyclic codes}
 
 The trace representation of $\lambda$-constacyclic codes is documented below (see \cite{DY10}, \cite{SR2018}, \cite[Theorem 1]{SZW20}). 

\begin{lemma}\label{lem-01} 
Let $\gcd(n, q)=1$ and $\beta$ be a primitive $rn$-th root of unity such that $\beta^n=\lambda$. Let $\C$ be the $\lambda$-constacyclic code of length $n$ over $\gf(q)$ with check polynomial $\prod_{j=1}^s \m_{\beta^{i_j}}(x)$, where $C_{i_{a}}^{(q, rn)} \cap C_{i_{b}}^{(q, r n)}=\emptyset$ for $a\neq b$. Then $\C$ has the trace representation 
$$\left \{ \left(\sum_{j=1}^s{\rm Tr}_{q^{m_j}/q}(a_j\beta^{-ti_j})\right)_{t=0}^{n-1} \,:\,a_j\in {\rm GF}(q^{m_j}),\,1\leq j\leq s \right \},$$ 
where $m_j=|C_{i_j}^{(q, rn)}|$ and $\tr_{q^m/q}$ denotes the trace function from $\gf(q^m)$ to $\gf(q)$.
\end{lemma}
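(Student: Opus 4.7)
The plan is to exhibit the stated set as the image of a $\gf(q)$-linear map
\[
\varphi : \prod_{j=1}^{s}\gf(q^{m_j})\longrightarrow\gf(q)^n,\qquad \varphi(a_1,\ldots,a_s)=(c_0,\ldots,c_{n-1})\ \text{with}\ c_t=\sum_{j=1}^{s}\Tr_{q^{m_j}/q}(a_j\beta^{-ti_j}),
\]
prove that $\varphi$ lands in $\C$ and is injective, and then invoke the dimension equality $\dim_{\gf(q)}\prod_{j}\gf(q^{m_j}) = \sum_{j}m_j = \deg h(x) = \dim_{\gf(q)}\C$ to conclude that $\varphi$ is a bijection onto $\C$. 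The $\gf(q)$-linearity of $\varphi$ is immediate from $\gf(q)$-linearity of the trace, so the whole content sits in the ``image-in-$\C$'' and ``injective'' steps.

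To see $\varphi(a_1,\ldots,a_s)\in\C$, I would use the defining-set criterion: an element $c(x)\in\gf(q)[x]/\langle x^n-\lambda\rangle$ lies in $\langle g(x)\rangle$ iff $c(\beta^l)=0$ at every $l$ for which $\beta^l$ is a root of $g(x)=(x^n-\lambda)/h(x)$, that is, for every $l\in\bigcup_{i\in\Gamma_{(q,rn,r)}^{(1)}\setminus\{i_1,\ldots,i_s\}}C_i^{(q,rn)}$. Substituting the trace formula, expanding $\Tr_{q^{m_j}/q}(a_j\beta^{-ti_j})=\sum_{k=0}^{m_j-1}a_j^{q^k}\beta^{-ti_jq^k}$, and interchanging summations reduces the verification to evaluating $\sum_{t=0}^{n-1}\beta^{t(l-i_jq^k)}$ for each $(j,k)$. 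The key arithmetic input is that $\ord(\lambda)=r$ forces $r\mid q-1$, so every $q$-cyclotomic coset of an element $\equiv 1\pmod r$ stays in that residue class; hence $l-i_jq^k\equiv 0\pmod r$. Writing $l-i_jq^k=ru'$ and using that $\beta^r$ is a primitive $n$-th root of unity, the inner sum equals $n$ when $l\equiv i_jq^k\pmod{rn}$ and $0$ otherwise. For the $l$'s under consideration no such $(j,k)$ exists, so $c(\beta^l)=0$. Running the same computation at $l=i_{j_0}$ yields $c(\beta^{i_{j_0}})=n\,a_{j_0}$; since $\gcd(q,n)=1$ makes $n$ a unit in $\gf(q)$, any $(a_1,\ldots,a_s)$ in the kernel of $\varphi$ must be zero, giving injectivity.

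The main obstacle is the exponential-sum evaluation. Since $\beta$ is a primitive $(rn)$-th, rather than $n$-th, root of unity, orthogonality of characters does not apply directly to $\sum_{t=0}^{n-1}\beta^{tu}$; one must first use $r\mid q-1$ to force $u$ into $r\mathbb{Z}$, peel off the factor of $r$, and only then apply the genuine $n$-th-root orthogonality. Once this reduction is clean, the trace expansion, the interchange of sums, the injectivity check, and the dimension count are all routine, and the isomorphism $\varphi\cong\C$ follows at once.
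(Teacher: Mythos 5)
Your proof is correct. Note, however, that the paper does not prove Lemma \ref{lem-01} at all: it is quoted as a known result with citations to the literature (the trace representation of constacyclic codes), so there is no in-paper argument to compare against. Your write-up is the standard self-contained derivation: realize the candidate set as the image of the $\gf(q)$-linear map $\varphi$, verify membership in $\C=\langle g(x)\rangle$ by evaluating at the roots of $g$, get injectivity by evaluating at $\beta^{i_{j_0}}$, and finish with the dimension count $\sum_j m_j=\deg h=\dim\C$. The two delicate points are handled correctly: the congruence $l-i_jq^k\equiv 0\pmod r$ (forced by $r\mid q-1$, so that $q$-cyclotomic cosets preserve residues mod $r$) reduces the geometric sum to genuine orthogonality for the primitive $n$-th root $\beta^r$, and the invertibility of $n$ in $\gf(q)$ (from $\gcd(q,n)=1$) gives injectivity. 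One cosmetic imprecision: the lemma only assumes the $i_j$ lie in pairwise distinct cosets, not that they are coset leaders, so the defining set of $g$ should be written as $\bigcup C_i$ over the leaders of $\Gamma_{(q,rn,r)}^{(1)}$ whose cosets are disjoint from $C_{i_1}^{(q,rn)},\ldots,C_{i_s}^{(q,rn)}$, rather than $\Gamma_{(q,rn,r)}^{(1)}\setminus\{i_1,\ldots,i_s\}$; this does not affect the argument, since the disjointness of cosets is all you actually use.
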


Lemma \ref{lem-01} is very useful in determining the parameters and weight distributions of some constacyclic codes. We will make use of this lemma later in this paper.

\subsection{The sphere packing bound and Pless power moments}

We recall the sphere packing bound for linear codes in the following lemma.

\begin{lemma}
{\rm (Sphere Packing Bound \cite{HP03})} Let $\C$ be an $[n,k,d]$ code over $\gf(q)$. Then 
$$\sum_{i=0}^{\lfloor (d-1)/2 \rfloor}\binom{n}{i}(q-1)^i\leq q^{n-k},$$
where $\lfloor \cdot \rfloor$ is the floor function.	
\end{lemma}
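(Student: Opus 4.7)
The plan is to carry out the standard Hamming ball packing argument. Set $t=\lfloor (d-1)/2\rfloor$; the crucial observation is that $2t+1\le d$, so for any two distinct codewords $\bc_1,\bc_2\in\C$ the closed Hamming balls $B(\bc_1,t)$ and $B(\bc_2,t)$ in $\gf(q)^n$ are disjoint. Indeed, if some $\bv$ lay in both balls, the triangle inequality for the Hamming metric would force $\hd(\bc_1,\bc_2)\le 2t\le d-1$, contradicting the minimum distance assumption.

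Next I would count the size of one such ball. A vector at Hamming distance exactly $i$ from a fixed codeword $\bc$ is obtained by choosing an $i$-subset of the $n$ coordinates at which the vector differs from $\bc$ (there are $\binom{n}{i}$ such choices) and then picking, independently in each of those $i$ positions, any of the $q-1$ field elements distinct from the corresponding coordinate of $\bc$. Hence
\[
|B(\bc,t)|=\sum_{i=0}^{t}\binom{n}{i}(q-1)^i,
\]
and this count does not depend on $\bc$.

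Finally I would combine the two steps: since $|\C|=q^k$ and the balls are pairwise disjoint subsets of $\gf(q)^n$,
\[
q^k\sum_{i=0}^{\lfloor (d-1)/2\rfloor}\binom{n}{i}(q-1)^i
=\sum_{\bc\in\C}|B(\bc,t)|
=\Bigl|\bigcup_{\bc\in\C}B(\bc,t)\Bigr|
\le q^n,
\]
and dividing both sides by $q^k$ yields the desired inequality. There is no real obstacle here; the only point that needs a moment of care is the disjointness of the balls, which hinges on the floor inequality $2\lfloor (d-1)/2\rfloor\le d-1$, valid for every positive integer $d$. The rest is pure bookkeeping.
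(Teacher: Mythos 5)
Your proof is correct and is exactly the standard Hamming-ball packing argument; the paper itself does not prove this lemma but simply quotes it from Huffman and Pless, where the same disjoint-balls counting argument is given. Nothing to add.
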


\subsection{The maximal length of MDS codes, AMDS codes and NMDS codes}

Let $\C$ be an $[n, k, d]$ code over $\gf(q)$. The parameters of $\C$ satisfy the Singleton bound: $d\leq n-k+1$. If $d=n-k+1$, the code $\C$ is called {\it maximum distance separable} (MDS for short). If $d=n-k$, the code $\C$ is called {\it almost maximum distance separable} (AMDS for short). The code $\C$ is said to be {\it near maximum distance separable} (NMDS for short) 
if both $\C$ and $\C^\perp$ are AMDS. The following results will be needed later in this paper.

\begin{lemma} \label{lem:9}
{\rm \cite{Roth}} Let $\C$ be an $[n, k]$ MDS code over $\gf(q)$. If $q$ is odd and $2\leq k<(\sqrt{q}+13)/4$, then $n\leq q+1$.
\end{lemma}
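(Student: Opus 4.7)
The plan is to pass to the projective-geometry formulation. An $[n,k]$ MDS code over $\gf(q)$ corresponds, via its generator matrix, to an $n$-arc in $\PG(k-1,q)$, i.e., a set of $n$ points no $k$ of which lie in a hyperplane. Let $m(k-1,q)$ denote the maximum size of such an arc; what we must show is that $m(k-1,q) \le q+1$ for all $k$ in the stated range. The case $k=2$ is immediate because $|\PG(1,q)|=q+1$, and the case $k=q-1$ (and the more general duality step) is reduced to a small-dimensional case by the standard fact that the dual of an $[n,k]$ MDS code is an $[n,n-k]$ MDS code, so $m(k-1,q)=m(n-k-1,q)$ whenever both codes exist. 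This reduction means it suffices to control $m(r,q)$ for small $r$.

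The first substantive step is Segre's theorem: for odd $q$, every $(q+1)$-arc in $\PG(2,q)$ is a conic, and hence $m(2,q)=q+1$. So the $k=3$ case is the Segre bound. To propagate this to larger $k$, I would use the derivation (or projection) principle: choosing any point $P$ of an $n$-arc $\mathcal{A}\subset\PG(k-1,q)$ and projecting the remaining $n-1$ points from $P$ produces an $(n-1)$-arc in $\PG(k-2,q)$. Iterating yields $m(k-1,q)\le m(k-2,q)+1$, so Segre's bound together with this gives $m(k-1,q)\le q+k-2$. This crude bound is already $O(\sqrt{q})$ away from $q+1$, which is exactly the source of the hypothesis $k<(\sqrt{q}+13)/4$.

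The remaining work — improving $q+k-2$ all the way down to $q+1$ under the hypothesis $k<(\sqrt{q}+13)/4$ — is where the real content lies. The standard route (originating with Segre and refined by Thas, Voloch, Hirschfeld--Korchm\'aros, and Roth) is to associate to an arc that fails to lie on the standard extension a plane algebraic curve whose $\gf(q)$-rational points are constrained by the arc structure, and then invoke the Hasse--Weil bound $|\#C(\gf(q))-(q+1)|\le 2g\sqrt{q}$. The degree and geometric genus of this curve are controlled by $k$, and a careful count forces a contradiction precisely when $4k-13<\sqrt{q}$, i.e., $k<(\sqrt{q}+13)/4$.

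I expect the main obstacle to be this last step: calibrating the Hasse--Weil count so that the constants $4$ and $13$ appear in the right places. The projection and duality reductions are clean and essentially formal, and Segre's theorem is a black box, but extracting the sharp numerical bound requires a precise analysis of the auxiliary curve (its degree, its singularities, and a lower bound on the number of $\gf(q)$-points forced by the arc). Since the lemma is quoted directly from Roth's paper, I would cite that reference for the detailed curve-theoretic estimate rather than reproduce it here.
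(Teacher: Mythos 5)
The paper offers no proof of this lemma at all: it is quoted verbatim from the reference \cite{Roth} and used as an imported black box (its only role is in the proof of Theorem \ref{thm15}, to rule out a $[q+2,3,q]$ MDS code for odd $q$). There is therefore no in-paper argument to compare yours against. Your outline --- passing to $n$-arcs in $\PG(k-1,q)$, invoking Segre's theorem for odd $q$, reducing via projection and duality, and finishing with a Hasse--Weil estimate on an auxiliary curve whose constants produce the threshold $4k-13<\sqrt{q}$ --- is a fair description of the route taken in the literature, and I see no incorrect step in it. But since you explicitly defer the decisive quantitative calibration to Roth's paper, your submission is, like the paper's, ultimately a citation rather than a self-contained proof; given that the lemma is stated here as a quoted external result, that is the appropriate level of detail, and nothing more is required.
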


\begin{lemma}\label{AMDS} 
{\rm (\cite{Bose1952,Mario})} Let $q>2$ be a prime power. Let $\C$ be an $[n, n-4]$ AMDS code over $\gf(q)$. Then $n\leq q^2+1$.
\end{lemma}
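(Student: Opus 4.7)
The plan is to reinterpret the AMDS condition as a combinatorial configuration in projective geometry and to deduce the bound from the classical theory of caps in $\PG(3,q)$.

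First, I would set up the reduction. Let $H$ be a $4\times n$ parity-check matrix of $\C$ of rank $4$. Since $\C$ is AMDS, its minimum distance is $d = n - (n-4) = 4$, so any $d-1 = 3$ columns of $H$ are linearly independent; in particular no two columns are proportional, so the columns define $n$ distinct points in $\PG(3,q)$. The linear-independence of any three columns translates to: no three of these $n$ points are collinear, i.e., the point set is an \emph{$n$-cap} in $\PG(3,q)$. The lemma then reduces to showing that an $n$-cap in $\PG(3,q)$ satisfies $n\leq q^2+1$ for $q>2$.

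To establish this cap bound I would fix a point $P$ of the cap $\cS$ and apply a double-counting argument. There are $q^2+q+1$ planes of $\PG(3,q)$ through $P$, and each plane $\Pi$ through $P$ meets $\cS$ in a set that is itself a cap in $\Pi\cong\PG(2,q)$, i.e., an arc. By the classical arc-bound, $|\Pi\cap\cS|\leq q+1$ for odd $q$ (Bose--Segre) and $|\Pi\cap\cS|\leq q+2$ for even $q$ (hyperovals). Counting incidences $(Q,\Pi)$ with $Q\in\cS\setminus\{P\}$ and $P,Q\in\Pi$: the line $PQ$ lies in exactly $q+1$ planes, producing $(n-1)(q+1)$ such incidences, while each of the $q^2+q+1$ planes through $P$ contributes at most $q$ (respectively $q+1$) such incidences. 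For odd $q$ this yields
$$(n-1)(q+1)\leq q(q^2+q+1),$$
which simplifies to $n\leq q^2+1+q/(q+1)$; integrality then gives $n\leq q^2+1$, as desired.

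For even $q$ the elementary count only produces $n\leq q^2+q+2$, which is too weak. To close this gap I would refine the argument using the fact that, in characteristic two, every arc of size $q+1$ in $\PG(2,q)$ has a \emph{nucleus} at which all $q+1$ tangents concur, so the tangent lines to $\cS$ through $P$ carry additional structure that constrains the plane-sections of $\cS$. Translating these constraints via a second round of double counting (or, equivalently, invoking the classical Bose--Qvist theorem for caps in $\PG(3,q)$) recovers $n\leq q^2+1$. The main obstacle is precisely this even-characteristic step: the odd case is a one-line counting exercise, but the even case requires the nucleus/tangent-plane machinery, and getting the sharp bound without simply quoting the classical Bose--Qvist theorem is the real work.
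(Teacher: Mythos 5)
The paper does not prove this lemma at all: it is quoted directly from the literature (Bose--Bush and De Boer), so there is no in-paper argument to compare yours against. Judged on its own, your reduction is exactly the standard one and is correct: for an $[n,n-4]$ AMDS code the minimum distance is $4$, so every $3$ columns of a $4\times n$ parity-check matrix are linearly independent, and the columns form an $n$-cap in $\PG(3,\gf(q))$; the lemma is then equivalent to the Bose--Qvist bound that a cap in $\PG(3,\gf(q))$ has at most $q^2+1$ points for $q>2$ (the hypothesis $q>2$ enters precisely here, since $\PG(3,\gf(2))$ admits $8$-caps). Your odd-$q$ counting is complete and correct: $(n-1)(q+1)\leq q(q^2+q+1)$ gives $n-1\leq q^2+q/(q+1)$, and integrality yields $n\leq q^2+1$.

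The one genuine gap is the even-characteristic case, which you acknowledge but do not close. The naive count with the hyperoval bound $q+2$ only gives $n\leq q^2+q+2$, and the phrase ``translating these constraints via a second round of double counting'' is not yet an argument --- the actual Qvist proof requires showing that if $n>q^2+1$ then some point of the cap lies on no tangent line, forcing every plane section through that point to be a hyperoval, and then deriving a contradiction from the resulting incidence structure; none of that is supplied. If you are permitted to quote Qvist's theorem the proof is complete (and no worse than the paper, which quotes the whole lemma), but as a self-contained proof the even case is missing its key step.
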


\section{Some general theory about the extended codes $\overline{\C}(\bu)$}\label{Sec:3}

\begin{theorem}\label{thm-general111} 
Let $\C \subset \gf(q)^n$ be a linear code and $\bu \in \gf(q)^n \setminus \{\bzero\}$.   
Then the following statements are equivalent. 
\begin{enumerate}
\item The extended code $\overline{\C}(\bu)$ is trivial. 
\item $\bu$ belongs to $\C^\perp$.  
\item  The vector $(0, \ldots, 0, 1) \in \overline{\C}(\bu)^\perp$. 
\end{enumerate} 
\end{theorem}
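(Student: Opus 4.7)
The plan is to prove the three statements equivalent via the cycle $(1) \Rightarrow (2) \Rightarrow (3) \Rightarrow (1)$, or equally well by showing $(1) \Leftrightarrow (2)$ and $(1) \Leftrightarrow (3)$ directly; the latter route feels cleaner since each equivalence reduces to unfolding a single definition. All three statements are really just three ways of saying "the appended coordinate vanishes on every codeword," so the argument should be short and essentially tautological.

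First I would spell out the hinge computation. By definition \eqref{eqn-extendedcodegeneral}, for any $\bc = (c_1,\ldots,c_n) \in \C$ the extended codeword is $(c_1,\ldots,c_n,c_{n+1})$ with $c_{n+1} = \sum_{i=1}^n u_i c_i = \langle \bu,\bc\rangle$. Recall from the paper's own definition that $\overline{\C}(\bu)$ is declared \emph{trivial} precisely when this extended coordinate is $0$ for every codeword of $\C$. Thus statement (1) is the assertion that $\langle \bu,\bc\rangle = 0$ for all $\bc \in \C$, which is literally the definition of $\bu \in \C^\perp$. This gives $(1) \Leftrightarrow (2)$ with no work beyond unwinding definitions.

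Next I would address $(1) \Leftrightarrow (3)$. The vector $\be_{n+1} := (0,\ldots,0,1) \in \gf(q)^{n+1}$ satisfies $\langle \be_{n+1}, (c_1,\ldots,c_n,c_{n+1})\rangle = c_{n+1}$. Hence $\be_{n+1} \in \overline{\C}(\bu)^\perp$ iff $c_{n+1} = 0$ for every extended codeword, which is exactly statement (1). So $(1) \Leftrightarrow (3)$ also follows immediately.

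Combining the two equivalences yields $(1) \Leftrightarrow (2) \Leftrightarrow (3)$, completing the proof. There is no real obstacle here; the only thing to be careful about is citing the definition of a trivial extended code from the paragraph preceding the theorem (so the reader sees that (1) is being used in the sense defined there), and making sure the inner product used for the dual code is the standard Euclidean inner product on $\gf(q)^{n+1}$, consistent with the definition of $\C^\perp$ given in the introduction.
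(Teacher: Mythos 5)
Your proof is correct and takes essentially the same route as the paper, which simply states that the result "follows directly from the definition of the trivially extended codes"; you have merely written out the definitional unwinding that the authors leave implicit. No issues.
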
 

\begin{proof}
The desired conclusion follows directly from the definition of the trivially extended codes.  
\end{proof} 

The following theorem shows the importance of the extending technique. 

\begin{theorem}\label{thm-general112} 
Every $[n, k, d]$ code $\C \subset \gf(q)^n$ with $d>1$ is permutation-equivalent to 
the extended code $\overline{\C'}(\bu)$ of a linear code $\C' \subset \gf(q)^{n-1}$ 
for some $\bu \in \gf(q)^{n-1}$.  
\end{theorem}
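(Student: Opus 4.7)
The plan is to prove the slightly stronger claim that $\C$ itself equals an extended code $\overline{\C'}(\bu)$, where $\C'$ is the puncturing of $\C$ at its last coordinate; permutation equivalence is then immediate with the identity permutation. The key observation is that the hypothesis $d > 1$ prevents $\C$ from containing any codeword of Hamming weight exactly $1$, which will make puncturing at any single coordinate injective on $\C$.

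First I would let $\pi : \gf(q)^n \to \gf(q)^{n-1}$ denote the projection onto the first $n-1$ coordinates and set $\C' = \pi(\C)$. The restriction $\pi|_\C$ is $\gf(q)$-linear and surjective onto $\C'$ by definition, and I would verify injectivity as follows: if $\pi(\bc) = \pi(\bc')$ for some $\bc, \bc' \in \C$, then $\bc - \bc' \in \C$ has support contained in $\{n\}$ and hence weight at most $1$; since $d(\C) > 1$, this forces $\bc = \bc'$. Therefore $\pi|_\C$ is a $\gf(q)$-linear isomorphism from $\C$ onto $\C'$, and in particular $\dim(\C') = k$. Consequently, for each codeword $(c_1, \ldots, c_n) \in \C$, the last coordinate $c_n$ is a $\gf(q)$-linear function of $(c_1, \ldots, c_{n-1}) \in \C'$. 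Extending this function arbitrarily to a linear functional on the ambient space $\gf(q)^{n-1}$ and representing it in the standard basis produces a vector $\bu = (u_1, \ldots, u_{n-1}) \in \gf(q)^{n-1}$ such that every codeword of $\C$ satisfies $c_n = \sum_{i=1}^{n-1} u_i c_i$. By the defining formula~\eqref{eqn-extendedcodegeneral}, this says exactly $\C = \overline{\C'}(\bu)$, which trivially implies permutation equivalence.

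There is no substantive obstacle in this argument; the proof is elementary linear algebra once one observes that $d > 1$ rules out the only degenerate case, namely a puncturing that identifies two distinct codewords. I would only remark in passing that the vector $\bu$ produced this way may happen to be $\bzero$ if the last coordinate of $\C$ is identically zero; this is permitted by the statement of the theorem. If one prefers instead to have $\bu \neq \bzero$ in line with the convention adopted just before~\eqref{eqn-extendedcodegeneral}, it suffices to first permute the coordinates of $\C$ so that the last coordinate is not identically zero, which is always possible when $\C \neq \{\bzero\}$.
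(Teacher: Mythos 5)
Your proof is correct, and it takes a genuinely different route from the paper's. The paper begins with an arbitrary nonzero codeword $\bv\in\C^\perp$ (available since $d>1$ forces $\C\neq\gf(q)^n$), rescales it so that some coordinate equals $1$, applies a transposition to move that coordinate into position $n$, and reads off $\bu$ as the negative of the first $n-1$ coordinates of the permuted dual word, concluding $\C=P(\overline{\C'}(\bu))$ for a transposition $P$. You instead puncture at the last coordinate directly, use $d>1$ to show the puncturing map is injective on $\C$, and extend the resulting ``last coordinate'' functional from $\C'$ to all of $\gf(q)^{n-1}$. The two arguments are dual to each other: your $\bu$ is exactly a vector with $(\bu,-1)\in\C^\perp$, i.e.\ a dual codeword whose last coordinate is nonzero, and such a word must exist when $d>1$ because otherwise $\be_n\in(\C^\perp)^\perp=\C$ would be a weight-one codeword. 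What your route buys is a strictly stronger conclusion: the permutation can always be taken to be the identity, so every code with $d>1$ literally equals an extended code of its own puncturing at the last position; this subsumes the paper's corollary on cyclic codes as an immediate special case and would let the transpositions be dropped from Corollary~\ref{thm-general113}. What the paper's route buys is a concrete recipe for $\bu$ (normalize, permute, negate, puncture a dual codeword) in place of an appeal to extension of linear functionals, though your $\bu$ can be made just as explicit by completing a basis of $\C'$. Your closing remark on the degenerate case $\bu=\bzero$ is apt; the paper's construction admits the same degeneracy when $\C^\perp$ contains a weight-one word, and the theorem statement does not exclude it.
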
 

\begin{proof}
Let $\C$ be an $[n, k, d]$ code over $\gf(q)$ with $d>1$. Then $\C \neq \gf(q)^n$ and $\C^\perp$ is not the 
zero code. Let $\bv=(v_1, v_2, \ldots, v_n)$ be any nonzero codeword in $\C^\perp$. Then $\wt(\bv) \geq 1$. 
Then there is an integer $i$ with $1 \leq i \leq n$ and $v_i \neq 0$. 
Define $\bv'=v_i^{-1} \bv$. Then the $i$-th coordinate $v'_i$ of $\bv'$ is $1$ and $\bv' \in \C^\perp$. 

If $i=n$, let $P$ be the identity permutation on the coordinate set $\{1,2, \cdots, n \}$. If $i \neq n$, let $P$ be the transposition $(i, n)$ on the coordinate set $\{1,2, \cdots, n\}$, which is a special permutation of $\{1,2, \cdots, n\}$. For each $(c_{1}, c_{2}, \ldots, c_n) \in \gf(q)^{n}$, define 
$$ 
P((c_1, c_2, \ldots, c_n))=(c_{P(1)}, c_{P(2)}, \ldots, c_{P(n)}). 
$$
Then $P$ is a permutation of $\gf(q)^n$. 
Define 
$$ 
\bu=-P(\bv')^{\{n\}} \in \gf(q)^{n-1}, 
$$ 
where $P(\bv')^{\{n\}}$  is the vector obtained by puncturing the last coordinate in $P(\bv')$. 
Let   
$$
\C'=P(\C)^{\{n\}}, 
$$   
which is the linear code obtained by puncturing the linear code $P(\C)$ on the $n$-th coordinate. Note that $d(P(\C))=d(\C)=d>1$. According to \cite[Theorem 1.5.1]{HP03}, we have 
$$ 
\dim(\C')=\dim(P(\C))=\dim(\C)=k.  
$$ 
It is well known that 
$$
d(\C')=d \mbox { or } d-1. 
$$
It is easily seen that 
$$ 
P(\C)=\overline{\C'}(\bu). 
$$ 
Consequently, 
$$
\C=P(\overline{\C'}(\bu)).
$$ 
This completes the proof. 
\end{proof}

The following is a corollary of Theorem  \ref{thm-general111} and its proof.  It says that almost all cyclic codes are extended linear codes. 

\begin{corollary} 
Every $[n, k, d]$ cyclic code $\C \subset \gf(q)^n$ with $d>1$ is  
the extended code $\overline{\C^{\{n\}}}(\bu)$ for some $\bu \in \gf(q)^{n-1}$, 
where $\C^{\{n\}}$ denotes the punctured 
code of $\C$ at the last coordinate position and $\bu$ is obtained by puncturing the last coordinate in a nonzero codeword of $\C^\perp$.
\end{corollary}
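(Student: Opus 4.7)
The plan is to specialize the proof of Theorem \ref{thm-general112} to the cyclic setting, exploiting the fact that the dual of a cyclic code is itself cyclic. In Theorem \ref{thm-general112}, a transposition $P=(i,n)$ was introduced to move a nonzero coordinate of a dual codeword to the last position. For cyclic codes this permutation is unnecessary: because $\C^\perp$ is cyclic, the orbit of any nonzero dual codeword under cyclic shifts already contains a vector with nonzero last coordinate.

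Concretely, I would proceed as follows. First, since $d(\C)>1$, the code $\C$ contains no weight-one vector, so $\C\neq \gf(q)^n$ and hence $\C^\perp\neq\{\bzero\}$. Moreover $\C^\perp$ is cyclic. Pick any nonzero $\bw\in \C^\perp$ and choose an index $i$ with $w_i\neq 0$. Applying the cyclic right-shift $n-i$ times to $\bw$ yields a vector $\bv=(v_1,\ldots,v_n)\in\C^\perp$ with $v_n\neq 0$. Normalize by setting $\bv'=v_n^{-1}\bv\in\C^\perp$, so the last coordinate of $\bv'$ equals $1$, and define
\[
\bu \;=\; -(\bv')^{\{n\}} \;\in\; \gf(q)^{n-1},
\]
which is the vector obtained by puncturing the last coordinate of $-\bv'$ (equivalently, of a normalized cyclic shift of the original nonzero dual codeword).

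Next I would verify the identity $\C=\overline{\C^{\{n\}}}(\bu)$. For any $\bc=(c_1,\ldots,c_n)\in\C$, the orthogonality $\langle \bv',\bc\rangle=0$ together with $v'_n=1$ gives
\[
c_n \;=\; -\sum_{i=1}^{n-1} v'_i c_i \;=\; \sum_{i=1}^{n-1} u_i c_i,
\]
so $(c_1,\ldots,c_n)$ has exactly the form required of an element of $\overline{\C^{\{n\}}}(\bu)$, yielding $\C\subseteq \overline{\C^{\{n\}}}(\bu)$. For the reverse inclusion I would appeal to a dimension count: since $d(\C)\geq 2$, no nonzero codeword of $\C$ is supported only on the last coordinate, so puncturing does not collapse any cosets and $\dim(\C^{\{n\}})=\dim(\C)=k$ by \cite[Theorem 1.5.1]{HP03}. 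Consequently $\dim(\overline{\C^{\{n\}}}(\bu))=k=\dim(\C)$, which upgrades the inclusion to equality.

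The main (and only) obstacle is conceptual rather than technical, namely recognizing that the cyclic symmetry of $\C^\perp$ makes the explicit transposition used in Theorem \ref{thm-general112} redundant; once this is noticed, the rest of the argument is essentially the inner-product computation already carried out in the proof of that theorem.
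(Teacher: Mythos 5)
Your proposal is correct and follows essentially the same route the paper intends: it specializes the proof of Theorem \ref{thm-general112} by observing that cyclicity of $\C^\perp$ lets one choose a dual codeword with nonzero last coordinate, so the transposition $P$ becomes the identity, and then the same orthogonality computation plus the dimension count from \cite[Theorem 1.5.1]{HP03} gives $\C=\overline{\C^{\{n\}}}(\bu)$. The only cosmetic remark is that the corollary's phrasing leaves the normalization and negation of the dual codeword implicit, and you have correctly supplied them.
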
 

The following is another corollary of Theorem  \ref{thm-general112} and its proof.  

\begin{corollary}\label{thm-general113} 
Let $\C$ be an $[n, k, d]$ code over $\gf(q)$ with $d>1$. Then 
there are a positive integer $t$, a sequence $(P_1, P_2,\ldots, P_t)$  
of permutations, a sequence $(\bu_1, \bu_2, \ldots, \bu_t) \in \gf(q)^{n-t} \times \gf(q)^{n-t+1} \times \cdots \times \gf(q)^{n-1}$ 
of vectors,  
 and an $[n-t, k, 1]$ code $\C'$ over $\gf(q)^n$ such that 
$$
\C=P_t(\overline{P_{t-1}(\overline{ \cdots P_1(\overline{\C'}(\bu_1)) \cdots}(\bu_{t-1})) }(\bu_{t})), 
$$
where each $P_i$ is either the identity permutation or a transposition of the set $\{1,2, \cdots, n-t+i\}$. 
\end{corollary}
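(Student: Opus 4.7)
The plan is to iterate Theorem \ref{thm-general112}. A single application of that theorem writes any $[n,k,d]$ code with $d>1$ as the image under a permutation $P$ (either the identity or a single transposition $(i,n)$) of an extended code $\overline{\C'}(\bu)$, where $\C'$ has length $n-1$, dimension $k$, and minimum distance either $d$ or $d-1$. I would apply this to $\C$, then to the resulting inner code $\C'$, and so on, stopping as soon as the current inner code has minimum distance $1$. Each pass shortens the length by one while preserving the dimension, so the Singleton bound $d(\C^{(j)})\leq n-j-k+1$ forces the process to terminate after at most $n-k$ steps.

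Concretely, I would set $\C^{(0)}=\C$ and, as long as $d(\C^{(j)})>1$, invoke Theorem \ref{thm-general112} to write
\[
\C^{(j)}=Q_j\bigl(\overline{\C^{(j+1)}}(\bv_{j+1})\bigr),
\]
where $\C^{(j+1)}$ is an $[n-j-1,k,d']$ code with $d'\in\{d(\C^{(j)}),\,d(\C^{(j)})-1\}$, the vector $\bv_{j+1}$ lies in $\gf(q)^{n-j-1}$, and $Q_j$ is either the identity permutation or a transposition of $\{1,\ldots,n-j\}$, as produced by the proof of Theorem \ref{thm-general112}. Let $t$ be the least index with $d(\C^{(t)})=1$; then $t\leq n-k$ by the termination argument above, and setting $\C':=\C^{(t)}$ yields an $[n-t,k,1]$ code.

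Unfolding the chain of equalities from the innermost code outward gives
\[
\C=Q_0\Bigl(\overline{Q_1\bigl(\overline{\cdots Q_{t-1}(\overline{\C'}(\bv_t))\cdots}(\bv_2)\bigr)}(\bv_1)\Bigr),
\]
and relabeling via $P_i:=Q_{t-i}$ and $\bu_i:=\bv_{t-i+1}$ for $1\leq i\leq t$ reverses the indexing to exactly match the form in the statement. Under this relabeling, $P_i$ acts on $\{1,\ldots,n-t+i\}$ (since $Q_{t-i}$ acts on $\{1,\ldots,n-(t-i)\}$) and $\bu_i\in\gf(q)^{n-t+i-1}$ (since $\bv_{t-i+1}$ lies in $\gf(q)^{n-(t-i+1)}$), as required. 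The only conceptual step is the termination via the Singleton bound, and the main obstacle is the purely combinatorial bookkeeping of the index shift between the natural outer-to-inner recursion $(Q_j,\bv_{j+1})$ and the inner-to-outer indexing $(P_i,\bu_i)$ used in the statement; no new ideas beyond Theorem \ref{thm-general112} are needed.
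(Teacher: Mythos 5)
Your proposal is correct and follows exactly the route the paper intends: the paper offers no explicit proof, presenting the statement as an immediate consequence of iterating Theorem \ref{thm-general112} until the inner code reaches minimum distance $1$, which is precisely your recursion with the Singleton-bound termination and the index reversal worked out. Nothing is missing.
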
 

Corollary \ref{thm-general113} says that every $[n, k, d]$ code over $\gf(q)$ with $d>1$ can be 
obtained by performing a sequence of extensions of a special code $\C'$ with minimum distance $1$ plus some necessary coordinate transpositions in the middle of the extensions.   

\section{Some basic results of the extended code $\overline{\C}(\bu)$}\label{Sec:4} 

\subsection{A connection among some extended codes $\overline{\C}(\bu)$ }  

Let $\C$ be a given $[n, k, d]$ code over $\gf(q)$ and let $\Pi(\C)=\gf(q)^n \setminus \C^\perp$. Suppose that 
$\bu \in \Pi(\C)$ has weight $n$. Define $P_\bu$ to be the diagonal matrix with $-\bu$ as the diagonal vector and $ \C'=\C P_\bu$. Then $\C'$ and $\C$ are scalar-equivalent and have thus the same parameters and weight distribution. By definition, 
\begin{eqnarray}
\overline{\C'}(-\bone)=\overline{\C}(\bu). 
\end{eqnarray} 
This means that the nonstandardly extended code  $\overline{\C}(\bu)$ of $\C$ is the same as the standardly extended code $\overline{\C'}(-\bone)$ of the related code $\C'$ defined above. As will be seen later, a nonstandardly extended code may be very different from the standardly extended code of a linear code. On the other hand, Example \ref{exam-111} and some results later show that the standardly extended codes $\overline{\C_1} (-\bone)$ and  $\overline{\C_2} (-\bone)$ could be very different even if $\C_1$ and $\C_2$ are scalar-equivalent or monomial-equivalent. 

\subsection{The algebraic structure of the extended code $\overline{\C}(\bu)$}

Let $\C$ be a given $[n, k, d]$ code over $\gf(q)$ and $\bu \in \gf(q)^n$. If $\C$ has generator matrix $\bG$ and parity check matrix $\bH$. It follows from (\ref{eqn-extendedcodegeneral}) that the generator and parity check matrices for $\overline{\C}(\bu)$ are $\begin{pmatrix}
		\bG & \bG \bu^\TT
	\end{pmatrix}$
	and 
	$$\begin{pmatrix}
  \bH	& \0^\TT\\
 \bu   & -1
\end{pmatrix},
  $$
  where $\bu^\TT$ denotes the transpose of $\bu$.

The following theorem describes a relation between the codes $\overline{\C}(\bu)^\perp$ and $\C^\perp$.  

\begin{theorem}\label{thm:1}
Let $\C\subset \gf(q)^n$ and $\bu \in \gf(q)^n$. Then 
$$\overline{\C}(\bu)^{\bot}=\left\{ (\bc-a \bu, a):~\bc\in \C^{\bot},~a\in \gf(q) \right\} $$	
and $
d(\overline{\C}(\bu)^{\bot}) \leq d(\C^{\bot})$.
\end{theorem}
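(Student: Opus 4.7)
The plan is to prove the two claims in sequence: first establish the explicit description of $\overline{\C}(\bu)^\perp$, then derive the bound on its minimum distance as a direct consequence.

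For the containment $\supseteq$, I would take an arbitrary $\bc = (b_1, \dots, b_n) \in \C^\perp$ and $a \in \gf(q)$, form the candidate vector $(\bc - a\bu, a) \in \gf(q)^{n+1}$, and pair it with a generic codeword $(c_1, \dots, c_n, c_{n+1}) \in \overline{\C}(\bu)$ where by definition $c_{n+1} = \sum_{i=1}^n u_i c_i$. The inner product is
\[
\sum_{i=1}^n (b_i - a u_i) c_i + a \sum_{i=1}^n u_i c_i = \sum_{i=1}^n b_i c_i = 0,
\]
because $(c_1, \dots, c_n) \in \C$ and $\bc \in \C^\perp$. Hence the right-hand set sits inside $\overline{\C}(\bu)^\perp$.

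For the reverse containment $\subseteq$, I would use a dimension count rather than computing directly. The map
\[
\psi : \C^\perp \times \gf(q) \longrightarrow \gf(q)^{n+1}, \qquad (\bc, a) \longmapsto (\bc - a\bu, a)
\]
is clearly $\gf(q)$-linear. It is injective: if $(\bc - a\bu, a) = \bzero$, then the last coordinate forces $a = 0$, and then $\bc = \bzero$. Consequently
\[
\dim(\image \psi) = \dim(\C^\perp) + 1 = (n-k) + 1 = n + 1 - \dim(\overline{\C}(\bu)),
\]
using $\dim(\overline{\C}(\bu)) = k$ (which is immediate from the generator matrix $(\bG \mid \bG\bu^\TT)$ shown just before the theorem). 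Since $\dim(\overline{\C}(\bu)^\perp) = (n+1) - k$ as well, the containment $\image \psi \subseteq \overline{\C}(\bu)^\perp$ just proved must be an equality. This gives the stated formula for $\overline{\C}(\bu)^\perp$.

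For the inequality $d(\overline{\C}(\bu)^\perp) \leq d(\C^\perp)$, I would simply specialise the description: taking $a = 0$ and letting $\bc$ be any minimum-weight nonzero codeword of $\C^\perp$, the vector $(\bc, 0)$ is a nonzero element of $\overline{\C}(\bu)^\perp$ of Hamming weight $\wt(\bc) = d(\C^\perp)$, whence the minimum distance of $\overline{\C}(\bu)^\perp$ cannot exceed $d(\C^\perp)$.

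There is no real obstacle here; the only subtle point is to remember why equality of dimensions forces equality of subspaces, and to be careful that the parity-check-matrix form $\bigl(\begin{smallmatrix} \bH & \bzero^\TT \\ \bu & -1 \end{smallmatrix}\bigr)$ really does have full row rank (which follows from the last column, since $\bH$ has rank $n-k$ and the extra row is linearly independent of the rest because of its $-1$ in the last position). Once these bookkeeping checks are in place, both assertions drop out immediately.
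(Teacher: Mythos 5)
Your proposal is correct and follows essentially the same route as the paper: the same direct computation gives the containment $\supseteq$, and your dimension count via the injective map $\psi$ is just the paper's cardinality argument ($q\,|\C^{\bot}|=|\overline{\C}(\bu)^{\bot}|$) phrased linearly, with the injectivity made explicit. Your final step specialising to $a=0$ to get the distance bound is the obvious completion that the paper leaves implicit.
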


\begin{proof}
Let $\bu=(u_1,u_2,\ldots,u_n)\in \gf(q)^n$. Let $\bc=(c_1,c_2,\ldots,c_n)\in \C^{\bot}$ and $a\in \gf(q)$. For any $(b_1, b_2, \ldots, b_{n+1})\in \overline{\C}(\bu)$, it is easily verified that 
\begin{align*}
&\sum_{i=1}^{n}b_i(c_i-a u_i)+ab_{n+1}\\
=&~\sum_{i=1}^{n}b_i(c_i-a u_i)+a\sum_{i=1}^n u_i b_i\\
=&~\sum_{i=1}^{n}b_ic_i=0,
\end{align*}
where the last equlality follows from the fact that $(b_1,b_2,\ldots,b_n)\in \C$. Therefore, 
$$\left\{ (\bc-a \bu, a):~\bc\in \C^{\bot},~a\in \gf(q) \right\}\subseteq \overline{\C}(\bu)^{\bot}.$$ 
On the other hand, it is easily checked that 
\begin{align*}
&\left| \left \{ (\bc-a \bu, a):~\bc\in \C^{\bot},~a\in \gf(q)\right\}\right|\\
=&~q |\C^{\bot}|=|\overline{\C}(\bu)^{\bot}|.	
\end{align*}
Therefore, 
$$\left\{ (\bc-a \bu, a):~\bc\in \C^{\bot},~a\in \gf(q) \right\}=\overline{\C}(\bu)^{\bot}.$$ 
This completes the proof.
\end{proof}

The vector $\bu \in \gf(q)^n$ and the rows of a generator matrix of $\C$ span a linear subspace denoted by $\widetilde{\C}(\bu)$, which is called the {\it augmented code} of $\C$ by $\bu$. If $\bu \in \C$, then $\widetilde{\C}(\bu)=\C$. Otherwise, $\C$ is a subcode of $\widetilde{\C}(\bu)$, and $\dim(\widetilde{\C}(\bu))=\dim(\C)+1$ and $d(\widetilde{\C}(\bu))\leq d(\C)$. The augmented code $\widetilde{\C}(\bone)$ of certain linear codes has been studied in the literature and will be called the {\it standardly augmented code}. The minimum distance of the dual code $\overline{\C}(\bu)^{\bot}$ has the following results.

\begin{theorem}\label{dual-distance}
 Let $\bu \in \Pi(\C)$, then 
\begin{align*}
d(\overline{\C}(\bu)^{\bot})	=\begin{cases}
	d(\widetilde{\C^\perp}(\bu))+1~&{\rm if}~d(\widetilde{\C^{\bot}}(\bu))<d(\C^{\bot}),\\
	d(\widetilde{\C^\perp}(\bu))~&{\rm if}~d(\widetilde{\C^{\bot}}(\bu))=d(\C^{\bot}).
\end{cases}
\end{align*}
\end{theorem}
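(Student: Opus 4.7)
The plan is to exploit the explicit description of $\overline{\C}(\bu)^\perp$ given by Theorem~\ref{thm:1} and split its nonzero codewords according to the value of the last coordinate $a \in \gf(q)$. Concretely, every nonzero codeword of $\overline{\C}(\bu)^\perp$ has the form $(\bc - a\bu, a)$ for some $\bc \in \C^\perp$ and $a \in \gf(q)$, and these representations are unique because $\bu \notin \C^\perp$ (so the sum $\C^\perp + \gf(q)\bu$ is direct). I would therefore compute the minimum weight by separately minimizing over the two subfamilies and then taking the smaller of the two values.

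First I would handle the $a=0$ subfamily: the resulting codewords are $(\bc,0)$ with $\bc \in \C^\perp \setminus \{\bzero\}$, contributing a minimum weight of exactly $d(\C^\perp)$. For the $a \neq 0$ subfamily, the weight of $(\bc - a\bu, a)$ is $\wt(\bc - a\bu) + 1$, and as $(\bc,a)$ runs over $\C^\perp \times \gf(q)^*$ the vector $\bc - a\bu$ ranges over precisely $\widetilde{\C^\perp}(\bu) \setminus \C^\perp$ (again using $\bu \notin \C^\perp$ to rule out overlap). Setting
\[
d^* \;=\; \min\bigl\{\wt(\bd) : \bd \in \widetilde{\C^\perp}(\bu) \setminus \C^\perp\bigr\},
\]
the contribution of this subfamily is $d^* + 1$. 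Hence
\[
d(\overline{\C}(\bu)^\perp) \;=\; \min\bigl(d(\C^\perp),\; d^* + 1\bigr).
\]

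Next I would identify $d^*$ in terms of $d(\widetilde{\C^\perp}(\bu))$. Since $\widetilde{\C^\perp}(\bu)$ is partitioned into $\C^\perp$ and $\widetilde{\C^\perp}(\bu) \setminus \C^\perp$, one has $d(\widetilde{\C^\perp}(\bu)) = \min(d(\C^\perp), d^*)$. Two cases then arise. If $d(\widetilde{\C^\perp}(\bu)) < d(\C^\perp)$, then $d^* = d(\widetilde{\C^\perp}(\bu)) < d(\C^\perp)$, so $d^*+1 \leq d(\C^\perp)$ and the minimum above equals $d^*+1 = d(\widetilde{\C^\perp}(\bu)) + 1$. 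If instead $d(\widetilde{\C^\perp}(\bu)) = d(\C^\perp)$, then $d^* \geq d(\C^\perp)$, so $d^*+1 > d(\C^\perp)$ and the minimum equals $d(\C^\perp) = d(\widetilde{\C^\perp}(\bu))$. Combining the two cases yields the claimed piecewise formula.

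The only subtle point, and the one I would verify carefully, is the uniqueness of the decomposition $(\bc - a\bu, a)$ together with the identification of $\{\bc - a\bu : \bc \in \C^\perp, a \in \gf(q)^*\}$ with $\widetilde{\C^\perp}(\bu) \setminus \C^\perp$. Both rely crucially on the hypothesis $\bu \in \Pi(\C)$, i.e., $\bu \notin \C^\perp$; without it the two subfamilies would overlap and the weight counting would double-count. Once this is in place, the rest is a clean two-case minimization.
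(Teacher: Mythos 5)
Your proposal is correct and takes essentially the same route as the paper's own proof: both start from the description of $\overline{\C}(\bu)^{\perp}$ in Theorem~\ref{thm:1}, split the nonzero codewords according to whether the last coordinate $a$ is zero, and reduce the claim to a two-case comparison of $d(\C^{\perp})$ with the minimum weight contributed by the $a\neq 0$ subfamily. Your intermediate quantity $d^{*}$ and the identity $d(\widetilde{\C^{\perp}}(\bu))=\min\bigl(d(\C^{\perp}),d^{*}\bigr)$ merely repackage the paper's lower bound $\min\bigl\{d(\C^{\perp}),\,d(\widetilde{\C^{\perp}}(\bu))+1\bigr\}$ together with its explicit minimum-weight witnesses in each case.
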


\begin{proof} For any $\0 \neq \overline{\bc}\in \overline{\C}(\bu)^{\bot}$, by Theorem \ref{thm:1}, there are $\bc \in \C^{\bot}$ and $a\in \gf(q)$ such that $\overline{\bc}=(\bc-a \bu, a)$. It is easily seen that
\begin{align*}
\wt (\overline{\bc})=\begin{cases}
	\wt(\bc)~&{\rm if}~a=0,\\
	\wt(\bc-a\bu)+1~&{\rm if}~a\neq 0.
\end{cases}
\end{align*}
If $a=0$, we have $\bc\neq \0$. Then $\wt(\overline{\bc} )\geq d(\C^{\bot})$. Since $\bu \notin \C^\perp$, we deduce that $\bc-a\bu\neq \0$. Then $\wt(\overline{\bc})\geq d( \widetilde{\C^{\bot}}(\bu))+1$ for $a\neq 0$. Therefore, 
$$\wt(\overline{\bc} )\geq \min\left\{d(\C^{\bot}), d( \widetilde{\C^{\bot}}(\bu))+1 \right\}.$$ 
It follows that 
\begin{equation*}
d(\overline{\C}(\bu)^\bot)\geq \min\left\{d(\C^{\bot}), d( \widetilde{\C^{\bot}}(\bu))+1 \right\}.	
\end{equation*} 
The rest of the proof is then divided into the following two cases.
\begin{enumerate}
\item[Case 1.] Suppose $d(\widetilde{\C^{\bot}}(\bu))<d(\C^{\bot})$. Then $d(\overline{\C}(\bu)^\bot)\geq d( \widetilde{\C^{\bot}}(\bu))+1$, and there are $\bc\in \C^{\bot}$ and $a\in \gf(q)^*$ such that $\wt(\bc-a \bu)=d(\widetilde{\C^{\bot}}(\bu))$. It is easily seen that 
$$\wt((\bc-a \bu, a))=d(\widetilde{\C^{\bot}}(\bu))+1.$$
It follows that $d(\overline{\C}(\bu)^{\bot})\leq d(\widetilde{\C^{\bot}}(\bu))+1$. The desired result follows.

\item[Case 2.] Suppose $d(\widetilde{\C^{\bot}}(\bu))=d(\C^{\bot})$. Then $d(\overline{\C}(\bu)^\bot)\geq d( \C^{\bot})$, and there is $\bc\in \C^{\bot}$ such that $\wt(\bc)=d(\widetilde{\C^{\bot}}(\bu))$. It is easily seen that $\wt((\bc, 0))=d(\widetilde{\C^{\bot}}(\bu))$. It follows that 
$$d(\overline{\C}(\bu)^{\perp})\leq d(\widetilde{\C^{\perp}}(\bu)).$$ 
The desired result follows. 
\end{enumerate}
This completes the proof. 	
\end{proof}

\begin{theorem}\label{lem:2}
Let $\C\subseteq \gf(q)^n$ be a linear code with $d(\C^{\bot})\geq 3$ and let $\bu \in \Pi(\C)$. Then the following hold.
\begin{enumerate}
\item $d(\overline{\C}(\bu)^{\perp} )\geq 2$.	
\item $d(\overline{\C}(\bu)^{\perp} )=2$ if and only if there are $a\in \gf(q)^*$ and $\be_j$ such that $\bu+a \be_j\in \C^{\perp}$, where $\be_j$ is the vector of $\gf(q)^n$ such that its $j$-th coordinator is $1$ and its other coordinates are $0$.
\item Let $\overline{A}_2^{\perp}$ denote the total number of codewords with weight $2$ in $\overline{\C}(\bu)^{\bot}$, then $\overline{A}_2^{\perp}\in \{0,q-1\}$.
\end{enumerate}
\end{theorem}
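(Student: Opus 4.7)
The proof will rest entirely on the explicit description of the dual code given by Theorem \ref{thm:1}, namely that every element of $\overline{\C}(\bu)^{\bot}$ has the form $(\bc-a\bu,a)$ with $\bc\in\C^{\bot}$ and $a\in\gf(q)$. The strategy is to translate each of the three statements into a condition on the pair $(\bc,a)$ and exploit the hypothesis $d(\C^{\bot})\geq 3$ together with $\bu\notin\C^{\bot}$.

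For part (1), I would argue by contradiction. Suppose $\overline{\C}(\bu)^{\bot}$ contains a weight-$1$ codeword $(\bc-a\bu,a)$. If $a=0$ then $\bc$ itself has weight $1$, contradicting $d(\C^{\bot})\geq 3$. If $a\neq 0$ then the first $n$ coordinates must vanish, so $\bc=a\bu$, forcing $\bu=a^{-1}\bc\in\C^{\bot}$, contradicting $\bu\in\Pi(\C)$.

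For part (2), I would classify all weight-$2$ codewords $(\bc-a\bu,a)$. The argument of part (1) shows $a\neq 0$ (else $\wt(\bc)=2<3$), and that $\bc-a\bu$ has weight exactly $1$, so $\bc-a\bu=b\be_j$ for some $j$ and $b\in\gf(q)^{*}$. Rewriting this as $\bc=a\bu+b\be_j\in\C^{\bot}$ and dividing by $a$ gives $\bu+(b/a)\be_j\in\C^{\bot}$, which is exactly the asserted condition with $a'=b/a\in\gf(q)^{*}$. Conversely, if $\bu+a'\be_j\in\C^{\bot}$ for some nonzero $a'$, then taking $\bc=\bu+a'\be_j$ and $a=1$ in Theorem \ref{thm:1} produces the codeword $(a'\be_j,1)\in\overline{\C}(\bu)^{\bot}$ of weight $2$; combined with part (1) this forces $d(\overline{\C}(\bu)^{\bot})=2$.

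For part (3), the classification above shows that every weight-$2$ codeword of $\overline{\C}(\bu)^{\bot}$ has the form $(b\be_j,a)$ with $a,b\in\gf(q)^{*}$ and $\bu+(b/a)\be_j\in\C^{\bot}$. If no such weight-$2$ codeword exists then $\overline{A}_2^{\bot}=0$, so assume one does, corresponding to some fixed pair $(a'_0,j_0)$. The key uniqueness step is to show that this pair is unique. If $\bu+a'_0\be_{j_0}$ and $\bu+a'_1\be_{j_1}$ both lie in $\C^{\bot}$ then their difference $a'_0\be_{j_0}-a'_1\be_{j_1}$ lies in $\C^{\bot}$; this has weight $\leq 2$ and is nonzero unless $(a'_0,j_0)=(a'_1,j_1)$, contradicting $d(\C^{\bot})\geq 3$. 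Hence $(a'_0,j_0)$ is unique, and the weight-$2$ codewords are exactly $(aa'_0\be_{j_0},a)$ as $a$ ranges over $\gf(q)^{*}$, giving $\overline{A}_2^{\bot}=q-1$.

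The only nontrivial step is the uniqueness argument in part (3); everything else is essentially bookkeeping on the description of $\overline{\C}(\bu)^{\bot}$ from Theorem \ref{thm:1}. No macros beyond those already defined in the preamble are needed.
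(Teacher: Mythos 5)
Your proposal is correct and follows essentially the same route as the paper's proof: both use the description of $\overline{\C}(\bu)^{\bot}$ from Theorem \ref{thm:1}, split into the cases $a=0$ and $a\neq 0$, reduce part (2) to the condition $\bu+a\be_j\in\C^{\perp}$, and establish part (3) by the same uniqueness-of-$(a,j)$ argument via $d(\C^{\perp})\geq 3$. Your explicit converse in part (2) is a minor addition that the paper leaves implicit, but it does not change the argument.
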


\begin{proof}
By Theorem \ref{thm:1}, $d(\overline{\C}(\bu)^\perp )= 1$ if and only if there are $\bc\in \C^{\perp}$ and $a\in \gf(q)$ such that $\wt((\bc-a \bu, a ))=1$. Suppose on the contrary that $d(\overline{\C}(\bu)^\perp )= 1$. Then there are $\bc\in \C^{\perp}$ and $a\in \gf(q)$ such that $\wt((\bc-a \bu, a ))=1$.
\begin{itemize}
\item If $a=0$, then $\wt((\bc-a \bu, a ))=\wt(\bc)\geq 3$, a contradiction.	
\item  If $a\neq 0$, then $\wt((\bc-a \bu, a ))=1$ if and only if $\bc=a\bu$. It follows that $\bu \in \C^{\perp}$, a contradiction.
\end{itemize}
Therefore, $d(\overline{\C}(\bu)^{\perp} )\geq 2$.

By Theorem \ref{thm:1}, $d(\overline{\C}(\bu)^{\perp} )= 2$ if and only if there are $\bc\in \C^{\perp}$ and $a\in \gf(q)$ such that $$\wt((\bc-a \bu, a ))=2.$$ 
\begin{itemize}
	\item If $a=0$, then $\wt((\bc-a \bu, a ))=\wt(\bc)\geq 3$, a contradiction. 
	\item If $a\neq 0$, $\wt((\bc-a \bu, a ))=2$ if and only if  $\wt(\bc-a \bu)=1$. It follows that there are $b\in \gf(q)^*$ and $1\leq j\leq n$ such that $\bc-a \bu=b {\bf e}_j$, i.e., $a\bu +b{\bf e}_j\in \C^{\perp}$. Note that  $a\bu+b \be_j \in \C^{\perp}$ if and only if $\bu+a^{-1}b \be_j\in \C^{\perp}$. 
\end{itemize}
The second desired result follows.

By Result 2, we deduce 
$$A_2^\perp=|\{(a, b, j): a \bu+b \be_j\in \C^\perp,~a,~b\in \gf(q)^*, 1\leq j\leq n \}|.$$ 
Now suppose there are $a$ and $j$ such that $\bu+a\be_j \in \C^{\perp}$. If there are $a_1,a_2\in \gf(q)^*$ and $1\leq j_1,j_2\leq n$ such that $\bu+a_i\be_{j_i}\in \C^{\perp}$, then $a_1\be_{j_1}-a_2\be_{j_2}\in \C^{\perp}$. Note that $\wt(a_1 \be_{j_1}-a_2 \be_{j_2})\leq 2$. Since $d(\C^{\bot})\geq 3$, it follows that $a_1=a_2$ and $j_1=j_2$. Therefore, there is a unique pair $(a, j)$ such that $\bu+a\be_{j}\in \C^{\perp}$. It follows that $\overline{A}_2^{\perp}=q-1$. The third desired result follows. 
This completes the proof. 
\end{proof}

\begin{corollary}\label{CO-13}
Let $q>2$ be a prime power. Let $\C^{\bot}$ be an $[n, k, d^\perp]$ code over $\gf(q)$ and $d^{\perp}\geq 3$. Let $\bu\in \Pi(\C)$. The following hold.
\begin{enumerate}
\item If $\C^{\perp}$ is monomially-equivalent to the Hamming code, then $d(\overline{\C}(\bu)^{\perp})=2$ and $\overline{A}_2^{\perp}=q-1$.
\item Otherwise, there always exists a vector $\bu\notin \C^{\perp}$ such that $d(\overline{\C}(\bu)^{\perp})=2$ and there also exists a vector $\bu\notin \C^{\perp}$ such that $d(\overline{\C}(\bu)^{\perp})\geq 3$.  
\end{enumerate}	
\end{corollary}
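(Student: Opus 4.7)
The plan is to reduce each claim to Theorem \ref{lem:2}, which (under $d(\C^\perp)\geq 3$) says that $d(\overline{\C}(\bu)^\perp)=2$ if and only if $\bu$ lies at Hamming distance exactly $1$ from $\C^\perp$, and otherwise forces $d(\overline{\C}(\bu)^\perp)\geq 3$ via the lower bound of $2$ from its Item 1. Item 1 of the corollary and the two halves of Item 2 therefore become three covering-type statements about $\C^\perp$.

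For Item 1, monomial equivalence preserves the covering radius, and the Hamming code is a perfect $1$-error-correcting code; hence $\C^\perp$ has covering radius $1$. So every $\bu\in\Pi(\C)$ lies at Hamming distance exactly $1$ from $\C^\perp$, giving $a\in\gf(q)^*$ and a coordinate $j$ with $\bu+a\be_j\in\C^\perp$. Items 2 and 3 of Theorem \ref{lem:2} then yield $d(\overline{\C}(\bu)^\perp)=2$ and $\overline{A}_2^\perp=q-1$.

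For the first half of Item 2, I would take $\bu=\be_j$ for any coordinate $j$. Since $\wt(\be_j)=1<d(\C^\perp)$, the vector $\bu$ lies in $\Pi(\C)$, and $\bu+(-1)\be_j=\bzero\in\C^\perp$ witnesses Item 2 of Theorem \ref{lem:2}, so $d(\overline{\C}(\bu)^\perp)=2$. For the second half I would argue by sphere packing: because $d(\C^\perp)\geq 3$, the Hamming balls of radius $1$ around codewords of $\C^\perp$ are pairwise disjoint, hence $|\C^\perp|(1+n(q-1))\leq q^n$. Equality would force these balls to partition $\gf(q)^n$, making $\C^\perp$ a linear perfect $1$-error-correcting code of length $(q^m-1)/(q-1)$ and codimension $m$ for some $m\geq 1$. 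A standard argument on the columns of any parity-check matrix of $\C^\perp$, which must then form a system of representatives for the points of $\PG(m-1,q)$, identifies $\C^\perp$ with the Hamming code up to monomial equivalence, contradicting the hypothesis of Item 2. Hence the inequality is strict, so there exists $\bu\in\gf(q)^n$ outside every radius-one Hamming ball around $\C^\perp$; equivalently $\bu\notin\C^\perp$ and $\bu+a\be_j\notin\C^\perp$ for all $a\in\gf(q)^*$ and all $j$. Items 1 and 2 of Theorem \ref{lem:2} together then give $d(\overline{\C}(\bu)^\perp)\geq 3$.

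The most delicate step is the Hamming-code uniqueness invoked in the last paragraph, which I would establish via the projective-geometry description of a parity-check matrix; this is classical and needs no extra hypotheses.
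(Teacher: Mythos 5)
Your proposal is correct and follows essentially the same route as the paper's proof: both parts reduce to Theorem \ref{lem:2}, with Item 1 handled by the perfect-code (covering radius $1$) property of the Hamming code and Item 2 by strictness in the sphere packing bound together with the classification of linear perfect single-error-correcting codes. Your version is slightly more explicit in two places the paper leaves implicit --- the concrete witness $\bu=\be_j$ for the distance-$2$ case and the observation that monomial equivalence preserves perfection --- but these are refinements of the same argument, not a different one.
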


\begin{proof}
If $\C^{\perp}$ is a perfect single error-correcting code, then any $\bu\in \gf(q)^n$ must be contained in the sphere of radius $1$ about a codeword of $\C^{\perp}$. It follows from Theorem \ref{lem:2} that $d(\overline{\C}(\bu)^{\perp})=2$ and $\overline{A}_2^{\perp}=q-1$. According to \cite[Theorem 1.12.3]{HP03}, $\C^{\perp}$ is monomially-equivalent to the Hamming code. The first desired result follows.

By assumption, $d(\C^\perp)\geq 3$. If $\C^{\bot}$ is not a perfect single error-correcting code, by the sphere packing bound, we get that $1+n(q-1)<q^{n-k}$.
Therefore, there exists a vector $ \bu\not \in \C^{\perp}$ that is not contained in the sphere of radius $1$ of any codeword of $\C^{\perp}$. On the other hand, it is clear that there exists a vector $ \bu \not\in \C^{\perp}$ that is contained in the sphere of radius $1$ of a codeword of $\C^{\perp}$. The second desired result follows. 
This completes the proof. 
\end{proof}

If $\C$ is an $[n, k, d]$ code over $\gf(q)$. It is clear that $\overline{\C}(\bu)^{\bot}$ has length $n+1$ and dimension $n+1-k$. By Theorem \ref{thm:1}, $d(\overline{\C}(\bu)^\perp)\leq d(\C^{\perp})$. In the following Theorem \ref{INF-T} we will prove that for some linear codes $\C$ there exists $\bu \in  \Pi(\C)$ such that $d(\overline{\C}(\bu))=d+1$ and $d(\overline{\C}(\bu)^{\bot})=d(\C^{\bot})$. It will be also proved later for some linear codes $\C$ there exists $\bu \in \Pi(\C)$ such that $d(\overline{\C}(\bu)^\bot)=3 \leq d(\C^\perp)$ and $\overline{\C}(\bu)^\perp$ is a distance-optimal code. 

\section{The extended codes of several families of linear codes}\label{sec-5555}

\subsection{A nonstandardly extended code $\overline{\C}(\bu)$ could be very interesting, while the standardly extended code 
$\overline{\C}(-\bone)$ is trivial}\label{sec-Nov191}
 
\begin{theorem}\label{INF-T}
Let $q>2$ be a prime power.	Let $\alpha$ be a primitive element of $\gf(q)$. Let $\C$ be the cyclic code of length $q-1$ over $\gf(q)$ with generator polynomial 
$$\sum_{i=0}^{d-2}(x-\alpha^i),$$
where $2\leq d\leq q-2$. Let $\bu=(1, \alpha^{d-1}, \alpha^{2(d-1)}, \ldots, \alpha^{(q-2)(d-1)})$, then $\overline{\C}(\bu)$ is a $[q, q-d, d+1]$ MDS code over $\gf(q)$ and $\overline{\C}(\bu)^{\perp}$ is a $[q, d, q-d+1]$ MDS code over $\gf(q)$.   
\end{theorem}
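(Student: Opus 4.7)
The plan is to recognize $\C$ as a narrow-sense Reed--Solomon code whose parity-check matrix is of Vandermonde type, and then to reduce the MDS property of $\overline{\C}(\bu)$ to the nonvanishing of a short family of Vandermonde determinants.

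First, I would write down a parity-check matrix of $\C$ explicitly. Since the generator polynomial has the $d-1$ consecutive zeros $\alpha^0,\alpha^1,\ldots,\alpha^{d-2}$, the code $\C$ has parameters $[q-1,q-d,d]$ by the BCH bound (Lemma \ref{lem-BCHbound}) together with the Singleton bound, and
$$
\bH=\begin{pmatrix}
1 & 1 & \cdots & 1 \\
1 & \alpha & \cdots & \alpha^{q-2} \\
\vdots & \vdots & & \vdots \\
1 & \alpha^{d-2} & \cdots & \alpha^{(q-2)(d-2)}
\end{pmatrix}
$$
is a parity-check matrix of $\C$. The chosen vector $\bu$ is precisely the row indexed by $i=d-1$ of the full $(q-1)\times(q-1)$ Vandermonde matrix $(\alpha^{ij})_{0\le i,j\le q-2}$. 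Because the $q-1$ rows of that Vandermonde matrix are linearly independent, $\bu$ is not a linear combination of the rows of $\bH$; hence $\bu\in\Pi(\C)$. Applying the general recipe from Section \ref{Sec:4} then gives
$$
\overline{\bH}=\begin{pmatrix} \bH & \bzero^{\TT}\\ \bu & -1 \end{pmatrix}
$$
as a parity-check matrix of $\overline{\C}(\bu)$.

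Second, I would prove that every $d$ columns of $\overline{\bH}$ are linearly independent. For $0\le j \le q-2$ the $j$-th column of $\overline{\bH}$ is $(1,\alpha^{j},\alpha^{2j},\ldots,\alpha^{(d-1)j})^{\TT}$, while the last column is $(0,\ldots,0,-1)^{\TT}$. I split into two cases. If the $d$ selected columns all come from the first $q-1$, the resulting $d\times d$ submatrix is a Vandermonde matrix on the distinct nonzero nodes $\alpha^{j_1},\ldots,\alpha^{j_d}$, hence nonsingular. If the final column is among those selected, expanding the determinant along that column leaves, up to sign, a $(d-1)\times(d-1)$ Vandermonde matrix on $d-1$ distinct nonzero nodes, which is again nonsingular. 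This yields $d(\overline{\C}(\bu))\ge d+1$; together with length $q$, dimension $q-d$, and the Singleton bound, $\overline{\C}(\bu)$ is forced to be a $[q,q-d,d+1]$ MDS code. Since the dual of an MDS code is MDS, $\overline{\C}(\bu)^{\bot}$ has parameters $[q,d,q-d+1]$, as claimed.

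No step poses a serious obstacle; the only point worth careful attention is the determinant case analysis, which reduces to standard Vandermonde nonvanishing. The real content of the theorem is the observation that choosing $\bu$ to be the row indexed by the first \emph{missing} exponent $d-1$ is precisely what augments the Vandermonde parity-check matrix of $\C$ into the parity-check matrix of an extended Reed--Solomon code, thereby producing an MDS code even though (as $\bone$ is among the rows of $\bH$) the standardly extended code $\overline{\C}(-\bone)$ is trivial.
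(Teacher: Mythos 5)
Your proposal is correct and follows essentially the same route as the paper's proof: both identify $\C$ as a cyclic Reed--Solomon code and exhibit the same $d\times q$ parity-check matrix for $\overline{\C}(\bu)$, namely the Vandermonde rows $0,\ldots,d-1$ bordered by the column $(0,\ldots,0,-1)^{\TT}$, from which the MDS property follows. The only difference is that you spell out the Vandermonde determinant case analysis that the paper compresses into ``it then follows,'' which is a welcome but not substantive addition.
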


\begin{proof} 
It is well known that $\C$ is a $[q-1, q-d, d]$ MDS code over $\gf(q)$ and a cyclic Reed-Solomon code. It is easily verified that 
$$\begin{pmatrix}
	1 & 1& 1&\cdots &1&0\\
	1 & \alpha & \alpha^2 &\cdots &\alpha^{q-2}&0\\
	\vdots & \vdots & \vdots & \cdots & \vdots & \vdots \\
	1 & \alpha^{d-2} &\alpha^{(d-2)2} &\cdots &\alpha^{(d-2)(q-2)}&0\\
	1 & \alpha^{d-1} &\alpha^{(d-1)2} &\cdots &\alpha^{(d-1)(q-2)}&-1
\end{pmatrix}$$	
is a check matrix of $\overline{\C}(\bu)$. It then follows that $\overline{\C}(\bu)$ is a $[q, q-d, d+1]$ MDS code over $\gf(q)$. Consequently, $\overline{\C}(\bu)^{\perp}$ is a $[q, d, q-d+1]$ MDS code over $\gf(q)$. This completes the proof.  
\end{proof}

Let $\C$ be the cyclic code in Theorem \ref{INF-T}. It is easily seen that $\overline{\C}(-\bone)$ is trivial. However, the nonstandardly extended code $\overline{\C}(\bu)$ has minimum distance $d(\C)+1$, and $\overline{\C}(\bu)^{\perp}$ has minimum distance $d(\C^\perp)$.  The purpose of presenting Theorem \ref{INF-T} here is to show that a nonstandardly extended code $\overline{\C}(\bu)$ could be very interesting, even if the standardly extended code $\overline{\C}(-\bone)$ is trivial.  

\subsection{Multiply extended codes could be very interesting}\label{sec-nov24} 

Let $q>3$ be a prime power. Let $\alpha$ be a primitive element of $\gf(q)$. Define
\begin{equation}
\bG_\alpha:=\begin{pmatrix}
0& 1& \alpha^2 & \cdots & \alpha^{2(q-2)}\\ 
0& 1& \alpha & \cdots & \alpha^{q-2}\\
1& 1& 1& \cdots & 1\\
\end{pmatrix}.	
\end{equation}
We use $\C_\alpha$ to denote the linear code over $\gf(q)$ spanned by the rows of the matrix $\bG_\alpha$. It is easily checked that $\C_\alpha$ is a $[q, 3, q-2]$ MDS code over $\gf(q)$, and $\C_\alpha^\bot$ is a $[q, q-3, 4]$ MDS code over $\gf(q)$. The parameters of some extended codes of $\C_\alpha$ are documented in the following theorem.

\begin{theorem}\label{INF-TT}
Let notation be the same as before. The following hold.
\begin{enumerate}
\item The standardly extended code $\overline{\C_{\alpha}}(-\bone)$ is trivial.
\item Let $\bu^{(1)}=-(0,1,\alpha^{-2},\ldots, \alpha^{-2(q-2)})$. The extended code $\overline{\C_{\alpha}}(\bu^{(1)})$ has parameters $[q+1,3, q-1]$ and $\overline{\C_{\alpha}}(\bu^{(1)})^\perp$ has parameters $[q+1, q-2, 4]$. 
\end{enumerate}	
\end{theorem}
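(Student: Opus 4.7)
The plan is to handle the two parts separately, invoking the general theory from Section \ref{Sec:3} for Part 1 and identifying the nonstandardly extended code with a classical doubly-extended Reed--Solomon code for Part 2.

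For Part 1, by Theorem \ref{thm-general111} it suffices to verify that $-\bone \in \C_\alpha^\perp$, i.e.\ that each row of $\bG_\alpha$ is orthogonal to $\bone$. The first row gives $\sum_{j=0}^{q-2}\alpha^{2j}$, which vanishes because $\alpha$ is primitive and $q>3$ forces $\alpha^2 \ne 1$, so the geometric sum $(\alpha^{2(q-1)}-1)/(\alpha^2-1)=0$. The second row gives $\sum_{j=0}^{q-2}\alpha^j=0$ (sum of all nonzero elements of $\gf(q)$ for $q>2$). The third row gives $q\cdot 1 = 0$ in $\gf(q)$. Hence $\bone\in\C_\alpha^\perp$ and $\overline{\C_\alpha}(-\bone)$ is trivial.

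For Part 2, I would first compute the extra coordinate $\bg_i \cdot \bu^{(1)}$ adjoined to each row of $\bG_\alpha$. Row~$1$ contributes $\sum_{j=0}^{q-2}\alpha^{2j}\cdot(-\alpha^{-2j}) = -(q-1) = 1$. Row~$2$ contributes $-\sum_{j=0}^{q-2}\alpha^{-j} = 0$. Row~$3$ contributes $-\sum_{j=0}^{q-2}\alpha^{-2j} = 0$ (again using $\alpha^2\ne 1$). In particular, the last column of the extended generator matrix is $(1,0,0)^\TT$, which is nonzero, so $\bu^{(1)}\in\Pi(\C_\alpha)$ and the extension is nontrivial. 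A general codeword of $\C_\alpha$ is $a\bg_1+b\bg_2+c\bg_3 = \bigl(c,\,f(1),f(\alpha),\ldots,f(\alpha^{q-2})\bigr)$ with $f(X)=aX^2+bX+c$; noting that $c=f(0)$, the extended codeword is
\[
\bigl(f(0),\,f(1),\,f(\alpha),\,\ldots,\,f(\alpha^{q-2}),\,a\bigr).
\]
Thus $\overline{\C_\alpha}(\bu^{(1)})$ is exactly the doubly-extended Reed--Solomon code of degree-$\le 2$ polynomials evaluated at all $q$ points of $\gf(q)$, with the leading coefficient appended as the "point at infinity" coordinate. By the classical doubly-extended Reed--Solomon construction (\cite[Sect.~11.5]{MS77}, \cite[Theorem~5.3.4]{HP03}), this is a $[q+1,3,q-1]$ MDS code. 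Its dual is then a $[q+1,q-2,4]$ MDS code by MDS duality.

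The only potentially delicate step is the identification with the doubly-extended RS code; once the extended generator matrix is written in the form above, this identification is immediate. If one preferred a self-contained argument, the MDS property could instead be verified directly by showing that any three columns of the extended generator matrix are linearly independent, which reduces to a Vandermonde-type determinant computation together with the observation that the extra coordinate $(1,0,0)^\TT$ prevents any three columns involving the new coordinate from being dependent.
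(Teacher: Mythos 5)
Your proof is correct and follows essentially the same path as the paper's: Part 1 is settled in both by computing $\bG_\alpha\bone^\TT=(0,0,0)^\TT$, and Part 2 in both by computing $\bG_\alpha(\bu^{(1)})^\TT=(1,0,0)^\TT$ and appending that column to the generator matrix. The only divergence is the final identification step --- the paper recognizes the columns of the extended generator matrix as a conic in $\PG(2,\gf(q))$ and quotes the conic-code parameters from the literature, whereas you recognize the codewords $(f(0),f(1),\ldots,f(\alpha^{q-2}),a)$ as the doubly-extended Reed--Solomon code of dimension $3$; these are two standard descriptions of the same object, and either one (or your self-contained Vandermonde argument) legitimately yields the $[q+1,3,q-1]$ MDS parameters and the dual parameters by MDS duality.
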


\begin{proof}
1. It is easily verified that $\mathbf{G}_\alpha \bone^\TT=(0,0,0)^\TT$. It follows that $\bone \in \C_\alpha^\perp$. The desired 
first result follows.

2. Clearly, $\sum_{i=0}^{q-2} \alpha^{s i}=0$ for each $s\in \{1,2,\cdots, q-2\}$. Therefore, 
\begin{align*}
	\bG_\alpha (\bu^{(1)})^\TT=(1,0,0)^\TT. 
\end{align*} 
It follows that  
\begin{equation*}
\bG_\alpha^{(1)}:=\begin{pmatrix}
0& 1& \alpha^2 & \cdots & \alpha^{2(q-2)}&1\\ 
0& 1& \alpha & \cdots & \alpha^{q-2}&0\\
1& 1& 1& \cdots & 1&0\\
\end{pmatrix}
\end{equation*}
is a generator matrix of $\overline{\C_\alpha}(\bu^{(1)})$.

Recall a set of $q+1$ points of $\PG(2, \gf(q))$ is said to be a {\it conic} if they are zeros of a nondegenerate quadratic form in three variables \cite[Section 1.11]{Hirsch98}. The set of column vectors of the matrix $\bG_\alpha^{(1)}$ is a conic in $\PG(2, \gf(q))$. Therefore, $\overline{\C_\alpha}(\bu^{(1)})$ is called a  {\it conic code}. The conic code has parameters $[q+1, 3, q-1]$ and its dual has parameters $[q+1, q-2, 4]$ (see \cite[Chapter 12]{DingBk2}). 
 This completes the proof.
\end{proof}

The parameters of some extended codes of the conic code $\overline{\C_\alpha}(\bu^{(1)})$ are documented in the next theorem. 

\begin{theorem}\label{ConicCode1}
	Let notation be the same as before. The following hold.
	\begin{enumerate}
	\item The standardly extended code $\overline{\overline{\C_\alpha}(\bu^{(1)})}(-\bone)$ has parameters $[q+2, 3, q-1]$ and $\overline{\overline{\C_{\alpha}}(\bu^{(1)})}(-\bone)^{\bot}$ has parameters $[q+2, q-1, 2]$.	
	\item Let $\bu^{(2)}=-(0,1,\alpha^{-1}, \ldots, \alpha^{-(q-2)}, 0)$, then we have the following results.
	\begin{itemize}
	\item[2.1.] If $q\geq 5$ is an odd prime power, the extended code $\overline{\overline{\C_\alpha}(\bu^{(1)})}(\bu^{(2)})$ has parameters $[q+2, 3, q-1]$, and $\overline{\overline{\C_{\alpha}}(\bu^{(1)})}(\bu^{(2)})^{\bot}$ has parameters $[q+2, q-1,3]$. 
	\item[2.2.] If $q\geq 4$ is a power of $2$, the extended code $\overline{\overline{\C_\alpha}(\bu^{(1)})}(\bu^{(2)})$ has parameters $[q+2, 3, q]$, and $\overline{\overline{\C_{\alpha}}(\bu^{(1)})}(\bu^{(2)})^{\bot}$ has parameters $[q+2, q-1,4]$. 
	\end{itemize}
	\end{enumerate}
\end{theorem}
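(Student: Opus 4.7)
The plan is to analyze both extensions through their $3 \times (q+2)$ generator matrices, exploiting the standard fact that for a linear code with generator matrix $\bG$, the dual distance equals the minimum number of linearly dependent columns of $\bG$. The preparatory step is to compute the two relevant extension columns $\bG_\alpha^{(1)}\bone^\TT$ and $\bG_\alpha^{(1)}(\bu^{(2)})^\TT$. Combining $\sum_{i=0}^{q-2}\alpha^i=0$, $\sum_{i=0}^{q-2}\alpha^{2i}=0$ (valid since $\alpha^2$ generates a nontrivial subgroup of $\gf(q)^*$ for $q\geq 4$), and $q\equiv 0$ in the base characteristic, these evaluate to
\[
\bG_\alpha^{(1)}\,\bone^\TT = (1,0,0)^\TT \quad\text{and}\quad \bG_\alpha^{(1)}\,(\bu^{(2)})^\TT = (0,1,0)^\TT.
\]
In particular, $\bone \notin (\overline{\C_\alpha}(\bu^{(1)}))^\perp$ and $\bu^{(2)} \in \Pi(\overline{\C_\alpha}(\bu^{(1)}))$, so both extensions are nontrivial and have dimension $3$.

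For Part 1, the newly added column is $-(1,0,0)^\TT$, a scalar multiple of column $q+1$ of $\bG_\alpha^{(1)}$. This dependency among two columns gives a weight-$2$ dual codeword, so the dual distance is exactly $2$ (no column is zero). This forces the extended code to be non-MDS, since a $[q+2,3]$ MDS code would have dual distance $4$. Hence its minimum distance is at most $q-1$; combined with the lower bound $q-1$ from the inner $[q+1,3,q-1]$ code, the distance equals $q-1$.

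For Part 2, the extended generator matrix $\bG$ is $\bG_\alpha^{(1)}$ augmented with the column $(0,1,0)^\TT$. The first $q+1$ columns are precisely the $q+1$ projective points of the conic $XZ=Y^2$ in $\PG(2,q)$ (under $t \mapsto (t^2,t,1)$ for $t\in\gf(q)$, together with the point at infinity $(1,0,0)$), while $(0,1,0)^\TT$ lies off this conic. Because a line meets a conic in at most two points, no three of the first $q+1$ columns are linearly dependent, so any dependent triple must contain $(0,1,0)^\TT$. A direct $3\times 3$ determinant expansion shows that $(t_1^2,t_1,1)^\TT$, $(t_2^2,t_2,1)^\TT$, $(0,1,0)^\TT$ are dependent iff $t_1^2=t_2^2$, i.e.\ $t_1=-t_2$ with $t_1\neq t_2$; this has solutions precisely when $q$ is odd. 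The remaining triples involving $(0,1,0)^\TT$ together with $(0,0,1)^\TT$ or $(1,0,0)^\TT$ are quickly verified to be independent by short determinant checks. Therefore the dual distance is $3$ for odd $q\geq 5$ and at least $4$ for even $q$; the Singleton bound forces equality with $4$ in the even case, making the code MDS of distance $q$. For odd $q$, the non-MDS status yields distance at most $q-1$, matched by the lower bound $q-1$ from the inner code; an explicit minimum-weight codeword comes from the row combination $(1,0,-r^2)\bG$ corresponding to $f(t)=t^2-r^2$ with roots $\pm r \in \gf(q)^*$, whose support avoids the two conic positions $t=\pm r$ as well as position $q+2$.

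The main obstacle is the case split in Part 2, which rests on whether $-1\neq 1$ in $\gf(q)$—equivalently, whether $t_1=-t_2$ admits distinct solutions. The conceptual simplification is identifying the first $q+1$ columns of $\bG$ with a conic in $\PG(2,q)$, so that dependent triples must involve the off-conic column $(0,1,0)^\TT$; this reduces the dual-distance analysis to a small, elementary determinant computation parameterized by $t_1,t_2$.
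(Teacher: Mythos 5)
Your proof is correct and follows essentially the same route as the paper's: compute the two extension columns $(1,0,0)^\TT$ and $(0,1,0)^\TT$, read the dual distance off the column dependencies of the $3\times(q+2)$ generator matrix (your dependent triple $t_1^2=t_2^2$ with $t_1=-t_2$ is exactly the paper's $j=\frac{q-1}{2}+i$ determinant), and deduce the primal distance from the MDS/non-MDS dichotomy. The only deviation is that for even $q$ the paper cites the known parameters of hyperoval codes, whereas you derive the MDS property directly from the independence of all column triples plus the Singleton bound --- a harmless, self-contained variant of the same argument.
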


\begin{proof}
	1. It is easily checked that $\mathbf{G}_{\alpha}^{(1)} \bone^\TT=(1,0,0)^\TT$. By definition, the matrix
\begin{align*}
\begin{pmatrix}
0& 1& \alpha^2 & \cdots & \alpha^{2(q-2)}&1& -1\\ 
0& 1& \alpha & \cdots & \alpha^{q-2}&0& 0\\
1& 1& 1& \cdots & 1&0& 0\\
\end{pmatrix}
\end{align*}
is a generator matrix of $\overline{\overline{\C_{\alpha}}(\bu^{(1)})}(-\bone)$. It follows that $d(\overline{\overline{\C_{\alpha}}(\bu^{(1)})}(-\bone)^\perp)=2$. It is easily seen that $d(\overline{\overline{\C_{\alpha}}(\bu^{(1)})}(-\bone))=q-1$ or $q$. If $d(\overline{\overline{\C_{\alpha}}(\bu^{(1)})}(-\bone))=q$, then $\overline{\overline{\C_{\alpha}}(\bu^{(1)})}(-\bone)$ is a $[q+2, 3, q]$ MDS code over $\gf(q)$. It then follows that $\overline{\overline{\C_{\alpha}}(\bu^{(1)})}(-\bone)^\perp$ is a $[q+2, q-1, 4]$ MDS code over $\gf(q)$, a contradiction. The desired first result follows.

2. It is easily verified that $\bG_\alpha^{(1)}(\bu^{(2)})^\TT=(0,1,0)^\TT$. It follows that 
\begin{align*}
\bG_\alpha^{(2)}:= \begin{pmatrix}
0& 1& \alpha^2 & \cdots & \alpha^{2(q-2)}&1& 0\\ 
0& 1& \alpha & \cdots & \alpha^{q-2}&0& 1\\
1& 1& 1& \cdots & 1&0& 0\\
\end{pmatrix}
\end{align*}
is a generator matrix of $\overline{\overline{\C_{\alpha}}(\bu^{(1)})}(\bu^{(2)})$. It is easily checked that any two column vectors of $\bG_\alpha^{(2)}$ are not proportional. It follows that $d(\overline{\overline{\C_{\alpha}}(\bu^{(1)})}(\bu^{(2)})^\perp)\geq 3$. The rest of the proof is divided into the following cases.
\begin{enumerate}
\item[Case 1.] Suppose $q$ is odd. For each $0\leq i<\frac{q-1}2$, let $j=\frac{q-1}2+i$. It is easy to see that 
\begin{align*}
\begin{vmatrix}
\alpha^{2i}& \alpha^{2j} & 0\\	
\alpha^{i} & \alpha^{j}  & 1\\
      1    &      1	     & 0\\
\end{vmatrix}=-(\alpha^{2i}-\alpha^{2j})=0.
\end{align*}
It follows there are $c_1,c_2\in \gf(q)$ such that $$(0,1,0)=c_1(\alpha^{2i},\alpha^{i},1)+c_2(\alpha^{2j},\alpha^{j},1).$$ 
Therefore, $d(\overline{\overline{\C_{\alpha}}(\bu^{(1)})}(\bu^{(2)})^\perp)\leq 3$. Similar to Result 1, we obtain
$$d(\overline{\overline{\C_{\alpha}}(\bu^{(1)})}(\bu^{(2)}))=q-1.$$ 
The desired result follows.   
\item[Case 2.] Suppose $q$ is even. 	The set of column vectors of the matrix $\bG_{\alpha}^{(2)}$ is a hyperoval in $\PG(2, \gf(q))$. According to \cite[Chapter 12]{DingBk2}, $\overline{\overline{\C_\alpha}(\bu^{(1)})}(\bu^{(2)})$ is a $[q+2, 3, q]$ MDS code over $\gf(q)$, and $\overline{\overline{\C_\alpha}(\bu^{(1)})}(\bu^{(2)})^\perp$ is a $[q+2, q-1, 4]$ MDS code over $\gf(q)$. The desired result follows.  
\end{enumerate}
This completes the proof.  
\end{proof}

When $q\geq 5$ is an odd prime power, $\overline{\overline{\C_\alpha}(\bu^{(1)})}(-\bone)$ and $\overline{\overline{\C_\alpha}(\bu^{(1)})}(\bu^{(2)})$ have the same parameters, but $\overline{\overline{\C_\alpha}(\bu^{(1)})}(\bu^{(2)})$ and $\overline{\overline{\C_\alpha}(\bu^{(1)})}(-\bone)$ are not equivalent. This is because the nonstandardly extended code $\overline{\overline{\C_\alpha}(\bu^{(1)})}(\bu^{(2)})$ is NMDS, while $\overline{\overline{\C_\alpha}(\bu^{(1)})}(-\bone)$ 
is not NMDS due to the fact that 
 $$d(\overline{\overline{\C_\alpha}(\bu^{(1)})}(-\bone)^\perp)=2.$$ 
 When $q\geq 4$ is a power of $2$, we have $$d(\overline{\overline{\C_\alpha}(\bu^{(1)})}(\bu^{(2)}))=d(\overline{\C_\alpha}(\bu^{(1)}))+1,$$
and $\overline{\overline{\C_\alpha}(\bu^{(1)})}(\bu^{(2)})$ is an MDS code over $\gf(q)$. However, $$d(\overline{\overline{\C_\alpha}(\bu^{(1)})}(-\bone))=d(\overline{\C_\alpha}(\bu^{(1)})).$$
These results show that the doubly extended code $\overline{\overline{\C_\alpha}(\bu^{(1)})}(\bu^{(2)})$ is more interesting 
than the doubly extended code  $\overline{\overline{\C_\alpha}(\bu^{(1)})}(-\bone)$.

The following theorem shows that the triply extended code $\overline{\overline{\overline{\C_\alpha}(\bu^{(1)})}(\bu^{(2)})}(-\bone)$  is a NMDS code over $\gf(q)$.

\begin{theorem}\label{NNMDS::17}
 The triply extended code $\overline{\overline{\overline{\C_\alpha}(\bu^{(1)})}(\bu^{(2)})}(-\bone)$ has parameters $[q+3, 3, q]$ and $$\overline{\overline{\overline{\C_\alpha}(\bu^{(1)})}(\bu^{(2)})}(-\bone)^\perp$$ has parameters $[q+3, q, 3]$. Furthermore, if $q$ is odd, $\overline{\overline{\overline{\C_\alpha}(\bu^{(1)})}(\bu^{(2)})}(-\bone)$ has weight enumerator 
$$ 1+q(q-1)z^{q}+\frac{q^3-2q^2+7q-6}{2}z^{q+1}+(2q^2-5q+3)z^{q+2}+\frac{(q-2)(q-1)^2}2z^{q+3}.$$
If $q$ is even, $\overline{\overline{\overline{\C_\alpha}(\bu^{(1)})}(\bu^{(2)})}(-\bone)$ has weight enumerator 
$$ 1+\frac{q^2+q-2}{2} z^{q}+\frac{q^3+q^2-2q}{2}z^{q+1}+\frac{q^2-q}{2}z^{q+2}+\frac{q^3-3q^2+2q}{2}z^{q+3}.$$
\end{theorem}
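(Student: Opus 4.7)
Denote the triply extended code in the statement by $\overline{\C}$ for brevity. The plan is to parameterize codewords by quadratic polynomials in $\gf(q)[x]$ and reduce the weight computation to counting roots of these quadratics. The setup is uniform in $q$, but the root-count formulas split into the usual odd-characteristic (discriminant) and even-characteristic (trace) cases.

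First I would compute the column appended by the third extension. Using $\sum_{i=0}^{q-2}\alpha^{i}=\sum_{i=0}^{q-2}\alpha^{2i}=0$ (valid for $q>3$, since neither $\alpha$ nor $\alpha^2$ equals $1$) together with $q\cdot 1=0$ in $\gf(q)$, one gets $\bG_\alpha^{(2)}\bone^\TT=(1,1,0)^\TT$, so $\overline{\C}$ has generator matrix
\[
\bG^{(3)}=\begin{pmatrix}
0 & 1 & \alpha^2 & \cdots & \alpha^{2(q-2)} & 1 & 0 & -1\\
0 & 1 & \alpha   & \cdots & \alpha^{q-2}    & 0 & 1 & -1\\
1 & 1 & 1       & \cdots & 1               & 0 & 0 & 0
\end{pmatrix}.
\]
The codeword generated by $(a,b,c)\in\gf(q)^3$ is then $\big((f(x))_{x\in\gf(q)},\,a,\,b,\,-a-b\big)$, where $f(x)=ax^2+bx+c$; the first $q$ coordinates exhaust the evaluations of $f$ on $\gf(q)$.

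For the parameters, I would view the $q+3$ columns of $\bG^{(3)}$ as points of $\PG(2,\gf(q))$: the first $q+1$ columns lie on the conic $Y^2=XZ$, and the two remaining ones are $(0,1,0)$ and $(-1,-1,0)$. No two of these points are proportional (a direct check), so $d(\overline{\C}^\perp)\geq 3$; combined with the dependency $(1,0,0)^\TT+(0,1,0)^\TT+(-1,-1,0)^\TT=\bzero$ among the last three columns, this gives $d(\overline{\C}^\perp)=3$. On the primal side, any three conic points are noncollinear (Vandermonde), so it suffices to show that no four of the $q+3$ columns are collinear. The only candidate would involve both extra points together with two conic points, but the line through $(0,1,0)$ and $(-1,-1,0)$ is $Z=0$, which meets the conic only at $(1,0,0)$. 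Hence every line of $\PG(2,\gf(q))$ contains at most three of our columns, giving $d(\overline{\C})=(q+3)-3=q$ and the NMDS parameters $[q+3,3,q]$ and $[q+3,q,3]$.

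For the weight enumerators I would write each weight as $q-N_f+w'$, where $N_f=|\{x\in\gf(q):f(x)=0\}|$ and $w'=|\{a,b,-a-b\}\cap\gf(q)^*|$, and split into cases: (i) $a=b=0$, yielding the zero codeword and $q-1$ codewords of weight $q$; (ii) $a=0$, $b\neq 0$, yielding $q(q-1)$ codewords of weight $q+1$; (iii) $a\neq 0$, $b\in\{0,-a\}$, contributing with $w'=2$; (iv) $a\neq 0$, $b\notin\{0,-a\}$, contributing with $w'=3$. In cases (iii) and (iv) I would count quadratics with $N_f\in\{0,1,2\}$ using the discriminant $b^2-4ac$ for odd $q$ and the trace condition $\Tr(ac/b^2)=0$ (or the Frobenius square-root when $b=0$) for even $q$. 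Summing weight by weight should reproduce the stated enumerators.

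The main obstacle is bookkeeping rather than ideas. In odd characteristic the quadratic subcase $b\notin\{0,-a\}$ further splits into three discriminant cases ($\Delta=0$, nonzero square, nonsquare), each contributing to a different weight; within case (iii), $b=0$ and $b=-a$ must be handled separately because they yield different root-count distributions. In even characteristic the discriminant-zero subcase disappears and the $b=0$ subcase has $N_f=1$ via the Frobenius square root, making the even-$q$ analysis shorter. A convenient sanity check to run alongside the enumeration is that the four nonzero $A_i$ in the claimed enumerator sum to $q^3-1$; this identity holds in both parities and catches almost all bookkeeping slips before a term-by-term comparison becomes necessary.
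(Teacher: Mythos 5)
Your proposal is correct, and for odd $q$ its core — writing the codeword of $(a,b,c)$ as the evaluations of $f(x)=ax^2+bx+c$ padded by $(a,b,-a-b)$ and counting roots via the discriminant — is exactly the paper's computation; I checked that your case split (including the separate treatment of $b=0$ and $b=-a$ inside your case (iii)) reproduces the stated odd-$q$ enumerator term by term. You diverge from the paper in two places, both legitimately. First, for $d(\overline{\C})=q$ you argue geometrically that every line of $\PG(2,\gf(q))$ contains at most three of the $q+3$ columns (at most two conic points plus at most one of the extra points, except for the line $Z=0$ which carries exactly three), whereas the paper only extracts $d=q$ a posteriori from the weight enumerator (odd $q$) or from its Theorem~\ref{PTWeight} (even $q$); your argument is cleaner and parity-independent. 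Second, for even $q$ the paper does not redo the count: it observes that the doubly extended code is the hyperoval (projective two-weight) code with known enumerator and invokes Theorem~\ref{PTWeight} together with $d(\overline{\C}^\perp)=3$, while you propose a self-contained direct count using $N_f=1$ when $b=0$ (Frobenius square root) and the criterion $\Tr(ac/b^2)=0$ when $b\neq 0$; I verified this also yields the stated even-$q$ enumerator. Your route costs more bookkeeping in characteristic $2$ but avoids dependence on the two-weight-code machinery; the paper's route is shorter but less self-contained. The dual-distance argument ($d(\overline{\C}^\perp)=3$ via pairwise non-proportional columns plus the dependency among the last three columns) is the same in both.
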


\begin{proof}
	It is easy to check that $\mathbf{G}_\alpha^{(2)} \bone^\TT =(1, 1, 0)^\TT$.
	It follows that 
\begin{align*}
\mathbf{G}_\alpha^{(3)}:=\begin{pmatrix}
0& 1& \alpha^2 & \cdots & \alpha^{2(q-2)}&1& 0& -1\\ 
0& 1& \alpha & \cdots & \alpha^{q-2}&0& 1& -1\\
1& 1& 1& \cdots & 1&0& 0& 0\\
\end{pmatrix}
\end{align*}
is a generator matrix of $\overline{\overline{\overline{\C_\alpha}(\bu^{(1)})}(\bu^{(2)})}(-\bone)$. It is easily verified that any two column vectors of $\mathbf{G}_\alpha^{(3)}$ are not proportional. Hence, $d(\overline{\overline{\overline{\C_\alpha}(\bu^{(1)})}(\bu^{(2)})}(-\bone)^\perp)\geq 3$. It is clear that 
$$-(1,1,0)=-(1,0,0)-(0,1,0),$$ 
then $d(\overline{\overline{\overline{\C_\alpha}(\bu^{(1)})}(\bu^{(2)})}(-\bone)^\perp)\leq 3$. Therefore, $d(\overline{\overline{\overline{\C_\alpha}(\bu^{(1)})}(\bu^{(2)})}(-\bone)^\perp)= 3$. The rest of the proof is divided into the following cases.
\begin{enumerate}
\item[Case 1.] Let $q$ be odd. For any $(a, b, c)\in \gf(q)^3$, we have 
$$\bc=(a, b, c) \mathbf{G}_\alpha^{(3)} =(c, a+b+c, \ldots, a \alpha^{2(q-2)}+b \alpha^{q-2}+c, a, b, -(a+b)).$$
It follows that 
\begin{equation}\label{NMDS-4}
\wt(\bc)=\wt((a, b, -a-b))+|\{x\in \gf(q): \ a x^2+b x+c \neq 0 \}|.	
\end{equation}
If $a=b=0$, it follows from (\ref{NMDS-4}) that 
\begin{align*}
\wt(\bc)&=|\{x\in \gf(q): \ c \neq 0 \}|\\
&=	 \begin{cases}
	 q ~&{\rm if}~c\neq 0,\\
	 0~&{\rm if}~c=0 	
	 \end{cases}\\
&=\begin{cases}
	 q ~&{\rm with}~q-1~{\rm times},\\
	 0~&{\rm with}~1~{\rm times}.	
	 \end{cases}	 
\end{align*}
If $a=0$ and $b\neq 0$, it follows from (\ref{NMDS-4}) that 
\begin{align*}
\wt(\bc)&=2+|\{x\in \gf(q): \ b x+c \neq 0 \}|\\
&=q+1~{\rm with}~(q-1)q~{\rm times}.
\end{align*}
If $a\neq 0$, it follows from (\ref{NMDS-4}) that 
\begin{equation}\label{NMDS-2}
\wt(\bc)=\wt((b,-a-b))+N(a, b, c), 	
\end{equation}
where 
\begin{align*}
N(a, b, c)&=1+|\{x\in \gf(q): \ a x^2+b x+c\neq 0\}|\\
&=1+|\{x\in \gf(q): \  ( x+2^{-1}a^{-1} b)^2 \neq  4^{-1} a^{-2}(b^2-4ac)\}|\\
&=\begin{cases}
	q~&{\rm if}~b^2=4ac,\\
	q-1~& {\rm if}~b^2-4ac\neq 0~{\rm is~the~square},\\
	q+1~&{\rm otherwise}.
\end{cases}
\end{align*}
Below we continue our proof by considering the following cases:
\begin{enumerate}
\item If $a\neq 0$ and $b=0$, it follows from (\ref{NMDS-2}) that 
\begin{align*}
\wt(\bc)&=1+N(a,b,c)\\
&=\begin{cases}
	q+1~&{\rm if}~c=0,\\
	q~& {\rm if}~-ac\neq 0~{\rm is~the~square},\\
	q+2~&{\rm otherwise} 
\end{cases}	\\
&=\begin{cases}
	q+1~&{\rm with}~(q-1)~{\rm times},\\
	q~& {\rm with}~(q-1)(\frac{q-1}2)~{\rm times},\\
	q+2~&{\rm with}~(q-1)(\frac{q-1}2)~{\rm times}.
	\end{cases}
\end{align*}
\item If $a\neq 0$ and $b=-a$, it follows from (\ref{NMDS-2}) that
\begin{align*}
\wt(\bc)&=1+N(a,b,c)\\
&=\begin{cases}
	q+1~&{\rm if}~c=4^{-1}a,\\
	q~& {\rm if}~a^2-4ac\neq 0~{\rm is~the~square},\\
	q+2~&{\rm otherwise} 
\end{cases}	\\
&=\begin{cases}
	q+1~&{\rm with}~q-1~{\rm times},\\
	q~& {\rm with}~(q-1)(\frac{q-1}2)~{\rm times},\\
	q+2~&{\rm with}~(q-1)(\frac{q-1}2)~{\rm times}.
\end{cases}	
\end{align*}
\item If $b\neq 0$ and $b\neq -a$, it follows from (\ref{NMDS-2}) that
\begin{align*}
\wt(\bc)&=2+N(a,b,c)\\
&=\begin{cases}
	q+2~&{\rm if}~b^2=4ac,\\
	q+1~& {\rm if}~b^2-4ac\neq 0~{\rm is~the~square},\\
	q+3~&{\rm otherwise} 
\end{cases}	\\
&=\begin{cases}
	q+2~&{\rm with}~(q-1)(q-2)~{\rm times},\\
	q+1~& {\rm with}~(q-1)(q-2)(\frac{q-1}2)~{\rm times},\\
	q+3~&{\rm with}~(q-1)(q-2)(\frac{q-1}2)~{\rm times}.
\end{cases}	
\end{align*}
\end{enumerate}
Summarizing the discussions above, we deduce that 
$$d(\overline{\overline{\overline{\C_\alpha}(\bu^{(1)})}(\bu^{(2)})}(-\bone))=q$$
and the desired weight enumerator holds.  

\item[Case 2.] Let $q$ be even. According to \cite[Section 12.2]{DingBk2}, the code $\overline{\overline{\C_\alpha}(\bu^{(1)})}(\bu^{(2)})$ is a projective two-weight code with weight enumerator
$$1+\frac{(q+2)(q^2-1)}{2}z^{q}+\frac{q(q-1)^2}{2}z^{q+2}.$$
Since $\overline{\overline{\C_\alpha}(\bu^{(1)})}(\bu^{(2)})$ has only nonzero weights $q$ and $q+2$, the nonzero weights of 
$$\overline{\overline{\overline{\C_\alpha}(\bu^{(1)})}(\bu^{(2)})}(-\bone)$$
must belong to the set $\{q, q+1, q+2, q+3\}$. Let $\overline{A}_i$ denote the total number of codewords with Hamming weight $i$ in $\overline{\overline{\overline{\C_\alpha}(\bu^{(1)})}(\bu^{(2)})}(-\bone)$, then
\begin{equation}\label{EQQ:7}
	\overline{A}_{q}+\overline{A}_{q+1}= \frac{(q+2)(q^2-1)}{2}
\end{equation}
and
\begin{equation}\label{EQQ:8}
\overline{A}_{q+2}+\overline{A}_{q+3}= \frac{q(q-1)^2}{2}.	
\end{equation}
Note that $d(\overline{\overline{\overline{\C_\alpha}(\bu^{(1)})}(\bu^{(2)})}(-\bone)^\perp)=3$, by the second and third Pless power moments (see \cite[Section 7.3]{HP03}), we get that
\begin{equation}\label{EQQ:9}
q\overline{A}_{q}+(q+1)\overline{A}_{q+1}+(q+2)\overline{A}_{q+2}+(q+3)\overline{A}_{q+3}=	q^2(q-1)(q+3)		
\end{equation}
and
\begin{equation}\label{EQQ:10}
q^2\overline{A}_{q}+(q+1)^2\overline{A}_{q+1}+(q+2)^2\overline{A}_{q+2}+(q+3)^2\overline{A}_{q+3}=	q(q-1)(q+3)[q-1)(q+3)+1].
\end{equation}
It is clear that 
\begin{align*}
&q\overline{A}_{q}+(q+1)\overline{A}_{q+1}+(q+2)\overline{A}_{q+2}+(q+3)\overline{A}_{q+3}\\
=~&q(\overline{A}_{q}+\overline{A}_{q+1})+(q+2)(\overline{A}_{q+2}+\overline{A}_{q+3})+	\overline{A}_{q+1}+\overline{A}_{q+3}.
\end{align*}
Combining (\ref{EQQ:7}), (\ref{EQQ:8}) and (\ref{EQQ:9}), we obtain that 
\begin{equation}\label{EQQ:11}
\overline{A}_{q+1}+\overline{A}_{q+3}= q^2(q-1).	
\end{equation}
Notice that
\begin{align*}
&q^2\overline{A}_{q}+(q+1)^2\overline{A}_{q+1}+(q+2)^2\overline{A}_{q+2}+(q+3)^2\overline{A}_{q+3}\\
=~&q^2(\overline{A}_{q}+\overline{A}_{q+1})+(q+2)^2(\overline{A}_{q+2}+\overline{A}_{q+3})+(2q+1)(\overline{A}_{q+1}+\overline{A}_{q+3})+4\overline{A}_{q+3}.
\end{align*}
Combining (\ref{EQQ:7}), (\ref{EQQ:8}), (\ref{EQQ:10}) and (\ref{EQQ:11}), we obtain that
$\overline{A}_{q+3}=\frac{q^3-3q^2+2q}{2}$. The rest of desired conclusions directly follow from Equations (\ref{EQQ:7}), (\ref{EQQ:8}) and (\ref{EQQ:11}).   
\end{enumerate}
This completes the proof.
\end{proof}

When $q\geq 5$ is odd, the doubly extended code $\overline{\overline{\C_\alpha}(\bu^{(1)})}(\bu^{(2)})$ is a NMDS code over $\gf(q)$.  The triply extended code $\overline{\overline{\overline{\C_\alpha}(\bu^{(1)})}(\bu^{(2)})}(-\bone)$ is still a NMDS code over $\gf(q)$. In this case, 
$$d(\overline{\overline{\overline{\C_\alpha}(\bu^{(1)})}(\bu^{(2)})}(-\bone))=d(\overline{\overline{\C_\alpha}(\bu^{(1)})}(\bu^{(2)}))+1.$$
This shows that the standardly extended code of a NMDS code could still be a NMDS code.  
When $q\geq 4$ is a power of $2$, the code $\overline{\overline{\overline{\C_\alpha}(\bu^{(1)})}(\bu^{(2)})}(-\bone)$ was studied by Wang and Heng \cite{WH20}. Our contribution is to show that a NMDS code can be obtained through an appropriate triple extension of the MDS code $\C_\alpha$. This is a partial demonstration of Corollary \ref{thm-general113}.   

\section{The standardly extended codes of several families of cyclic codes}\label{nsec::5}

In this section, we will study the standardly extended codes of several families of cyclic codes, and present several families of linear codes with better parameters.   

\begin{theorem}\label{Cyclic-optimal}
Let $q>2$ be a prime power. Let $n=\lambda(\frac{q^m-1}{q-1})$, where $m\geq 2$, $\lambda$ divides $q-1$ and $\gcd(m\lambda, q-1)>1$. Let $\alpha\in \gf(q^m)$ be a primitive $n$-th root of unity. Let $\C$ be the cyclic code of length $n$ over $\gf(q)$ with generator polynomial $\m_{\alpha}(x)$, where $\m_{\alpha}(x)$ denotes the minimal polynomial of $\alpha$ in $\gf(q)$. Then the following hold.
\begin{enumerate}
\item The extended code $\overline{\C}(-\bone)$ has parameters $[n+1, n-m, 3]$ and is dimension-optimal.
\item If $\lambda \geq 2$, the extended code $\overline{\C}(-\bone)$ is also distance-optimal.
\end{enumerate} 
\end{theorem}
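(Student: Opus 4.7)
The plan is to derive the parameters $[n+1, n-m, 3]$ first and then prove each optimality claim via a sphere-packing (Hamming) bound argument. The dimension is immediate once we know $\deg \m_{\alpha}(x) = m$. For this, I would use that $\alpha \in \gf(q^m)$ gives $\ord_n(q) \mid m$, while $\frac{q^m-1}{q-1} \mid n$ combined with the standard fact $\ord_{(q^m-1)/(q-1)}(q) = m$ (proved by noting $q^d - 1 < \frac{q^m-1}{q-1}$ for all $d < m$ when $m \geq 2$) forces $m \mid \ord_n(q)$. Hence $\ord_n(q) = m$, so $\dim(\C) = n - m$ and $\dim(\overline{\C}(-\bone)) = n - m$.

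For the minimum distance, the key is to show $d(\C) = 2$ and that the standard extension raises it to $3$. A weight-two codeword of $\C$ corresponds to $a \be_i + b \be_j$ with $a, b \in \gf(q)^*$ and $a \alpha^i + b \alpha^j = 0$, equivalently $\alpha^{j-i} = -a/b \in \gf(q)^*$. Such an exponent $k = j-i$ with $0 < k < n$ exists iff $\langle \alpha \rangle \cap \gf(q)^*$ is nontrivial, i.e., iff $\gcd(n, q-1) > 1$. Since $\frac{q^m-1}{q-1} \equiv m \pmod{q-1}$, one computes $\gcd(n, q-1) = \gcd(\lambda m, q-1) > 1$ by hypothesis, so $d(\C) = 2$. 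For the extended code, such a weight-two codeword has extended coordinate $-(a+b)$, which would vanish only if $a + b = 0$, forcing $\alpha^{j-i} = 1$; this is impossible since $0 < j - i < n$. Hence every weight-two codeword of $\C$ extends to weight three, and since $\C$ has no weight-one codewords we conclude $d(\overline{\C}(-\bone)) = 3$.

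The two optimality claims then follow from the sphere packing bound. If an $[n+1, n-m+1, 3]$ code over $\gf(q)$ existed, the bound would require $1 + (n+1)(q-1) \leq q^m$, giving $n+1 \leq \frac{q^m-1}{q-1}$; but $n = \lambda \cdot \frac{q^m-1}{q-1} \geq \frac{q^m-1}{q-1}$, a contradiction. This proves dimension-optimality. For distance-optimality when $\lambda \geq 2$, suppose an $[n+1, n-m, 4]$ code $\mathcal{D}$ exists. Puncturing $\mathcal{D}$ at one coordinate preserves the dimension (since $d(\mathcal{D}) \geq 2$) and drops the distance by at most one, producing an $[n, n-m, \geq 3]$ code. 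The sphere packing bound then gives $1 + n(q-1) \leq q^m$, i.e., $n \leq \frac{q^m-1}{q-1}$; but $\lambda \geq 2$ forces $n \geq 2 \cdot \frac{q^m-1}{q-1}$, again a contradiction. I expect the computation of $\gcd(n, q-1)$ in the distance step to be the only mildly delicate point; the remaining arguments are direct applications of standard bounds.
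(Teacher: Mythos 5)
Your proposal is correct and follows essentially the same route as the paper: establish $\dim = n-m$, show $d(\C)=2$ via $\gcd(n,q-1)=\gcd(m\lambda,q-1)>1$, rule out weight-two codewords with zero coordinate sum to get $d(\overline{\C}(-\bone))=3$, and then apply the sphere packing bound directly for dimension-optimality and after puncturing once (the paper cites \cite[Theorem 1.5.1]{HP03} for this step) for distance-optimality. The only differences are presentational: you spell out the computation $\ord_n(q)=m$ and characterize all weight-two codewords, where the paper exhibits the single codeword $x^{n/\gcd(n,q-1)}-\alpha^{n/\gcd(n,q-1)}$ and leaves the dimension claim as "easily verified."
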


\begin{proof}
It is easily verified that $\C$ has length $n$ and dimension $n-m$. Consequently, $\overline{\C}(-\bone)$ has length $n+1$ and dimension $n-m$. Since $$\gcd(n, q-1)=\gcd(m\lambda, q-1)>1,$$ 
we deduce that $x^{\frac{n}{\gcd(n, q-1)}}- \alpha^{\frac{n}{\gcd(n, q-1)}} \in \C$. Therefore, $d(\C)=2$. It is clear that $d(\overline{\C}(-\bone))=2$ or $3$. If $d(\overline{\C}(-\bone))=2$, then $c (x^j-x^i) \in \C$ for some $c\in \gf(q)^*$ and $0\leq i<j\leq n-1$. It follows that $\alpha^{j-i}=1$. Consequently, $n\mid (j-i)$, which contradicts the fact that $0\leq i<j\leq n-1$. Therefore, $d(\overline{\C}(-\bone))=3$. 

Suppose there exists an $[n+1, k\geq n-m+1, 3]$ code over $\gf(q)$. By the sphere packing bound, we get that 
$$1+(n+1)(q-1)\leq q^{n+1-k}\leq q^{m}.$$
It follows that $n+1\leq \frac{q^m-1}{q-1}$, a contradiction. Therefore, $\overline{\C}(-\bone)$ is a dimension-optimal code. 
 
When $\lambda\geq 2$, we have $n\geq 2(\frac{q^m-1}{q-1})$. Suppose there exists an $[n+1, n-m, d\geq 4]$ code over $\gf(q)$. By \cite[Theorem 1.5.1]{HP03}, there exists an $[n, n-m, d\geq 3]$ code over $\gf(q)$. By the sphere packing bound, we get that $1+n(q-1)\leq q^{m}$. It follows that $n\leq \frac{q^m-1}{q-1}$, a contradiction. Therefore, $\overline{\C}(-\bone)$ is a distance-optimal code.  This completes the proof. 
\end{proof} 

Note that the original code $\C$ in Theorem \ref{Cyclic-optimal} has minimum distance $2$ and is not interesting as it 
cannot correct one error. However, the extended code $\overline{\C}(-\bone)$ can correct one error and is both 
distance-optimal and dimension-optimal when 
$\lambda \geq 2$.  This demonstrates the importance of the extending technique again.

\begin{example}
We have the following examples of the code of Theorem \ref{Cyclic-optimal}.
\begin{enumerate}
\item If $(q,m, \lambda)=(3, 4, 1)$, the code $\overline{\C}(-\bone)$ has parameters $[41,36,3]$ and is dimension-optimal. 
\item If $(q, m, \lambda)=(5, 4, 1)$, the code $\overline{\C}(-\bone)$ has parameters $[157,152,3]$ and is dimension-optimal. 	
%\item If $(q, m, \lambda)=(5, 4, 2)$, the code $\overline{\C}(-\bone)$ has parameters $[313,308,3]$ and  is 
%both dimension-optimal and distance-optimal. 
\end{enumerate}	
\end{example}

\begin{theorem}\label{MIN4:21}
Let $q$ be a prime power and $m\geq 2$ be even. Let $n$ be a divisor of $q^m-1$ and $n>q^{m/2}+1$. Let $\alpha\in \gf(q^m)$ be a primitive $n$-th root of unity and $\m_{\alpha^i}(x)$ denote the minimal polynomial of $\alpha^i$ over $\gf(q)$. Let $\C$ be the cyclic code of length $n$ over $\gf(q)$ with generator polynomial $\m_{\alpha}(x) \m_{\alpha^{ q^{m/2}+1}}(x)$. Then $d(\C)\geq 3$ and $d(\overline{\C}(-\bone))\geq 4$. Further, $d(\overline{\C}(-\bone))=4$ provided that $d(\C)=3$.
\end{theorem}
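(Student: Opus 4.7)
My plan is to establish the three assertions in order: $d(\C)\ge 3$, $d(\overline{\C}(-\bone))\ge 4$, and finally the equality $d(\overline{\C}(-\bone))=4$ under the extra hypothesis $d(\C)=3$. The unifying idea is that any hypothetical low-weight codeword has coefficients in $\gf(q)$, so applying the $q^{m/2}$-th power Frobenius (which fixes $\gf(q)$) yields an additional linear constraint that, combined with the bijectivity of $x\mapsto x^{q^{m/2}}$ on $\gf(q^m)^*$ (its kernel on the multiplicative group is trivial), forces a contradiction.

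For $d(\C)\ge 3$: weight-$1$ codewords are impossible since $\alpha^j\neq 0$. For a hypothetical weight-$2$ codeword $c_1 x^{j_1}+c_2 x^{j_2}$ with $c_i\in\gf(q)^*$, the two root conditions $c(\alpha)=c(\alpha^{q^{m/2}+1})=0$ give $\beta:=\alpha^{j_2-j_1}=-c_1/c_2\in\gf(q)^*$ and $\beta^{q^{m/2}+1}=\beta$. Because $\beta\in\gf(q)^*$ is Frobenius-fixed, $\beta^{q^{m/2}+1}=\beta\cdot\beta^{q^{m/2}}=\beta^2$, so $\beta=1$, which gives $n\mid j_2-j_1$, contradicting $0<j_2-j_1<n$.

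For $d(\overline{\C}(-\bone))\ge 4$: a weight-$3$ codeword of the extension comes either from a weight-$2$ codeword of $\C$ with nonzero coordinate sum (excluded by the previous step) or from a weight-$3$ codeword $\bc=c_1 x^{j_1}+c_2 x^{j_2}+c_3 x^{j_3}$ of $\C$ satisfying $c_1+c_2+c_3=0$. Assuming the latter, I would normalize $x_i:=\alpha^{j_i-j_3}$ so that $x_3=1$ and $x_1,x_2\in\gf(q^m)^*\setminus\{1\}$ are distinct. Eliminating $c_3=-c_1-c_2$ from the two root conditions produces
\[
c_1(x_1-1)+c_2(x_2-1)=0,\qquad c_1\bigl(x_1^{q^{m/2}+1}-1\bigr)+c_2\bigl(x_2^{q^{m/2}+1}-1\bigr)=0.
\]
Setting $\mu:=c_1/c_2\in\gf(q)^*$ (with $\mu\neq-1$, else $c_3=0$), the first equation gives the affine relation $x_2=(\mu+1)-\mu x_1$ over $\gf(q^m)$. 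Because $\mu\in\gf(q)$ is fixed by the $q^{m/2}$-th power map, the same relation survives Frobenius: $x_2^{q^{m/2}}=(\mu+1)-\mu x_1^{q^{m/2}}$. Expanding $x_2^{q^{m/2}+1}=x_2\cdot x_2^{q^{m/2}}$ and substituting into the second equation, the expression collapses to $\mu(\mu+1)(x_1-1)(x_1^{q^{m/2}}-1)=0$. Since $\mu(\mu+1)\neq 0$, either $x_1=1$ or $x_1^{q^{m/2}}=1$, but both are impossible because $x_1\neq 1$ and $x\mapsto x^{q^{m/2}}$ is a field automorphism of $\gf(q^m)$ whose multiplicative kernel is $\{1\}$.

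For the final clause, if $d(\C)=3$ take any weight-$3$ codeword $\bc\in\C$. By the preceding step its coordinate sum cannot be zero, so the extended coordinate is nonzero and $\bc$ lifts to a weight-$4$ codeword of $\overline{\C}(-\bone)$, giving the reverse inequality $d(\overline{\C}(-\bone))\le 4$ and hence equality. The principal obstacle is verifying the algebraic cancellation in the second step: one must carefully expand $\bigl[(\mu+1)-\mu x_1\bigr]\bigl[(\mu+1)-\mu x_1^{q^{m/2}}\bigr]$ and confirm that, using $A-1=\mu$ with $A:=\mu+1$, everything consolidates into the clean factorization $\mu(\mu+1)(x_1-1)(x_1^{q^{m/2}}-1)$; once this is in hand, bijectivity of the Frobenius closes the argument.
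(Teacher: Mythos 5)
Your proof is correct, and for the central step it takes a genuinely different route from the paper. For $d(\C)\geq 3$ and for the final clause your arguments coincide with the paper's (the weight-$2$ exclusion via $\beta^{q^{m/2}+1}=\beta^2=\beta$ is exactly the paper's computation $c=c^{q^{m/2}+1}=c^2$, and the closing step is the standard fact $d(\overline{\C}(-\bone))\leq d(\C)+1$). The difference is in showing $d(\overline{\C}(-\bone))\geq 4$: the paper observes that a weight-$3$ codeword with zero coordinate sum would lie in the auxiliary cyclic code $\C'$ with generator polynomial $(x-1)\m_{\alpha}(x)\m_{\alpha^{q^{m/2}+1}}(x)$, whose zero set contains $\{\alpha^{i+q^{m/2}j}: i,j\in\{0,1\}\}$, and then invokes the Hartmann--Tzeng bound to get $d(\C')\geq 4$. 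You instead eliminate $c_3$ and $x_2$ directly and exploit that $\mu=c_1/c_2\in\gf(q)$ is fixed by the $q^{m/2}$-power Frobenius; the resulting identity does collapse to $\mu(\mu+1)(x_1-1)(x_1^{q^{m/2}}-1)=0$ (writing $A=\mu+1$, the left side of the second relation equals $A\mu x_1x_1^{\sigma}-A\mu(x_1+x_1^{\sigma})+A(A-1)=A\mu(x_1-1)(x_1^{\sigma}-1)$), and the conclusion follows since $\mu\notin\{0,-1\}$ and $x\mapsto x^{q^{m/2}}$ is injective. Your computation is self-contained and avoids citing the Hartmann--Tzeng bound, at the cost of an explicit algebraic verification; the paper's argument is shorter given the bound and makes transparent why the relevant zero pattern forces distance at least $4$. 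Both are valid under the stated hypotheses.
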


\begin{proof}
It is clear that $d(\C)\geq 2$. If $d(\C)=2$, then there are integer $i$ with $1\leq i\leq n-1$ and $c\in \gf(q)^*$ such that $x^i-c\in \C$. Consequently, 
\begin{align*}
c=\alpha^i
=\alpha^{(q^{m/2}+1)i} 
=c^{q^{m/2}+1} 
=c^2.
\end{align*}
It follows that $c=1$. Consequently, $n\mid i$, a contradiction. Therefore, $d(\C)\geq 3$.

It is clear that $d(\overline{\C})= d(\C)$ or $d(\C)+1$. Since $d(\C)\geq 3$, we get $d(\overline{\C}(-\bone))=3$ if and only if there are three pairwise distinct integers $j_1, j_2, j_3$ with $0\leq j_i\leq n-1$ and $c_1,c_2,c_3\in \gf(q)^*$ such that 
\begin{align*}
\begin{cases}
c_1+c_2+c_3=0, \\
c_1\alpha^{j_1}+c_2\alpha^{j_2}+c_3\alpha^{j_3}=0, \\
c_1\alpha^{ej_1}+c_2\alpha^{ej_2}+c_3\alpha^{ej_3}=0,
\end{cases}
\end{align*}
where $e=q^{m/2}+1$. It then follows that $c(x):=c_1 x^{j_1}+c_2x^{j_2}+c_3x^{j_3}$ is a codeword of the cyclic code $\C'$ of length $n$ over $\gf(q)$ with generator polynomial 
$$g(x):=(x-1)\m_{\alpha}(x) \m_{\alpha^e}(x).$$
Note that $g( \alpha^{i+q^{m/2}\cdot j})=0$ for any $i, j\in \{0,1\}$. By the Hartmann-Tzeng bound (see \cite[Theorem 4.5.6]{HP03}), we deduce that $d(\C')\geq 4$, a contradiction. Therefore, $d(\overline{\C}(-\bone))\geq 4$. In particular, if $d(\C)=3$, we have 
$$d(\overline{\C}(-\bone))\leq d(\C)+1=4.$$ 
Therefore, $d(\overline{\C}(-\bone))=4$. This completes the proof.  
\end{proof}

\begin{corollary}\label{OPM4:1}
Let $q$ be a prime power and $m\geq 2$ be even. Let $n=\frac{q^m-1}{e}$, where $e$ is a divisor of $q^m-1$ and $e<q^{m/4}$. Let $\C$ be the cyclic code of length $n$ over $\gf(q)$ in Theorem \ref{MIN4:21}. Then $\overline{\C}(-\bone)$ has parameters $\left[n+1, n-\frac{3m}2, 4\right]$, and is a distance-optimal code provided that 
\begin{itemize}
	\item $q=2$, $m \geq 6$ and $e<q^{\frac{m-4}4}$, or
	\item $q=3$ and $e^2< 2\cdot 3^{\frac{m}2-1}$, or
	\item $q\geq 4$.
\end{itemize}	
\end{corollary}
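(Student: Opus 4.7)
The plan is to first determine the parameters of $\overline{\C}(-\bone)$, then invoke Theorem \ref{MIN4:21} for the lower bound $d \geq 4$, and finally apply the sphere packing bound both to match the upper bound $d = 4$ and to establish distance-optimality.

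For the dimension, I would reason as follows. The bound $e < q^{m/4}$ gives $n > q^{3m/4} - 1$, which exceeds $q^{m'} - 1$ for every proper divisor $m'$ of $m$ (since $m' \leq m/2$), so $n \nmid q^{m'} - 1$ and hence $\ord_n(q) = m$. Therefore $|C^{(q,n)}_1| = m$ and $\deg(\m_\alpha(x)) = m$. For the second coset, since $(q^{m/2}+1) q^{m/2} \equiv q^{m/2}+1 \pmod{q^m-1}$, the size of $C^{(q,n)}_{q^{m/2}+1}$ divides $m/2$, and a direct analysis of $\gcd(n, q^{m/2}+1)$ using $e < q^{m/4}$ shows that $\alpha^{q^{m/2}+1}$ has order $(q^{m/2}-1)/\gcd(e, q^{m/2}-1)$, which is large enough to force the coset size to be exactly $m/2$. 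Since the two coset sizes differ, the cosets are automatically disjoint, and therefore $\dim(\C) = n - 3m/2 = \dim(\overline{\C}(-\bone))$.

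With the dimension established, the hypothesis $n > q^{m/2}+1$ of Theorem \ref{MIN4:21} holds because $n > q^{3m/4}-1 > q^{m/2}+1$ for $m \geq 2$ even, so I would conclude $d(\overline{\C}(-\bone)) \geq 4$.

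For the final step I would apply the sphere packing bound. If some $[n+1, n-3m/2, d']$ code over $\gf(q)$ with $d' \geq 5$ existed, then
$$
1 + (n+1)(q-1) + \binom{n+1}{2}(q-1)^2 \leq q^{3m/2+1}.
$$
Using the crude estimate $(n+1)n \geq (q^m-1)^2/e^2$, the second-order term alone gives
$$
(q^m-1)^2 (q-1)^2 \leq 2 e^2 q^{3m/2+1}.
$$
I would verify that each of the three hypotheses in the corollary contradicts this: for $q = 2$, substituting $(q-1)^2 = 1$ reduces it to $e \geq (2^m-1)/2^{3m/4+1}$, which is precisely violated by $e < 2^{(m-4)/4}$; for $q = 3$, direct substitution yields a contradiction with $e^2 < 2 \cdot 3^{m/2-1}$; and for $q \geq 4$, the standing hypothesis $e < q^{m/4}$ suffices because then $(q-1)^2 \geq 2q$ and so $e^2 < q^{m/2} \leq (q-1)^2 q^{m/2-1}/2$. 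The contradiction shows $d(\overline{\C}(-\bone)) \leq 4$, so $d(\overline{\C}(-\bone)) = 4$ and the code is distance-optimal.

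The main obstacle is the cyclotomic coset calculation guaranteeing $|C^{(q,n)}_{q^{m/2}+1}| = m/2$, which must rule out the possibility that $\alpha^{q^{m/2}+1}$ lies in a proper subfield $\gf(q^{m'})$ with $m' \mid m/2$ and $m' < m/2$; the remaining sphere packing estimates are routine once the bound on $(n+1)n$ is invoked, although the case $q = 2$ requires sharp constants to match the stated condition exactly.
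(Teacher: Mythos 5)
Your overall route is the same as the paper's: determine the dimension from the sizes of the cyclotomic cosets $C^{(q,n)}_1$ and $C^{(q,n)}_{q^{m/2}+1}$, quote Theorem \ref{MIN4:21} for $d\geq 4$, and use the sphere packing bound to rule out any $[n+1,\,n-\frac{3m}{2},\,d\geq 5]$ code. The dimension part is correct and is essentially the paper's computation rephrased: your observation that $\alpha^{q^{m/2}+1}$ has order $(q^{m/2}-1)/\gcd(e,q^{m/2}-1)>q^{m/4}-1$ is exactly the paper's deduction that a coset size $s\leq m/4$ would force $e(q^s-1)<q^{m/2}-1$, contradicting $(q^{m/2}-1)\mid e(q^s-1)$.

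The gap is in the sphere-packing step. By discarding the terms $1+(n+1)(q-1)$ and replacing $(n+1)n$ by $n^2=(q^m-1)^2/e^2$, you reduce to contradicting $(q^m-1)^2(q-1)^2\leq 2e^2q^{\frac{3m}{2}+1}$, but in each of your three cases the claimed contradiction evaporates into a true inequality. For $q\geq 4$, your chain $e^2<q^{m/2}\leq (q-1)^2q^{m/2-1}/2$ gives $2e^2q^{\frac{3m}{2}+1}<(q-1)^2q^{2m}$, and combining this with the assumed inequality yields only $(q^m-1)^2<q^{2m}$, which is not absurd. Concretely, for $q=4$, $m=2$, $e=1$ your estimates say the right-hand side is below $2304$ while the left-hand side equals $2025$, so nothing is refuted, even though the genuine sphere-packing sum $1129$ does exceed $4^4=256$. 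The same hairline slack appears for $q=2$ (your comparison reduces to $2^m\leq 2^m-1$) and for $q=3$ (it reduces to $3^{2m}\leq (3^m-1)^2$). The paper avoids this by keeping the full quadratic $(q-1)^2(q^m-1)^2+(q^2-1)(q^m-1)e-(2q^{\frac{3m}{2}+1}-2q)e^2\leq 0$ and showing that $N:=(q-1)^2(q^m-1)^2-(2q^{\frac{3m}{2}+1}-2q)e^2$ is positive; the positive terms you dropped (together with the integrality of $e$ in the $q=2$ case) are precisely what closes these borderline instances. Your argument is repairable, but as written none of the three case verifications actually produces the contradiction.
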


\begin{proof}
Since $e< q^{m/4}$, we deduce that $n>q^{m/2}+1$. It follows that $|C_1^{(3,n)}|=m$. Consequently, $\deg(\m_{\alpha}(x) )=m$. Suppose $|C_{q^{m/2}+1}^{(3,n)}|=s$, then $s\mid \frac{m}2$ and 
\begin{equation}\label{CCEQ:9}
	(q^{\frac{m}2}+1)(q^s-1)\equiv 0\pmod{n}.
\end{equation}
It follows from (\ref{CCEQ:9}) that $e (q^s-1)\equiv 0 \pmod{q^{m/2}-1}$. If $s\neq \frac{m}2$, we have $s\leq \frac{m}4$. Thereby,
\begin{align*}
	e(q^s-1) < q^{m/4} (q^{m/4}-1) 
	               <q^{m/2}-1,
\end{align*}
a contradiction. Therefore, $\deg(\m_{\alpha^{q^{m/2}+1}}(x) )=s=\frac{m}2$. Consequently, 
$$\dim(\overline{\C}(-\bone))=\dim(\C)=n-\frac{3m}2.$$

By Theorem \ref{MIN4:21}, $d(\overline{\C}(-\bone))\geq 4$. Suppose there is an $\left[n+1, n-\frac{3m}2, d\geq 5\right]$ code over $\gf(q)$. By the sphere packing bound, we get  
\begin{align*}
1+\binom{n+1}{1} (q-1)+\binom{n+1}{2} (q-1)^2\leq q^{\frac{3m}2+1}.	
\end{align*}
It follows that 
\begin{align*}
\frac{-(2q^{\frac{3m}2+1}-2q)e^2+(q^2-1)(q^m-1)e+(q-1)^2(q^{m}-1)^2}{2e^2}\leq 0.
\end{align*}
It then follows that
\begin{align*}
N:=(q-1)^2(q^{m}-1)^2-(2q^{\frac{3m}2+1}-2q)e^2<0.
\end{align*}
 
If $q=2$ and $e<q^{\frac{m-4}4}$, we have 
\begin{align*}
N &=(2^{\frac{m-4}2}-e^2)2^{\frac{3m}2+2}-2^{m+1}+4e^2+1\\
&> 2^{\frac{3m}2+2}-2^{m+1}\\
&> 0,
\end{align*}
a contradiction. If $q=3$ and $e^2< 2\cdot 3^{\frac{m}2-1}$, we have 
\begin{align*}
N &=(4\cdot 3^{\frac{m}2}-6e^2)3^{\frac{3m}2}-8\cdot 3^{m}+6e^2+4\\
&> 0,
\end{align*}
a contradiction. If $q\geq 4$, since $e<q^{m/4}$, we deduce that
\begin{align*}
N &=(q-1)^2(q^{m}-1)^2-(2q^{\frac{3m}2+1}-2q)(q^{\frac{m}4}-1)^2\\
&=(q^{\frac{m}4}-1)^2[(q-1)^2(1+q^{\frac{m}4}+q^{\frac{m}2}+q^{\frac{3m}4})^2-2q^{\frac{3m}2+1}+2q   ]\\
&> (q^{\frac{m}4}-1)^2[ (q-1)^2-2q] q^{\frac{3m}2}\\
&>0,
\end{align*}
a contradiction. Therefore, $d(\overline{\C}(-\bone))=4$ and $\overline{\C}(-\bone)$ is a distance-optimal code. 
This completes the proof. 	
\end{proof}

\begin{example}
We have the following examples of the code of Corollary \ref{OPM4:1}.
\begin{enumerate}
\item If $(q, m, e)=(2,6,1)$, the code $\overline{\C}(-\bone)$ has parameters $[64,54,4]$ and is distance-optimal.
\item If $(q, m, e)=(2,8,1)$, the code $\overline{\C}(-\bone)$ has parameters $[256,243,4]$ and is distance-optimal.
\item If $(q, m, e)=(3, 2,1)$, the code $\overline{\C}(-\bone)$ has parameters $[9,5,4]$ and is distance-optimal.
\item If $(q, m, e)=(3, 4,1)$, the code $\overline{\C}(-\bone)$ has parameters $[81,74,4]$ and is distance-optimal.  	
%\item If $(q, m, e)=(3, 4,2)$, the code $\overline{\C}(-\bone)$ has parameters $[41,34,4]$ and is distance-optimal.
%\item If $(q, m, e)=(4, 2, 1)$, the code $\overline{\C}(-\bone)$ has parameters $[16,12,4]$ and is distance-optimal.
%\item If $(q, m, e)=(4, 4, 1)$, the code $\overline{\C}(-\bone)$ has parameters $[256,249,4]$ and is distance-optimal.
\end{enumerate}	
\end{example}

It is known that the extended code $\overline{\C}(-\bone)$ of a narrow-sense BCH code $\C$ of length $q^m-1$ over $\gf(q)$
is an affine-invariant code (see \cite{BC99}). Consequently, $d(\overline{\C}(-\bone))=d(\C)+1$. In general, we have the following results.  

\begin{theorem}\label{NTHM}
	Let $n>2$ be an integer with $\gcd(q,n)=1$ and $m=\ord_{n}(q)$. Let $\alpha\in \gf(q^m)$ be a primitive $n$-th root of unity. Let $\delta$ be an integer with $2\leq \delta<n$. Let $\C$ be the cyclic code of length $n$ over $\gf(q)$ with generator polynomial 
	$$\lcm(\m_{\alpha}(x), \m_{\alpha^2}(x), \cdots,  \m_{\alpha^{\delta-1}}(x)),$$
	where $\m_{\alpha^i}(x)$ denotes the minimal polynomial of $\alpha^i$ over $\gf(q)$. Then the following hold.
	\begin{enumerate}
	\item $d(\overline{\C}(-\bone))\geq \delta+1$ and $d(\overline{\C}(-\bone))=\delta+1$ provided that $d(\C)=\delta$.
	\item If $q^i\equiv -1\pmod{n}$ for some integer $i$ and $d(\C)\leq 2\delta-1$, then $$d(\overline{\C}(-\bone))=d(\C)+1.$$	
	\end{enumerate}
\end{theorem}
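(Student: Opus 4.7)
The plan is to reduce the study of $\overline{\C}(-\bone)$ to two BCH-type estimates, one on $\C$ and one on an auxiliary cyclic subcode. Writing an extended codeword as $(\bc,-c(1))$ where $c(x)=\sum_i c_i x^i$ is the polynomial attached to $\bc\in\C$, its weight is $\wt(\bc)$ when $c(1)=0$ and $\wt(\bc)+1$ otherwise. Introduce the cyclic subcode
$\C^{\sharp}:=\{\bc\in\C:\ c(1)=0\}$, which has generator polynomial $(x-1)\cdot\lcm(\m_{\alpha}(x),\m_{\alpha^2}(x),\ldots,\m_{\alpha^{\delta-1}}(x))$. The above dichotomy immediately yields the key inequality
$d(\overline{\C}(-\bone))\geq \min\{d(\C)+1,\,d(\C^{\sharp})\}$.

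For Statement 1, I will invoke Lemma \ref{lem-BCHbound} twice. The zeros of $\C$ include the $\delta-1$ consecutive powers $\alpha,\alpha^2,\ldots,\alpha^{\delta-1}$, so $d(\C)\geq \delta$. The zeros of $\C^{\sharp}$ include the $\delta$ consecutive powers $\alpha^0,\alpha^1,\ldots,\alpha^{\delta-1}$, so $d(\C^{\sharp})\geq \delta+1$. Plugging into the key inequality gives $d(\overline{\C}(-\bone))\geq \delta+1$. If moreover $d(\C)=\delta$, then the standard extension bound $d(\overline{\C}(-\bone))\leq d(\C)+1=\delta+1$ forces equality.

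For Statement 2, I will upgrade the bound on $d(\C^{\sharp})$ using the hypothesis $q^i\equiv -1\pmod n$. This congruence forces every $q$-cyclotomic coset modulo $n$ to be closed under negation: if $s\in C_j^{(q,n)}$, then $-s\equiv s q^i\pmod n$ also lies in that coset. Hence the zeros of $\C^{\sharp}$ also contain $\alpha^{-1},\alpha^{-2},\ldots,\alpha^{-(\delta-1)}$, and together with the previous zeros these form a single arc of $2\delta-1$ consecutive powers of $\alpha$; the BCH bound then upgrades to $d(\C^{\sharp})\geq 2\delta$. Since by hypothesis $d(\C)\leq 2\delta-1<d(\C^{\sharp})$, the key inequality yields $d(\overline{\C}(-\bone))\geq d(\C)+1$, and the reverse inequality is again automatic from the standard extension bound, giving $d(\overline{\C}(-\bone))=d(\C)+1$.

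The only subtle point is a small edge case when $2\delta-1\geq n$: the consecutive arc then wraps around $\Z_n$ and covers every $n$-th root of unity, forcing $\C^{\sharp}=\{\bzero\}$, so the stated BCH estimate $d(\C^{\sharp})\geq 2\delta$ is vacuous. In this degenerate case no nonzero $\bc\in\C$ satisfies $c(1)=0$, so every minimum-weight codeword of $\C$ extends to weight $d(\C)+1$ and the conclusion still holds. Apart from this bookkeeping and the orbit-closure observation on the cyclotomic cosets, the proof is a clean application of Lemma \ref{lem-BCHbound} combined with the weight dichotomy $\wt(\overline{\bc})\in\{\wt(\bc),\wt(\bc)+1\}$.
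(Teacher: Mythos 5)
Your proposal is correct and follows essentially the same route as the paper: the auxiliary code $\C^{\sharp}$ is exactly the paper's expurgated code $\C'$ with generator polynomial $(x-1)\lcm(\m_{\alpha}(x),\ldots,\m_{\alpha^{\delta-1}}(x))$, and both arguments rest on the BCH bound applied to the runs of consecutive zeros $\alpha^0,\ldots,\alpha^{\delta-1}$ (resp. $\alpha^{-(\delta-1)},\ldots,\alpha^{\delta-1}$ via negation-closure of the cyclotomic cosets when $q^i\equiv-1\pmod n$), combined with the dichotomy $\wt(\overline{\bc})\in\{\wt(\bc),\wt(\bc)+1\}$. The only differences are cosmetic (your direct $\min$-inequality versus the paper's proof by contradiction, plus your explicit treatment of the wrap-around case $2\delta-1\geq n$, which the paper leaves implicit).
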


\begin{proof}
1. By the BCH bound, we have $d(\C)\geq \delta$. Consequently, $d(\overline{\C}(-\bone))=\delta$ if and only if $d(\C)=\delta$ and there is a codeword $c(x)=\sum_{i=1}^{\delta}c_i x^{j_i}\in \C$ with $\sum_{i=1}^{\delta}c_i=0$, where $c_i\in \gf(q)^*$, $j_a\neq j_b$ for $a\neq b$ and $0\leq j_i\leq n-1$. It is easily verified that $c(x)$ is a codeword of the cyclic code $\C'$ of length $n$ over $\gf(q)$ with generator polynomial 
	$$g'(x):=\lcm((x-1),\m_{\alpha}(x), \m_{\alpha^2}(x), \cdots,  \m_{\alpha^{\delta-1}}(x)).$$
	Since $\wt(c(x))=\delta$, we have $d(\C')\leq \delta$. By the BCH bound, $d(\C')\geq \delta+1$, a contradiction. Therefore, $d(\overline{\C}(-\bone))\geq \delta+1$. In particular, if $d(\C)=\delta$, then $$d(\overline{\C}(-\bone))\leq d(\C)+1=\delta+1.$$ Consequently, $d(\overline{\C}(-\bone))=\delta+1$. 
	
2. Suppose on the contrary that $d(\overline{\C}(-\bone))=d(\C)$. By the proof of Result 1, we obtain $d(\C')\leq d(\C)$. Since $q^i\equiv -1\pmod{n}$ for some integer $i$, we deduce that $g'(\alpha^i)=0$ for any $-(\delta-1)\leq i\leq \delta-1$. By the BCH bound, $d(\C')\geq 2\delta\geq d(\C)+1$, a contradiction. This completes the proof. 
	 \end{proof}
	 
\begin{corollary}\label{NCOR1}
Let $q=2^s$ with $s\geq 2$. Let $n=q^m+1$, where $m$ is a positive integer.	Let $\alpha\in \gf(q^{2m})$ be a primitive $n$-th root of unity and $\m_{\alpha}(x)$ denote the minimal polynomial of $\alpha$ over $\gf(q)$. Let $\C$ be the cyclic code of length $n$ over $\gf(q)$ with generator polynomial $\m_{\alpha}(x)$. Then $\overline{\C}(-\bone)$ has parameters $[n+1, n-2m, 4]$, and is distance-optimal.
\end{corollary}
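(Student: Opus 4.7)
The plan is to apply Theorem \ref{NTHM} with $\delta=2$ to secure the lower bound $d(\overline{\C}(-\bone))\geq 3$, then upgrade this bound to $4$ via a BCH argument on an enlarged code, match the upper bound by an explicit pigeonhole construction of a weight-$4$ codeword, and finally invoke the sphere-packing bound for distance-optimality.

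First I would verify $|C_{1}^{(q,n)}|=2m$, which gives $\deg \m_{\alpha}=2m$ and hence $\dim \C = n-2m$. Since $\alpha^{q^m}=\alpha^{-1}$, the order of $q$ modulo $n$ divides $2m$ but not $m$. Any strictly smaller valid value would take the form $2k$ with $k\mid m$ and $k<m$, which forces $n=q^m+1$ to divide $q^{2k}-1$; writing $r=q^k$, this reads $r^{m/k}+1\mid r^2-1$, impossible because $r^{m/k}+1\geq r^2+1>r^2-1$ when $m/k\geq 2$ and $r\geq 2$. Hence the order is exactly $2m$.

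Next I would establish $d(\overline{\C}(-\bone))\geq 4$. A weight-$2$ codeword $ax^i+bx^j\in\C$ satisfies $a\alpha^i+b\alpha^j=0$, and applying the Frobenius $x\mapsto x^{q^m}$ (which sends $\alpha$ to $\alpha^{-1}$) also forces $a\alpha^{-i}+b\alpha^{-j}=0$, yielding $\alpha^{2(i-j)}=1$; since $n$ is odd, $n\mid i-j$, contradicting $0<|i-j|<n$. Thus $d(\C)\geq 3$. A weight-$3$ codeword of $\overline{\C}(-\bone)$ arises either from a weight-$2$ codeword of $\C$ (impossible) or from a weight-$3$ codeword of the subcode $\tilde{\C}$ with generator polynomial $(x-1)\m_{\alpha}(x)$, whose defining set contains the three consecutive elements $\{-1,0,1\}$, so that the BCH bound gives $d(\tilde{\C})\geq 4$. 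Either way, no weight-$3$ codeword exists, so $d(\overline{\C}(-\bone))\geq 4$.

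The key technical step is to build a weight-$4$ codeword in $\tilde{\C}$ to match this bound. Define $f(j)=\alpha^j+\alpha^{-j}$ for $j\in\{1,\ldots,(n-1)/2\}$. Each $f(j)$ lies in $\gf(q^m)^*$ (it is fixed by the Frobenius $x\mapsto x^{q^m}$, and is nonzero because $n$ is odd), and $f$ is injective on its domain since $\alpha^{\pm j}$ are the two roots of $x^2+f(j)x+1$, so $f(j_1)=f(j_2)$ forces $\{j_1,-j_1\}=\{j_2,-j_2\}$. The image therefore has $(n-1)/2=q^m/2$ elements, while $\gf(q^m)^*/\gf(q)^*$ has $(q^m-1)/(q-1)$ elements; the inequality $q^m/2>(q^m-1)/(q-1)$ holds whenever $q\geq 3$, hence under the hypothesis $s\geq 2$. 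Pigeonhole then supplies $j_1\neq j_2$ with $f(j_1)=\lambda f(j_2)$ for some $\lambda\in\gf(q)^*\setminus\{1\}$, and the polynomial
\[
c(x)=x^{j_1}+x^{n-j_1}+\lambda x^{j_2}+\lambda x^{n-j_2}
\]
has four pairwise distinct monomials, coefficient sum $1+1+\lambda+\lambda=0$, and $c(\alpha)=f(j_1)+\lambda f(j_2)=0$ in characteristic $2$. So $c\in\tilde{\C}$ has weight $4$, and extending it by the zero parity symbol gives a weight-$4$ codeword of $\overline{\C}(-\bone)$, forcing equality.

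Finally, distance-optimality follows from the sphere-packing bound: a $[n+1,n-2m,5]_q$ code would satisfy $1+(q^m+2)(q-1)+\binom{q^m+2}{2}(q-1)^2\leq q^{2m+1}$, but the quadratic term alone is at least $q^{2m}(q-1)^2/2$, which exceeds $q^{2m+1}$ whenever $(q-1)^2>2q$, i.e.\ whenever $q\geq 4$---again provided by $s\geq 2$. The main obstacle is the weight-$4$ construction, which leverages both the odd-$n$ parity that makes $f$ well-behaved and the characteristic-$2$ cancellation $f(j_1)+\lambda f(j_2)=2\lambda f(j_2)=0$; without $s\geq 2$ the pigeonhole inequality degenerates and one would need a different argument.
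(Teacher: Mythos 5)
Your proof is correct, and it follows the paper's skeleton for the lower bound and for optimality: you show $d(\C)\geq 3$ by a Frobenius/parity argument (the paper uses $\alpha^{(q-1)i}=1$ together with $\gcd(q-1,n)=1$, but both work), you get $d(\overline{\C}(-\bone))\geq 4$ by passing to the code with generator polynomial $(x-1)\m_{\alpha}(x)$ and applying the BCH bound to the consecutive roots $\alpha^{-1},\alpha^{0},\alpha^{1}$ --- which is exactly the mechanism inside Theorem \ref{NTHM}(2) that the paper invokes as a black box --- and you close with the same sphere-packing computation. Where you genuinely diverge is the upper bound $d(\overline{\C}(-\bone))\leq 4$: the paper obtains it nonconstructively, as a byproduct of the sphere-packing bound ruling out any $[n+1,n-2m,\geq 5]$ code, whereas you exhibit an explicit weight-$4$ codeword $x^{j_1}+x^{n-j_1}+\lambda x^{j_2}+\lambda x^{n-j_2}$ via a pigeonhole argument on the map $f(j)=\alpha^{j}+\alpha^{-j}$ into $\gf(q^m)^*/\gf(q)^*$. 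This step is logically redundant here (sphere packing already forces $d\leq 4$), but it is a correct and pleasant constructive addition that would survive in settings where the packing bound is not tight; it is also where your hypothesis $q\geq 4$ enters a second time, independently of the packing estimate. You additionally prove the dimension claim $|C_1^{(q,n)}|=2m$, which the paper merely asserts; your divisor argument is sound. I see no gaps.
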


\begin{proof}
It is clear that $d(\C)\geq 2$. If $d(\C)=2$, then there is an integer $i$ with $1\leq i\leq n-1$ such that $\alpha^{(q-1) i}=1$. Since $\gcd(q-1, n)=1$, we get $n\mid i$, a contradiction. Therefore, $d(\C)\geq 3$. If $d(\C)=3$, by Result 2 of Theorem \ref{NTHM}, we have $d(\overline{\C}(-\bone))=4$. If $d(\C)=4$, we have $d(\overline{\C}(-\bone))\geq d(\C)=4$.

It is clear that $\C$ has dimension $n-2m$. Then $\dim(\overline{\C}(-\bone))=n-2m$. Suppose there is an $\left[n+1, n-2m, d\geq 5\right]$ code over $\gf(q)$. By the sphere packing bound, we get  
\begin{align*}
1+\binom{n+1}{1} (q-1)+\binom{n+1}{2} (q-1)^2\leq q^{2m+1}.	
\end{align*}
Note that
\begin{align*}
\binom{n+1}{2} (q-1)^2&> \frac{(q-1)^2 q^{2m}	}{2} \\
&> q^{2m+1},
\end{align*}
a contradiction. Therefore, $d(\overline{\C}(-\bone))=4$ and $\overline{\C}(-\bone)$ is a distance-optimal code. This completes the proof.
\end{proof}

\begin{corollary}\label{NCOR2}
Let $q$ be a prime power, $m$ be a positive integer and $q^m\equiv 1 \pmod{4}$. Let $n=\frac{q^m+1}2$, $\alpha\in \gf(q^{2m})$ be a primitive $n$-th root of unity and $\m_{\alpha}(x)$ denote the minimal polynomial of $\alpha$ over $\gf(q)$. Let $\C$ be the cyclic code of length $n$ over $\gf(q)$ with generator polynomial $\m_{\alpha}(x)$. Then the following hold.
\begin{enumerate}
\item If $q\geq 11$, the code $\overline{\C}(-\bone)$ has parameters $[n+1, n-2m, 4]$, and is distance-optimal.	
\item If $q\in \{ 5, 7, 9\}$, the code $\overline{\C}(-\bone)$ has parameters $[n+1, n-2m, 4\leq d\leq 5]$ and $d=4$ provided that $n\equiv 0\pmod{3}$.
\item If $q=3$, the code $\overline{\C}(-\bone)$ has parameters $[n+1, n-2m, 5\leq d\leq 6]$, and $d=5$ provided that $m\equiv 4 \pmod{8}$ or $m\equiv 6 \pmod{12}$.
\end{enumerate}
\end{corollary}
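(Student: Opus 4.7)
The plan is to treat the three parts by a common four-step template: compute the dimension, establish the generic lower bound on $d(\overline{\C}(-\bone))$, derive the matching upper bound via sphere-packing or the Griesmer bound, and for the ``provided'' clauses exhibit an explicit low-weight codeword.

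First, the setup. From $2n=q^m+1$ we have $q^m\equiv -1\pmod n$, so $\ord_n(q)$ divides $2m$ but does not divide $m$. A short case analysis (using $n>2$ and $q$ odd) rules out every proper divisor of $2m$ not dividing $m$, giving $\ord_n(q)=2m$ and hence $\dim\overline{\C}(-\bone)=\dim\C=n-2m$. The hypothesis $q^m\equiv 1\pmod 4$ forces $n$ odd, and combined with $q^m\equiv 1\pmod{q-1}$ (so $2n\equiv 2\pmod{q-1}$) this yields $\gcd(n,q-1)=1$. Therefore no $\alpha^i$ with $1\leq i\leq n-1$ lies in $\gf(q)^*$, which forbids weight-$2$ codewords and gives $d(\C)\geq 3$. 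Since $q^m\equiv -1\pmod n$, Theorem~\ref{NTHM}(2) with $\delta=2$ upgrades this to $d(\overline{\C}(-\bone))=d(\C)+1$ as soon as $d(\C)=3$, while $d(\overline{\C}(-\bone))\geq d(\C)$ holds in general; either way $d(\overline{\C}(-\bone))\geq 4$ for every $q\geq 5$.

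For the matching upper bound in part~(1), I would apply the sphere-packing bound to a putative $[n+1,n-2m,5]$ code. Using $n+1>q^m/2$, the inequality $1+(n+1)(q-1)+\binom{n+1}{2}(q-1)^2>q^{2m+1}$ reduces to $(q-1)^2>8q$, which holds for every prime power $q\geq 10$; hence $d(\overline{\C}(-\bone))\leq 4$ and distance-optimality for $q\geq 11$. For part~(2), sphere-packing is no longer sharp enough in all subcases, so I would rule out $[n+1,n-2m,6]$ over $\gf(q)$ by the Griesmer bound for each $q\in\{5,7,9\}$, handling the finitely many small $(q,m)$ outside Griesmer's range by direct arguments (for example by the Singleton bound when $n-2m$ is small). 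The same Griesmer-plus-direct pattern gives $d(\overline{\C}(-\bone))\leq 6$ in part~(3).

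The two remaining ingredients are the lower bound $d\geq 5$ in part~(3) and the explicit low-weight codewords for the ``provided'' clauses. For the lower bound in part~(3) I must rule out both weight-$\leq 3$ codewords of $\C$ and weight-$4$ codewords of $\C$ with zero coordinate sum, since the BCH bound on the augmented code $\langle(x-1)\m_\alpha(x)\rangle$ only delivers $d(\C')\geq 4$. A finer Galois-invariance analysis is required: a putative weight-$3$ codeword would give $x,y\in\langle\alpha\rangle\setminus\{1\}$ with $\pm x\pm y\in\gf(3)^*$, and applying the Frobenius $t\mapsto t^3$ to this identity, together with $\gcd(n,8)=1$, should rule these out; a similar (but more involved) analysis treats the zero-sum weight-$4$ case. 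This is the single most delicate step in the proof. For the ``provided'' clauses: if $3\mid n$ then $\alpha^{n/3}$ is a primitive cube root of unity in $\gf(q^{2m})$, so $c(x)=1+x^{n/3}+x^{2n/3}$ is a weight-$3$ codeword of $\C$; since $q\in\{5,7\}$ (the case $q=9$ is vacuous because one checks $3\nmid n$ always there), combining this with Theorem~\ref{NTHM}(2) forces $d(\overline{\C}(-\bone))=4$. For the $q=3$ clauses $m\equiv 4\pmod 8$ or $m\equiv 6\pmod{12}$, I expect the congruence to produce a small-order root of unity in $\langle\alpha\rangle$ (or a short algebraic relation coming from a proper subfield of $\gf(3^{2m})$) that yields an explicit weight-$4$ codeword of $\C$ with nonzero coordinate sum; constructing this codeword is the other main obstacle.
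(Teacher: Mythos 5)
Your setup, dimension computation, and Part 1 all match the paper, but Parts 2 and 3 contain genuine gaps. The most concrete one is the upper bound: the Griesmer bound cannot rule out an $[n+1,n-2m,6]$ code over $\gf(q)$. For these high-rate parameters Griesmer requires only $n+1\geq (n-2m)+6+\lceil 6/q\rceil-1$, i.e.\ roughly $2m\geq 5$, which is satisfied for all relevant $m$, so no contradiction ever arises; the same failure occurs for $[n+1,n-2m,7]$ over $\gf(3)$. The paper's device is different and is the step you are missing: it first punctures the hypothetical $[n+1,n-2m,\geq 6]$ code to an $[n,n-2m,\geq 5]$ code (via \cite[Theorem 1.5.1]{HP03}) and applies the sphere packing bound to \emph{that}, which turns the needed inequality from $(q-1)^2>8q$ into $(q-1)^2>8$ and therefore works for all $q\in\{5,7,9\}$; for $q=3$ it applies sphere packing directly to $d\geq 7$, where the weight-three term $\binom{n+1}{3}2^3\sim \tfrac{4n^3}{3}$ dominates $3^{2m+1}$.

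The second gap concerns $q=3$. The lower bound $d\geq 5$ needs none of the delicate Galois-invariance analysis you describe: since $3^m\equiv -1\pmod n$, the $3$-cyclotomic coset of $1$ modulo $n$ contains $\{1,3,-1,-3\}$, so $\alpha^{-3},\alpha^{-1},\alpha,\alpha^{3}$ are zeros of $\m_{\alpha}(x)$, four consecutive terms of an arithmetic progression with common difference $2$ coprime to $n$, and the BCH (Hartmann--Tzeng) bound gives $d(\C)\geq 5$ outright, hence $d(\overline{\C}(-\bone))\geq 5$. This also shows that your plan for the ``provided'' clause cannot work as stated: because $d(\C)\geq 5$ there is \emph{no} weight-$4$ codeword of $\C$ with any coordinate sum, so what must be exhibited is a weight-$5$ codeword of $\C$ with zero coordinate sum. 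The paper obtains it by observing that $m\equiv 4\pmod 8$ forces $41\mid n$ (since $\ord_{41}(3)=8$) and $m\equiv 6\pmod{12}$ forces $365\mid n$, verifying computationally that the analogous extended code of length $41$ (resp.\ $365$) has minimum distance $5$, and lifting the resulting zero-sum weight-$5$ codeword to length $n$ by dilating its exponents by $n/41$ (resp.\ $n/365$). Nothing in your proposal supplies this construction. Your Part 2 codeword $1+x^{n/3}+x^{2n/3}$ when $3\mid n$, and the observation that $3\nmid n$ for $q=9$, are correct and agree with the paper.
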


\begin{proof}
Since $q^m\equiv 1 \pmod{4}$, we deduce that $n$ is odd and $\gcd(n, q-1)=1$. Similar to the proof of Corollary \ref{NCOR1}, we can prove that $d(\C)\geq 3$ and $d(\overline{\C}(-\bone))\geq 4$. It is clear that $\dim(\overline{\C}(-\bone))=n-2m$. 

1. Let $q\geq 11$. If there is an $\left[n+1, n-2m, d\geq 5\right]$ code over $\gf(q)$.  By the sphere packing bound, we get 
\begin{align*}
1+\binom{n+1}{1} (q-1)+\binom{n+1}{2} (q-1)^2\leq q^{2m+1}.	
\end{align*}
Note that
\begin{align*}
\binom{n+1}{2} (q-1)^2&> \frac{(q-1)^2 q^{2m}}{8} \\
&> q^{2m+1},
\end{align*}
a contradiction. Therefore, $d(\overline{\C}(-\bone))=4$ and $\overline{\C}(-\bone)$ is a distance-optimal code.

2. Let $q>3$. If there is an $[n+1, n-2m, d\geq 6]$ code over $\gf(q)$. By \cite[Theorem 1.5.1]{HP03}, there exists an $[n, n-2m, d\geq 5]$ code over $\gf(q)$. By the sphere packing bound, we get 
\begin{align*}
1+\binom{n}{1} (q-1)+\binom{n}{2} (q-1)^2\leq q^{2m}.	
\end{align*}
Note that
\begin{align*}
1+\binom{n}{1} (q-1)+\binom{n}{2} (q-1)^2&>\frac{(q-1)^2 q^{2m}}{8} \\
&> q^{2m},
\end{align*}
a contradiction. Therefore, $d(\overline{\C}(-\bone))\leq 5$. In particular, if $n\equiv 0\pmod{3}$, we have $1+\alpha^{\frac{n}3}+\alpha^{\frac{2n}3}=0$. Therefore, $d(\C)=3$. By  Theorem \ref{NTHM}, we get 
$$d(\overline{\C}(-\bone))=d(\C)+1=4.$$ 

3. Let $q=3$. It is easily verified that $\alpha^{1+2i}$ is a zero of $\m_{\alpha}(x)$ for each $-2\leq i\leq 1$. Since $\gcd(2, n)=1$, we get $d(\C)\geq 5$. Consequently, $d(\overline{\C}(-\bone))\geq 5$. By the sphere packing bound, a linear code over $\gf(q)$ with parameters $[n+1,n-2m, d\geq 7]$ does not exist. Therefore, $5\leq d(\overline{\C}(-\bone))\leq  6$. The rest of the proof is divided into the following cases.
\begin{enumerate}
\item[Case 1.] Let $m\equiv 4\pmod{8}$. It is easily verified that $\ord_{41}(3)=8$. Therefore, $3^{m}+1\equiv 3^4+1\equiv 0\pmod{41}$. Consequently, $41$ divides $n$. 	Let $\beta=\alpha^{n/41}$, then $\beta\in \gf(3^8)$ is a primitive $41$-th root of unity. Let $\C'$ be the cyclic code of length $41$ over $\gf(3)$ with generator polynomial $\m_{\beta}(x)$. It is easily verified that $d(\overline{\C'}(-\bone))=5$. Then there are five pairwise distinct integers $j_1,j_2,j_3,j_4, j_5$ with $0\leq j_i\leq 40$ and $c_i\in \gf(3)^*$ such that $\sum_{i=1}^5 c_i x^{j_i}\in \C'$ and $\sum_{i=1}^{5}c_i=0$. It then follows that 
$$\sum_{i=1}^5 c_i x^{(\frac{n}{41})j_i}\in \C. $$
Consequently, $d(\overline{\C}(-\bone))=5$.
\item[Case 2.] Let $m\equiv 6\pmod{12}$. It is easily checked that $\ord_{365}(3)=12$. Therefore, $3^{m}+1\equiv 3^6+1\equiv 0\pmod{365}$. Consequently, $365$ divides $n$. Let $\beta=\alpha^{n/365}$, then $\beta\in \gf(3^{12})$ is a primitive $365$-th root of unity. Let $\C'$ be the cyclic code of length $365$ over $\gf(3)$ with generator polynomial $\m_{\beta}(x)$. It is easily verified that $d(\overline{\C'}(-\bone))=5$. Then there are five pairwise distinct integers $j_1,j_2,j_3,j_4, j_5$ with $0\leq j_i\leq 364$ and $c_i\in \gf(3)^*$ such that $\sum_{i=1}^5 c_i x^{j_i}\in \C'$ and $\sum_{i=1}^{5}c_i=0$. It then follows that 
$$\sum_{i=1}^5 c_i x^{(\frac{n}{365})j_i}\in \C. $$
Consequently, $d(\overline{\C}(-\bone))=5$. 
\end{enumerate}
This completes the proof.	
\end{proof}

\begin{example}
We have the following examples of the code of Corollary \ref{NCOR2}.
\begin{enumerate}
\item If $(q, m)=(3,4)$, the code $\overline{\C}(-\bone)$ has parameters $[42,33,5]$, which are the best parameters known \cite{Grassl}.
\item If $(q, m)=(5,2)$, the code $\overline{\C}(-\bone)$ has parameters $[14,9,4]$ and is distance-optimal \cite{Grassl}.
%\item If $(q, m)=(5,3)$, the code $\overline{\C}(-\bone)$ has parameters $[64,57,4]$, which are the best parameters known \cite{Grassl}.
%\item If $(q, m)=(7,2)$, the code $\overline{\C}(-\bone)$ has parameters $[26,21,4]$, which are the best parameters known \cite{Grassl}.	
%\item If $(q, m)=(9,2)$, the code $\overline{\C}(-\bone)$ has parameters $[42,37,4]$, which are the best parameters known \cite{Grassl}.
\end{enumerate}
\end{example}
 
The parameters of the standardly extended code of the narrow-sense BCH code of length $n$ over $\gf(3)$ with distanced distance $4$ are documented in the next theorems.

\begin{theorem}\label{CMIM4}
	Let $n\geq 4$ be an integer with $\gcd(3,n)=1$ and $m=\ord_{n}(3)$. Let $\alpha\in \gf(3^m)$ be a primitive $n$-th root of unity. Let $\C$ be the cyclic code of length $n$ over $\gf(3)$ with generator polynomial $\lcm(\m_{\alpha}(x), \m_{\alpha^2}(x))$, where $\m_{\alpha^i}(x)$ denotes the minimal polynomial of $\alpha^i$ over $\gf(3)$. Then the following hold.
	\begin{enumerate}
	\item $d(\C)\geq 4$ and $d(\C)=4$ for $n\equiv 0\pmod{4}$.
	\item $d(\overline{\C}(-\bone))\geq 5$ and $d(\overline{\C}(-\bone))=5$ provided that $d(\C)=4$.	
	\end{enumerate}
\end{theorem}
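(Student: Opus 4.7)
The plan is to handle the two statements of the theorem in order, reducing each to the BCH bound; one explicit codeword construction is needed for the upper bound in Part 1.

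For Part 1, I would first observe that because $g(x)=\lcm(\m_{\alpha}(x),\m_{\alpha^{2}}(x))$ has coefficients in $\gf(3)$, the Frobenius automorphism $x\mapsto x^{3}$ forces $\alpha^{3}$ to be a zero of $\m_{\alpha}(x)$, and hence of $g(x)$. Thus the zero set of $g(x)$ contains the three consecutive powers $\alpha,\alpha^{2},\alpha^{3}$, and Lemma \ref{lem-BCHbound} (applied with $\lambda=1$, $r=1$) yields $d(\C)\geq 4$. When $n\equiv 0\pmod{4}$, I would put $k=n/4$ and consider $c(x)=1+x^{k}+x^{2k}+x^{3k}$. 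With $\gamma=\alpha^{k}$ a primitive fourth root of unity, for each $i\in\{1,2\}$ the value $c(\alpha^{i})=1+\gamma^{i}+\gamma^{2i}+\gamma^{3i}$ is a geometric sum that vanishes because $\gamma^{i}\neq 1$ and $(\gamma^{i})^{4}=1$. Since the four exponents $0,k,2k,3k$ are pairwise distinct modulo $n$, I obtain $c(x)\in\C$ with $\wt(c)=4$, and combined with the lower bound this gives $d(\C)=4$.

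For Part 2, I would run the standard extended-cyclic-code argument, exploiting the characteristic-$3$ jump a second time. Suppose for contradiction that some nonzero codeword $\bar{c}=(c_{1},\ldots,c_{n},c_{n+1})\in\overline{\C}(-\bone)$ has weight at most $4$. Writing $c_{n+1}=-\sum_{i=1}^{n}c_{i}$ and noting $(c_{1},\ldots,c_{n})\in\C$ has weight $\geq d(\C)\geq 4$ whenever it is nonzero, the only possibility is $c_{n+1}=0$, so $\sum_{i=1}^{n}c_{i}=0$ and the underlying polynomial $c(x)=\sum c_{i}x^{j_{i}}$ has weight exactly $4$. The identity $c(1)=0$ then places $c(x)$ in the cyclic code $\C'$ of length $n$ over $\gf(3)$ with generator $(x-1)g(x)$. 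The zero set of $(x-1)g(x)$ contains $\alpha^{0},\alpha^{1},\alpha^{2},\alpha^{3}$ (using Frobenius on $\alpha$ once more), four consecutive powers, so the BCH bound gives $d(\C')\geq 5$, contradicting $\wt(c)=4$. Hence $d(\overline{\C}(-\bone))\geq 5$, and the bound $d(\overline{\C}(-\bone))\leq d(\C)+1=5$ under the assumption $d(\C)=4$ finishes Part 2.

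The entire argument rests on the observation that in characteristic $3$ the defining zero set $\{\alpha,\alpha^{2}\}$ silently acquires $\alpha^{3}$, producing consecutive runs of length $3$ for $\C$ and length $4$ for the augmented code $\C'$; no more delicate tool than the BCH bound is required. The only minor technical point is that the four exponents $0,k,2k,3k$ in the explicit weight-$4$ codeword of Part 1 are distinct modulo $n$, which is immediate from $n=4k\geq 4$.
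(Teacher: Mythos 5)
Your proof is correct and follows essentially the same route as the paper: the paper observes that $\C$ is the narrow-sense BCH code of designed distance $4$ (since $\m_{\alpha^3}=\m_{\alpha}$ in characteristic $3$) and then invokes Theorem \ref{NTHM} for Part 2, whose proof is exactly your argument with the auxiliary cyclic code generated by $(x-1)g(x)$ and the BCH bound applied to the consecutive zeros $\alpha^0,\alpha^1,\alpha^2,\alpha^3$. The only divergence is that for $d(\C)=4$ when $4\mid n$ the paper cites an external lemma, whereas you exhibit the explicit weight-$4$ codeword $1+x^{n/4}+x^{n/2}+x^{3n/4}$, which is a correct, self-contained replacement.
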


\begin{proof}
It is clear that $\C$ is the narrow-sense BCH code of length $n$ over $\gf(3)$ with designed distance $4$. If $n\equiv 0\pmod{4}$, by \cite[Lemma 19]{LDL}, we deduce $d(\C)=4$. The second desired result follows directly from Theorem \ref{NTHM}. This completes the proof.
\end{proof}

\begin{corollary}\label{CEXT20}
Let $n=\frac{3^m-1}2$, where $m\geq 3$. Let $\C$ be the cyclic code of length $n$ over $\gf(3)$ in Theorem \ref{CMIM4}. Then the following hold.
\begin{enumerate}
\item The code $\C$ has parameters $[n, n-2m, 4]$.
\item The code $\overline{\C}(-\bone)$ has parameters $[n+1, n-2m, 5]$. It is both a dimension-almost-optimal code and a distance-almost-optimal code with respect to the sphere packing bound.	
\end{enumerate}
\end{corollary}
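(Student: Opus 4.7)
The plan is to establish the parameters of $\C$ first and then transfer them to $\overline{\C}(-\bone)$ via Theorem~\ref{CMIM4}, closing with the sphere-packing bound to confirm almost-optimality.

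For the dimension I would compute the sizes of $C_1^{(3,n)}$ and $C_2^{(3,n)}$. Since $3^m-1=2n$, we have $3^m\equiv 1\pmod{n}$, so $\ord_n(3)\mid m$; a proper divisor $k<m$ with $n\mid 3^k-1$ would force $3^k\geq n+1=(3^m+1)/2$, impossible for $k\leq m-1$. Hence $\ord_n(3)=m$ and $|C_1^{(3,n)}|=m$. A similar telescoping handles $|C_2^{(3,n)}|$: if $m$ is odd then $n$ is odd and $|C_2^{(3,n)}|=\ord_n(3)=m$, while if $m$ is even the size reduces to $\ord_{n/2}(3)$, which is again $m$ by the same estimate. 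Disjointness $C_1^{(3,n)}\cap C_2^{(3,n)}=\emptyset$ follows from noting that $2\equiv 3^j\pmod n$ would force $3^j\in\{2,\,n+2\}$, both ruled out by a size comparison for $m\geq 3$. Therefore $\deg g=2m$ and $\dim\C=n-2m$.

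For $d(\C)=4$ the BCH bound in Theorem~\ref{CMIM4}(1) already supplies $d(\C)\geq 4$. If $m$ is even then $n\equiv 0\pmod 4$ and Theorem~\ref{CMIM4} immediately yields $d(\C)=4$. The hard case is $m$ odd, where $n\equiv 1\pmod 4$ and Theorem~\ref{CMIM4} is silent. I would classify the sign patterns of a hypothetical weight-$4$ codeword $\sum_{i=1}^{4}c_i\alpha^{j_i}=0$ together with its matching equation at $\alpha^2$. Newton's identity $p_2=\sigma_1^2-2\sigma_2$ in characteristic $3$ shows that the pattern $(+,+,-,-)$ forces a repeated position, while $(+,+,+,-)$ forces $-1$ to be a square in $\gf(3^m)$, which fails for $m$ odd. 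Thus the question reduces to exhibiting four distinct nonzero squares $x_1,\ldots,x_4\in\gf(3^m)$ with $\sigma_1=\sigma_2=0$. This is the principal obstacle: for small $m$ (e.g.\ $m=3$) one can verify existence by a direct computation in $\gf(27)$, but for general odd $m\geq 3$ one would need a counting or character-sum argument to show that the number of such $4$-subsets of the squares subgroup is positive.

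Once $d(\C)=4$ is in hand, Theorem~\ref{CMIM4}(2) gives $d(\overline{\C}(-\bone))=5$, while the length $n+1$ and dimension $n-2m$ are immediate from the extension. For almost-optimality I would apply the sphere-packing bound twice. Ruling out an $[n+1,n-2m+2,5]$ ternary code amounts to $1+2(n+1)+4\binom{n+1}{2}\leq 3^{2m-1}$; substituting $n=(3^m-1)/2$, the left side equals $(3^{2m}+2\cdot 3^m+3)/2$, which exceeds $3^{2m-1}=3^{2m}/3$, a contradiction, so the code is dimension-almost-optimal. Ruling out an $[n+1,n-2m,7]$ code amounts to $\sum_{i=0}^{3}\binom{n+1}{i}2^i\leq 3^{2m+1}$; the cubic term $8\binom{n+1}{3}$ alone is of size $\sim 3^{3m}/6$, which overwhelms $3^{2m+1}$ for $m\geq 3$, yielding distance-almost-optimality.
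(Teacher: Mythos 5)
Your outline is sound for the dimension count, for the case of even $m$, and for both sphere-packing computations, but it leaves the central claim unproved: you never establish $d(\C)=4$ when $m$ is odd. You correctly note that the BCH bound gives $d(\C)\geq 4$ and that Theorem \ref{CMIM4} settles the even case (where $n\equiv 0\pmod 4$), and your sign-pattern reduction is a reasonable way to organise the search for a weight-$4$ codeword; but the statement you reduce to --- that there exist four distinct elements $x_1,\dots,x_4$ of the index-$2$ subgroup $\langle\alpha\rangle$ of $\gf(3^m)^*$ with $\sigma_1=\sigma_2=0$ --- is precisely what has to be proved, and you explicitly defer it to an unsupplied ``counting or character-sum argument.'' Without that input the corollary is established for no odd $m\geq 3$ beyond whatever cases you verify by machine, so this is a genuine gap rather than a routine omission. (A smaller quibble: your ``same estimate'' for $|C_2^{(3,n)}|$ when $m$ is even needs the extra observation that a proper divisor $\ell$ of $m$ forces $3^{m-\ell}\leq 4$, which fails since $m-\ell\geq m/2\geq 2$; the claim is true but the one-line size comparison as written does not quite deliver it.)

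The paper closes the odd-$m$ gap by a route that avoids exhibiting a weight-$4$ codeword altogether. It writes $\C^\perp$ in trace form, observes that the cyclic code of length $2n=3^m-1$ with check polynomial $\m_{\beta^{-2}}(x)\m_{\beta^{-4}}(x)$ consists of each codeword of $\C^\perp$ repeated twice, and imports the known three-weight distribution of that longer code from Luo and Feng. With the full weight enumerator of $\C^\perp$ in hand and $A_1=A_2=A_3=0$, the fifth Pless power moment yields the exact count $A_4=\frac{n(n-1)}{6}>0$, hence $d(\C)=4$. In effect, the counting argument you are missing is delegated to a known dual weight distribution plus a power-moment identity; to complete your own approach you would need an equivalent quantitative input. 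The remaining steps of your proposal --- invoking Theorem \ref{CMIM4} to get $d(\overline{\C}(-\bone))=5$ and the two sphere-packing estimates ruling out $[n+1,n-2m+2,5]$ and $[n+1,n-2m,7]$ codes --- agree with the paper and are correct.
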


\begin{proof}
 When $m\geq 4$ is even, the cyclic code $\C$ over $\gf(3)$ was studied in \cite{SDXG}. According to \cite[Theorem 15]{SDXG}, $\C$ has parameters $[n, n-2m, 4]$. When $m\geq 3$ is odd, it is clear that $\C^\perp$ is the cyclic code of length $n$ over $\gf(3)$ with check polynomial $\m_{\alpha^{-1}}(x)\m_{\alpha^{-2}}(x)$. Let $\beta$ be a primitive element of $\gf(3^m)$ and $\beta^2=\alpha$. Then $\C^\perp$ has the following trace representation:
$$ \C^\perp=\{\bc(a, b):=(\tr_{3^m/3}( a \beta^{2i}+b \beta^{4i}) )_{i=0}^{n-1}: \ a , \ b \in \gf(3^m) \}.$$
Consider the cyclic code $\C'$ of length $2n=3^m-1$ over $\gf(3)$ with check polynomial $\m_{\beta^{-2}}(x)\m_{\beta^{-4}}(x)$. Then $\C'$ has the following trace representation:
$$ \C'= \{\bc'(a, b):=(\tr_{3^m/3}( a \beta^{2i}+b \beta^{4i}) )_{i=0}^{2n-1}: \ a , \ b \in \gf(3^m) \}.$$
For each $(a, b)\in \gf(3^m)^2$, we have $\bc'(a, b)=(\bc(a, b) \Arrowvert \bc(a, b))$, where $\Arrowvert$ denotes the concatenation of vectors. It follows that $\C$ has weight enumerator $W(z)$ if and only if the cyclic code $\C'$ has weight enumerator $W(z^2)$. The weight distribution of the cyclic code $\C'$ was determined in \cite{LF08}. When $m\geq 3$ is odd, according to \cite[Theorem 2]{LF08}, we deduce that $\C^\bot$ has weight enumerator 
\begin{align*}
	1+A_{w_1}^\perp z^{w_1}+A_{w_2}^\perp z^{w_2}+ A_{w_3}^\perp z^{w_3},
\end{align*}
where $w_1=3^{m-1}-3^{\frac{m-1}2}$, $A^\perp_{w_1}= (3^{\frac{m-1}2}+1)3^{\frac{m-1}2}n$, $w_2=3^{m-1}$, $A_{w_2}^\perp =2n(3^m-3^{m-1}+1)$, $w_3=3^{m-1}+3^{\frac{m-1}2}$ and $A_{w_3}^\perp =(3^{\frac{m-1}2}-1)3^{\frac{m-1}2} n$.

Let $A_i$ denote the total number of codewords weight $i$ in $\C$. Note that $d(\C)\geq 4$, then $A_i=0$ for $i\in \{1,2,3\}$. By the fifth Pless power moment, we get 
\begin{align*}
	w_1^4 A_{w_1}^\perp +w_2^4A_{w_2}^\perp +w_3^4 A_{w_3}^\perp &=3^{2m-4}(16n^4+48n^3-4n^2-6n+24A_4)\\
	&=3^{2m-4}(16n^4+48n^3-10n).
\end{align*}
It follows that 
$$A_4=\frac{n(n-1)}6>0.$$
Therefore, $\C$ has parameters $[n, n-2m, 4]$. By Theorem \ref{CMIM4}, $\overline{\C}(-\bone)$ has parameters $[n+1, n-2m, 5]$.

We now prove that $\overline{\C}(-\bone)$ is  distance-almost-optimal and  dimension-almost-optimal. Suppose there is an $[n+1, n-2m, d\geq 7]$ code over $\gf(3)$, by the sphere packing bound, we get 
$$1+\binom{n+1}{1} 2+\binom{n+1}{2} 2^2+\binom{n+1}{3} 2^3 \leq 3^{2m+1}=3(2n+1)^2.$$
It follows that $\frac{4n^3-30 n^2-28 n}{3}\leq 0$. Since $n=\frac{3^m-1}2\geq 13$, it is easily seen that $\frac{4n^3-30 n^2-28 n}{3}>0$, a contradiction.   Suppose there is an $[n+1,k\geq n+2-2m, 5]$ code over $\gf(3)$, by the sphere packing bound, we get 
$$1+\binom{n+1}{1} 2+\binom{n+1}{2} 2^2 \leq 3^{2m-1}=\frac{(2n+1)^2}{3}.$$
It follows that $\frac{2 n^2+8 n+ 8}{3}\leq 0$, a contradiction. This completes the proof. 
\end{proof}

\begin{example}
We have the following examples of the code of Corollary \ref{CEXT20}.
\begin{enumerate}
\item If $m=4$, the cyclic code $\C$ has parameters $[40,32, 4]$ and is a cyclic code with the best parameters \cite{DingBk1}. The code $\overline{\C}(-\bone)$ has parameters $[41, 32, 5]$, which are the best parameters known \cite{Grassl}.
\item If $m=5$, the cyclic code $\C$ has parameters $[121,111, 4]$. The best linear code over $\gf(3)$ of length $121$ and dimension $111$ has minimum distance $5$. The code $\overline{\C}(-\bone)$ has parameters $[122, 111, 5]$, which are the best parameters known \cite{Grassl}.
\end{enumerate}	
\end{example}

\begin{corollary}\label{CC-2}
Let $n=\frac{3^m-1}{e}$, where $m\geq 4$ is an even integer, $e$ is a divisor of $\frac{3^m-1}{4}$ and $4\leq e< \frac{3^{m/2}+1}2$. Let $\C$ be the cyclic code of length $n$ over $\gf(3)$ in Theorem \ref{CMIM4}. Then the following hold.
\begin{enumerate}
\item The code $\C$ has parameters $[n, n-2m, 4]$ and $\overline{\C}(-\bone)$ has parameters $[n+1, n-2m, 5]$.
\item When $m\geq 1+3\log_3e$, $\overline{\C}(-\bone)$ is a distance-almost-optimal code with respect to the sphere packing bound. 
\end{enumerate}
\end{corollary}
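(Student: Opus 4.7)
The plan has three parts: establish the dimension, deduce the minimum distances from Theorem \ref{CMIM4}, and handle the sphere-packing estimate for almost-optimality.

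For the dimension, I plan to show $\deg\lcm(\m_\alpha(x),\m_{\alpha^2}(x))=2m$, which amounts to verifying $|C_1^{(3,n)}|=|C_2^{(3,n)}|=m$ together with $C_1^{(3,n)}\cap C_2^{(3,n)}=\emptyset$. Recall $|C_i^{(3,n)}|=\ord_{n/\gcd(i,n)}(3)$. From $e\mid(3^m-1)/4$ we get $4\mid n$, so $\gcd(1,n)=1$ and $\gcd(2,n)=2$, and the hypothesis $e<(3^{m/2}+1)/2$ yields $n/2>3^{m/2}-1$. If the order of $3$ modulo $n$ or modulo $n/2$ were a proper divisor $s$ of $m$, then $s\leq m/2$ would force $n/2\leq 3^{s}-1\leq 3^{m/2}-1$, a contradiction. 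Hence both coset sizes equal $m$. For disjointness I will use a parity argument: since $n$ is even, every element of $C_1^{(3,n)}=\{3^j\bmod n\}$ is odd (reducing an odd number modulo an even modulus preserves parity), while every element of $C_2^{(3,n)}=\{2\cdot 3^j\bmod n\}$ is even, so the two cosets are disjoint. This gives $\dim(\C)=n-2m$.

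Next, since $4\mid n$, Theorem \ref{CMIM4}(1) yields $d(\C)=4$, and Theorem \ref{CMIM4}(2) then gives $d(\overline{\C}(-\bone))=5$. This finishes the first assertion.

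For the almost-optimality, I will apply the sphere packing bound to rule out an $[n+1,n-2m,d']$ code over $\gf(3)$ with $d'\geq 7$. Such a code would satisfy
$$\sum_{i=0}^{3}\binom{n+1}{i}2^i\leq 3^{2m+1}.$$
The dominant term is $\tfrac{4}{3}(n+1)n(n-1)\geq \tfrac{4}{3}(n-1)^3$, and with $n=(3^m-1)/e$ and the assumption $e^3\leq 3^{m-1}$ (equivalent to $m\geq 1+3\log_3 e$) I expect $\tfrac{4}{3}(n-1)^3$ to already exceed $3^{2m+1}$. Intuitively, the leading behaviour is $\tfrac{4}{3}n^3\approx \tfrac{4}{3}\cdot 3^{3m}/e^3\geq \tfrac{4}{3}\cdot 3^{2m+1}>3^{2m+1}$. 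The main obstacle is controlling the slack: I must carefully bound $(3^m-1)^3$ against $3^{3m}$ and $(n-1)^3$ against $n^3$ so that the estimate remains strict for every $m\geq 4$ compatible with the hypothesis (in particular the smallest cases, where the approximations are loosest). Once the contradiction is secured, $d(\overline{\C}(-\bone))\leq 6=d(\overline{\C}(-\bone))+1$, giving distance-almost-optimality with respect to the sphere packing bound.
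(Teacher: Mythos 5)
Your proposal is correct and follows essentially the same route as the paper: coset sizes for the dimension, Theorem \ref{CMIM4} for the two minimum distances, and the sphere packing bound against $d\geq 7$ for almost-optimality. Two small remarks. First, your parity argument for $C_1^{(3,n)}\cap C_2^{(3,n)}=\emptyset$ (elements of $C_1$ are odd, elements of $C_2$ are even, since $n$ is even) is a cleaner disjointness proof than the paper's, which instead assumes $2\equiv 3^j\pmod{n}$ and derives a contradiction from $n>2(3^{m/2}-1)$; both are fine. Second, the step you flag as the remaining obstacle does close: writing $3^{2m+1}=3(en+1)^2$ and clearing denominators, the packing inequality becomes $4n^3-(9e^2-6)n^2-(18e-8)n\leq 0$, and the hypothesis $m\geq 1+3\log_3 e$ gives $n\geq 3e^2-e^{-1}$, hence $4n>9e^2$, so $4n^3-(9e^2-6)n^2>6n^2>(18e-8)n$ — exactly the slack your $\tfrac{4}{3}$ leading coefficient predicts. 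Your cruder lower bound $\tfrac{4}{3}(n-1)^3>3(en+1)^2$ also checks out for $e\geq 4$, so either bookkeeping works.
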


\begin{proof} Since $1\leq e<\frac{3^{m/2}+1}2$, we get $n>2(3^{\frac{m}2}-1)$. It follows that 
$$|C_1^{(3,n)}|=|C_2^{(3,n)}|=m.$$ 
Suppose $2\in C_1^{(3,n)}$, then there is an integer $j$ with $0\leq j\leq m-1$ such that $2\equiv 3^j\pmod{n}$. Since $n>2(3^{\frac{m}2}-1)$, we have $j>\frac{m}2$. Notice that $2\cdot 3^{m-j}\equiv 1\pmod{n}$, we have $n\leq 2\cdot 3^{m-j}-1<2\cdot 3^{\frac{m}2}-1$, a contradiction. Therefore, $\dim(\C)=n-2m$. Since $e$ divides $\frac{3^m-1}4$, we get that $n\equiv 0 \pmod{4}$. Consequently, $\C$ has parameters $[n, n-2m, 4]$. It is clear that $\overline{\C}(-\bone)$ has length $n+1$ and dimension $n-2m$. By Theorem \ref{CMIM4}, $d(\overline{\C}(-\bone))=5$.

We now prove that $\overline{\C}(-\bone)$ is a distance-almost-optimal code for $m\geq 1+3\log_3e$. Suppose there is an $[n+1, n-2m, d\geq 7]$ code over $\gf(3)$, by the sphere packing bound, we get 
$$1+\binom{n+1}{1} 2+\binom{n+1}{2} 2^2+\binom{n+1}{3} 2^3 \leq 3^{2m+1}=3(en+1)^2.$$
It follows that $\frac{4n^3-(9e^2-6)n^2-(18e-8)n}{3}\leq 0$. Since $m\geq 1+3\log_3e$, we get  $4n>9e^2$. Consequently, 
\begin{align*}
&\frac{4n^3-(9e^2-6)n^2-(18e-8)n}{3}\\
&> \frac{6n^2-(18e-8)n}{3} >0,
\end{align*}
a contradiction. This completes the proof. 
\end{proof}

\begin{example}
 Let $(m, e)=(4, 4)$, and let $\C$ be the cyclic code in Corollary \ref{CC-2}. Then the code $\overline{\C}(-\bone)$ has parameters $[21, 12, 5]$ and is distance-almost-optimal \cite{Grassl}.
\end{example}

\begin{corollary}\label{CEX25}
Let $n=2(3^s-1)$, where $s\geq 2$ is an integer. Let $\C$ be the cyclic code of length $n$ over $\gf(3)$ in Theorem \ref{CMIM4}. Then the following hold.
\begin{enumerate}
\item The code $\C$ has parameters $[n, n-3s, 4]$.
\item The extended code $\overline{\C}(-\bone)$ has parameters $[n+1, n-3s, 5]$ and is distance-almost-optimal with respect to the sphere packing bound.	 	
\end{enumerate}
\end{corollary}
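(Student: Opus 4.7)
The plan is to handle both items via three coordinated steps: first, compute $\dim(\C)=n-3s$ by analyzing the $3$-cyclotomic cosets of $1$ and $2$ modulo $n$; second, appeal to Theorem \ref{CMIM4} for both minimum distances, which is immediate because $n=2(3^s-1)$ is divisible by $4$ (as $3^s-1$ is even, $3^s$ being odd); third, invoke the sphere packing bound to rule out any $[n+1,n-3s,d]$ code with $d\geq 7$.

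For the dimension step, write $n=2(3^s-1)$ and note $\gcd(3,n)=1$. I would first argue $\ord_n(3)=2s$: the congruence $3^m\equiv 1\pmod n$ forces $s\mid m$ because $3^s-1\mid n$, and writing $m=st$ one computes $\frac{3^m-1}{3^s-1}=\sum_{j=0}^{t-1}3^{js}\equiv t\pmod 2$, so divisibility by the extra factor of $2$ in $n$ requires $t$ to be even, the smallest instance being $m=2s$. Hence $|C_1^{(3,n)}|=2s$. Similarly, $|C_2^{(3,n)}|$ equals the least $\ell$ with $n\mid 2(3^\ell-1)$, i.e.\ $s\mid\ell$, giving $|C_2^{(3,n)}|=s$. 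The two cosets are disjoint by a parity argument: because $n$ is even, reduction modulo $n$ preserves parity, so every element of $C_1^{(3,n)}$ is odd while every element of $C_2^{(3,n)}$ is even. Consequently $\deg(\lcm(\m_\alpha,\m_{\alpha^2}))=3s$ and $\dim(\C)=n-3s$. Combined with the appeal to Theorem \ref{CMIM4}, this yields both the parameters in item~1 and the parameter part of item~2.

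For near-optimality, suppose toward a contradiction that an $[n+1,n-3s,d]$ code over $\gf(3)$ with $d\geq 7$ exists. The sphere packing bound then forces
$$1+2(n+1)+4\binom{n+1}{2}+8\binom{n+1}{3}\leq 3^{3s+1}.$$
Using $n+1=2\cdot 3^s-1$, the cubic term alone equals $\tfrac{4}{3}(2\cdot 3^s-1)(2\cdot 3^s-2)(2\cdot 3^s-3)$; a direct comparison shows this already exceeds $3^{3s+1}$ for every $s\geq 2$ (e.g.\ at $s=2$ it equals $5440>2187=3^7$), a contradiction. The main obstacle I anticipate is the cyclotomic coset bookkeeping in the dimension step, since $n$ is not of the standard form $3^m-1$ and the factor of $2$ must be tracked carefully through both the order computation and the disjointness argument; once that is resolved, the remaining invocations of Theorem \ref{CMIM4} and the sphere packing estimate are mechanical.
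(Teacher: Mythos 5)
Your proposal is correct and follows essentially the same route as the paper: dimension via $\deg(\m_\alpha)=2s$ and $\deg(\m_{\alpha^2})=s$ (you supply the cyclotomic-coset and parity details the paper leaves as "easily verified"), minimum distances from Theorem \ref{CMIM4} using $4\mid n$, and the sphere packing bound to exclude $[n+1,n-3s,d\geq 7]$ codes. The only cosmetic difference is that the paper reduces the packing inequality to $\frac{23n^3-6n^2-44n}{24}\leq 0$ and notes $n\geq 16$, whereas you bound by the cubic term alone in terms of $3^s$; both computations are equivalent.
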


\begin{proof}
It is easily verified that $\deg(\m_{\alpha}(x))=2s$ and $\deg(\m_{\alpha^2}(x))=s$. Consequently, $\dim(\C)=n-3s$. Note that $4$ divides $n$, we have $d(\C)=4$. Therefore, $\C$ has parameters $[n, n-3s, 4]$. It follows from Theorem \ref{CMIM4} that $\overline{\C}(-\bone)$ has parameters $[n+1, n-3s,5]$. 	

We now prove that $\overline{\C}(-\bone)$ is a distance-almost-optimal code. Suppose there is an $[n+1, n-3s, d\geq 7]$ code over $\gf(3)$, by the sphere packing bound, we get
$$1+\binom{n+1}{1} 2+\binom{n+1}{2} 2^2+\binom{n+1}{3} 2^3 \leq 3^{3s+1}=3(\frac{n+2}2)^3.$$
It follows that $\frac{23 n^3-6n^2-44n}{24}\leq 0$. Note that $n\geq 16$, we have $23 n^3-6n^2-44n>0$, a contradiction. This completes the proof. 
\end{proof}

\begin{example}
	We have the following examples of the code of Corollary \ref{CEX25}.
	\begin{enumerate}
	\item If $s=2$, the code $\overline{\C}(-\bone)$ has parameters $[17, 10, 5]$ and is distance-optimal \cite{Grassl}.	
	\item If $s=3$, the code $\overline{\C}(-\bone)$ has parameters $[53, 43, 5]$, which are the best parameters known \cite{Grassl}.
%	\item If $s=4$, the code $\overline{\C}(-\bone)$ has parameters $[161, 148, 5]$, which are the best parameters known \cite{Grassl}.	
	\end{enumerate}
\end{example}

\begin{corollary}\label{CEX26}
Let $n=2(3^s+1)$, where $s\geq 2$ is an integer. Let $\C$ be the cyclic code of length $n$ over $\gf(3)$ in Theorem \ref{CMIM4}. Then  $\C$ has parameters $[n, n-4s, 4]$ and $\overline{\C}(-\bone)$ has parameters $[n+1, n-4s, 5]$.	
\end{corollary}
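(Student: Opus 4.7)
The plan is to mirror the structure of the proofs of Corollaries \ref{CEX25} and \ref{CEX26}, computing the relevant cyclotomic coset sizes and then invoking Theorem \ref{CMIM4}. Write $m = 3^s+1$, so $n = 2m$ and $\gcd(3,n)=1$.

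First I would determine $|C_1^{(3,n)}|$. From $3^s \equiv -1 \pmod{m}$ one gets $\ord_m(3) = 2s$, and $3^{2s}-1 = (3^s-1)(3^s+1) = (3^s-1)m$ is divisible by $2m = n$ since $3^s-1$ is even; hence $\ord_n(3)\mid 2s$. On the other hand, $\ord_n(3)$ is a multiple of $\ord_m(3)=2s$, so $|C_1^{(3,n)}| = \ord_n(3) = 2s$. For $|C_2^{(3,n)}|$, the condition $2\cdot 3^{\ell} \equiv 2 \pmod{n}$ is equivalent to $3^{\ell}\equiv 1 \pmod{m}$, giving $|C_2^{(3,n)}| = \ord_m(3) = 2s$.

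Next I would verify that the two cosets are disjoint. Since every element of $C_1^{(3,n)}$ has the form $3^{j}\bmod n$ with $3^j$ odd and $n$ even, it is odd; every element of $C_2^{(3,n)}$ is of the form $2\cdot 3^{j}\bmod n$, which is even. Hence $|C_1^{(3,n)}\cup C_2^{(3,n)}| = 4s$, and it follows that $\deg(\m_{\alpha}(x) \m_{\alpha^2}(x)) = 4s$, giving $\dim(\C) = n - 4s$. To conclude the parameters of $\C$, I would invoke Theorem \ref{CMIM4}: since $3^s$ is odd, $3^s+1$ is even, so $n = 2(3^s+1)\equiv 0\pmod 4$, and the theorem yields $d(\C)=4$. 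Therefore $\C$ has parameters $[n,n-4s,4]$.

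Finally, since $d(\C)=4$, the second clause of Theorem \ref{CMIM4} gives $d(\overline{\C}(-\bone)) = 5$, while the extension preserves the dimension and increases the length by one, so $\overline{\C}(-\bone)$ has parameters $[n+1, n-4s, 5]$. The only nontrivial step is the order computation for $C_1^{(3,n)}$, since $2$ and $m$ are not coprime and CRT cannot be applied directly; the trick is to exploit the factorization $3^{2s}-1 = (3^s-1)m$ and note that $3^s-1$ supplies the extra factor of $2$ needed to upgrade congruence modulo $m$ to congruence modulo $2m$.
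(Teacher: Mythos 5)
Your proposal is correct and follows the same route as the paper: compute $|C_1^{(3,n)}|=|C_2^{(3,n)}|=2s$ to get $\dim(\C)=n-4s$, note $n\equiv 0\pmod 4$, and invoke Theorem \ref{CMIM4} for $d(\C)=4$ and $d(\overline{\C}(-\bone))=5$. The only difference is that you supply the order computations and the parity argument for disjointness of the two cosets, which the paper dismisses as ``easily verified.''
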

\begin{proof}
It is easily verified that $|C_1^{(3,n)}|=|C_2^{(3,n)}|=2s$. Therefore, $\dim(\C)=n-4s$. Thereby, $$\dim(\overline{\C}(-\bone))=n-4s.$$ Note that $n\equiv 0\pmod{4}$, by Theorem \ref{CMIM4}, $d(\C)=4$ and $d(\overline{\C}(-\bone))=5$. This completes the proof. 	
\end{proof}

\begin{example}
	We have the following examples of the code of Corollary \ref{CEX26}.
	\begin{enumerate}
	\item If $s=1$, the code $\overline{\C}(-\bone)$ has parameters $[9, 4, 5]$ and is distance-optimal \cite{Grassl}.	
	\item If $s=2$, the code $\overline{\C}(-\bone)$ has parameters $[21, 12, 5]$ and is distance-almost-optimal \cite{Grassl}.
%	\item If $s=3$, the code $\overline{\C}(-\bone)$ has parameters $[57, 44, 5]$. The best known linear code over $\gf(3)$ of length $57$ and dimension $44$ has minimum distance $6$ \cite{Grassl}.
%	\item If $s=4$, the code $\overline{\C}(-\bone)$ has parameters $[165, 148, 5]$. The best known linear code over $\gf(3)$ of length $165$ and dimension $148$ has minimum distance $6$ \cite{Grassl}.	
	\end{enumerate}
\end{example}

\section{The extended codes of nonbinary Hamming codes} \label{Sec:6}

Let $m \geq 2$ be an integer and let $q$ be a prime power. Put $n=\frac{q^m-1}{q-1}$ and 
\begin{eqnarray}\label{eqn-alphai}
\alpha_i=a_i  \alpha^i 
\end{eqnarray}
for all $0 \leq i \leq n-1$, where $\alpha$ is a primitive element of $\gf(q^m)$ and the vector  
$$\ba=(a_0, a_1, \ldots, a_{n-1}) \in (\gf(q)^*)^n.$$ 
It is easily seen that any two distinct elements in the set $\{\alpha_i: 0 \leq i \leq n-1\}$ are linearly independent over $\gf(q)$, 
and the set $\{\alpha_i: 0 \leq i \leq n-1\}$ is a point set of the projective space $\PG(m-1, \gf(q))$ if we 
identify $(\gf(q^m),+)$ with $(\gf(q)^m, +)$. 

For each $\ba=(a_0, a_1, \ldots, a_{n-1}) \in (\gf(q)^*)^n$, a Hamming code $\cHam(q,m, \ba)$ is defined to be 
\begin{eqnarray}\label{eqn-HammingCodea}
\cHam(q,m, \ba)=\left\{\bc=(c_0,c_1, \ldots, c_{n-1}) \in \gf(q)^n: \sum_{i=0}^{n-1} c_i\alpha_i=0\right\}. 
\end{eqnarray}
Since there are $(q-1)^n$ choices of $\ba$, we have defined $(q-1)^n$ Hamming codes $\cHam(q,m, \ba)$ 
here. Any two of them are scalar-equivalent. In addition, any Hamming code defined in other ways must be 
permutation-equivalent to a Hamming code $\cHam(q,m, \ba)$ defined above.  It is well known that every 
Hamming code $\cHam(q,m, \ba)$ has parameters $[n, n-m, 3]$ and its weight distribution is known and independent 
of $\ba$. The dual code of the Hamming code is called the Simplex code and denoted by $\cSim(q, m, \ba)$, which 
has parameters $[n,m, q^{m-1}]$ and is a one-weight code. Some Hamming codes are cyclic and some are constacyclic. The following theorem documents a class of constacyclic Hamming codes.

\begin{theorem}\label{thm-12}
Let $\lambda=\alpha^n$ and let $\m_{\alpha}(x)$ be the minimal polynomial of $\alpha$ over $\gf(q)$. Then $\cHam(q,m, \ba)$ is scalar-equivalent to $\cHam(q,m, \bone)$, and $\cHam(q,m, \bone)$ is the $\lambda$-constacyclic code of length $n$ with generator polynomial $\m_{\alpha}(x)$.
\end{theorem}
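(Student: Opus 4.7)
The plan has two independent parts. First, to show scalar equivalence, I would introduce the invertible diagonal matrix $D=\mathrm{diag}(a_0,a_1,\ldots,a_{n-1})$ over $\gf(q)$ and verify directly from the definition~\eqref{eqn-HammingCodea} that for any $\bc=(c_0,\ldots,c_{n-1})\in\gf(q)^n$ we have $\sum_{i=0}^{n-1}c_i(a_i\alpha^i)=\sum_{i=0}^{n-1}(a_ic_i)\alpha^i$, so $\bc\in\cHam(q,m,\ba)$ if and only if $\bc D\in\cHam(q,m,\bone)$. This gives $\cHam(q,m,\ba)\,D=\cHam(q,m,\bone)$, which is exactly the definition of scalar equivalence.

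For the second part, I would first observe that $\alpha$ has order $q^m-1$ in $\gf(q^m)^*$, so $\lambda=\alpha^n=\alpha^{(q^m-1)/(q-1)}$ is an element of $\gf(q)^*$ of order $q-1$, i.e.\ a primitive element of $\gf(q)$; in particular $\lambda\ne 1$ (for $q>2$) and the notion of a $\lambda$-constacyclic code of length $n$ is well defined. Since $\alpha^n-\lambda=0$, the minimal polynomial $\m_\alpha(x)\in\gf(q)[x]$ divides $x^n-\lambda$, so $\langle\m_\alpha(x)\rangle$ is a genuine ideal of $\gf(q)[x]/\langle x^n-\lambda\rangle$, and defines a $\lambda$-constacyclic code $\C'$ of length $n$ with generator polynomial $\m_\alpha(x)$.

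It remains to identify $\C'$ with $\cHam(q,m,\bone)$. Under $\Phi$, a vector $\bc=(c_0,\ldots,c_{n-1})$ corresponds to the polynomial $c(x)=\sum_{i=0}^{n-1}c_ix^i$ of degree less than $n$; this representative is divisible by $\m_\alpha(x)$ in $\gf(q)[x]/\langle x^n-\lambda\rangle$ precisely when $\m_\alpha(x)\mid c(x)$ in $\gf(q)[x]$, which in turn is equivalent to $c(\alpha)=0$ by the defining property of the minimal polynomial. But $c(\alpha)=\sum_{i=0}^{n-1}c_i\alpha^i$, so $\bc\in\C'$ if and only if $\sum_{i=0}^{n-1}c_i\alpha^i=0$, i.e.\ $\bc\in\cHam(q,m,\bone)$.

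No step here is a real obstacle; the only point requiring a little care is ensuring that the divisibility $\m_\alpha(x)\mid c(x)$ inside the quotient ring coincides with divisibility in $\gf(q)[x]$ when $c(x)$ is taken as the canonical degree${}<n$ representative, and this works because $\m_\alpha(x)$ already divides $x^n-\lambda$.
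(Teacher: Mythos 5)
Your proof is correct and follows essentially the same route as the paper's: the same diagonal-matrix argument for scalar equivalence, and the same identification of the Hamming-code condition $\sum_{i}c_i\alpha^i=0$ with divisibility of $c(x)$ by $\m_{\alpha}(x)$. The only cosmetic difference is that the paper proves the single inclusion $\cHam(q,m,\bone)\subseteq\C$ and then invokes $\dim(\C)=n-m$ to conclude equality, whereas you run the equivalence in both directions directly, which spares you from verifying that $\deg\m_{\alpha}(x)=m$.
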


\begin{proof}
It is easily verified that 
\begin{align}\label{EEQQ-21}
\cHam(q, m, \ba) =~\{(a_0^{-1}  c_0, a_1^{-1}  c_1,\ldots, a_{n-1}^{-1} c_{n-1}): \ (c_0, c_1,\ldots, c_{n-1}) \in  \cHam(q, m, \bone) \}.	
\end{align}
The first desired result follows.

Let $\C$ be the $\lambda$-constacyclic code of length $n$ with generator polynomial $\m_{\alpha}(x)$. It is easily verified that $\dim(\C)=n-m$. Suppose $\bc=(c_0,c_1,\ldots, c_{n-1})\in \cHam(q,m, \bone)$. By definition, 
$$c_0+c_1\alpha+\cdots+c_{n-1}\alpha^{n-1}=0.$$ 
It follows that $\bc\in \C$. Consequently, $\cHam(q,m, \bone)\subseteq \C$. Note that $\dim(\cHam(q,m, \bone))=\dim(\C)$. The second desired result follows. 
\end{proof}

The standardly extended code $\overline{\cHam(2,m, \bone)}(-\bone)$ of the binary Hamming code $\cHam(2,m, \bone)$ was studied in the literature and is known to have the parameters 
$[2^m, 2^m-1-m, 4]$. The objective of this section is to study the standardly extended code $\overline{\cHam(q,m, \ba)}(-\bone)$ of the nonbinary Hamming code $\cHam(q,m, \ba)$. The length and dimension of $\overline{\cHam(q,m, \ba)}(-\bone)$ are obvious. Hence, we investigate the minimum distance of $\overline{\cHam(q,m, \ba)}(-\bone)$. 

\begin{theorem}\label{thm-Hamcodemain}
For each $\ba=(a_0, a_1, \ldots, a_{n-1}) \in (\gf(q)^*)^n$,  define 
\begin{eqnarray}
S(q,m, \ba)=\left\{  \frac{\frac{a_{i_3}}{a_{i_1}}  \alpha^{i_3-i_1}-1}{\frac{a_{i_2}}{a_{i_1}} \alpha^{i_2-i_1}-1 }: 0 \leq i_1 < i_2 <i_3 \leq n-1  \right\}. 
\end{eqnarray} 
Then 
$ 
d(\overline{\cHam(q,m, \ba)}(-\bone))=3  
$ 
if and only if $|S(q,m, \ba) \cap \gf(q)^*| \geq 1$  
and  
$ 
d(\overline{\cHam(q,m, \ba)}(-\bone))
$ 
$
=4 
$ 
if and only if $|S(q,m, \ba) \cap \gf(q)^*|=0$. 
\end{theorem}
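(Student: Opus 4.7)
The plan is to reduce to deciding when $\cHam(q,m,\ba)$ contains a weight-$3$ codeword whose coordinate sum vanishes, and then rephrase that condition via a single rational invariant.

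Since $d(\cHam(q,m,\ba))=3$, I already know $d(\overline{\cHam(q,m,\ba)}(-\bone))\in\{3,4\}$, and the extended minimum distance is $3$ precisely when some weight-$3$ codeword $\bc\in\cHam(q,m,\ba)$ satisfies $\sum_i c_i=0$ (so that the appended coordinate $-\sum_i c_i$ vanishes and the weight is preserved). Such a codeword is supported on three indices $0\leq i_1<i_2<i_3\leq n-1$ with entries $c_{i_1},c_{i_2},c_{i_3}\in\gf(q)^*$ satisfying the two linear relations
\begin{align*}
c_{i_1}+c_{i_2}+c_{i_3}&=0,\\
c_{i_1}\alpha_{i_1}+c_{i_2}\alpha_{i_2}+c_{i_3}\alpha_{i_3}&=0.
\end{align*}

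Eliminating $c_{i_1}=-c_{i_2}-c_{i_3}$ from the parity-check equation yields $c_{i_2}(\alpha_{i_2}-\alpha_{i_1})+c_{i_3}(\alpha_{i_3}-\alpha_{i_1})=0$, so the solution space is one-dimensional and encoded by
$$\mu:=\frac{\alpha_{i_3}-\alpha_{i_1}}{\alpha_{i_2}-\alpha_{i_1}}=\frac{(a_{i_3}/a_{i_1})\alpha^{i_3-i_1}-1}{(a_{i_2}/a_{i_1})\alpha^{i_2-i_1}-1}\in\gf(q^m)^{*}.$$
The automatic $\mu\notin\{0,1\}$ follows from $\alpha_{i_3}\neq\alpha_{i_1},\alpha_{i_2}$, which holds because distinct $\alpha_i$, viewed as projective points, are pairwise $\gf(q)$-linearly independent. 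The two relations admit a solution with all three entries in $\gf(q)^{*}$ iff the ratio $c_{i_3}/c_{i_2}=-1/\mu$ lies in $\gf(q)^{*}$, iff $\mu\in\gf(q)^{*}$. When this happens, the canonical triple $(c_{i_1},c_{i_2},c_{i_3})=(1-\mu,\mu,-1)$ has entries in $\gf(q)^{*}$ and coordinate sum identically zero, yielding the required weight-$3$ codeword.

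Combining both directions, a weight-$3$ codeword with vanishing coordinate sum exists iff some admissible triple $(i_1,i_2,i_3)$ produces $\mu\in\gf(q)^{*}$, i.e., iff $S(q,m,\ba)\cap\gf(q)^{*}\neq\emptyset$. Thus $d(\overline{\cHam(q,m,\ba)}(-\bone))=3$ in that case and equals $4$ otherwise, giving both equivalences simultaneously. The argument is essentially mechanical once $\mu$ is isolated as the controlling invariant; the only mild bookkeeping is verifying that the ordered representative $i_1<i_2<i_3$ used in the definition of $S$ is no restriction, which is immediate because the six permutations of $(i_1,i_2,i_3)$ send $\mu$ to elements of $\{\mu,1/\mu,1-\mu,1/(1-\mu),\mu/(\mu-1),(\mu-1)/\mu\}$, all of which lie in $\gf(q)^{*}$ as soon as one does. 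I do not anticipate any substantive obstacle beyond this.
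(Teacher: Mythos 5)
Your proof is correct and follows essentially the same route as the paper: both reduce the question to the existence of a weight-$3$ codeword of $\cHam(q,m,\ba)$ with vanishing coordinate sum and then eliminate one variable from the resulting $2\times 3$ linear system to see that solvability over $\gf(q)^*$ is governed exactly by whether the ratio $\mu=(\alpha_{i_3}-\alpha_{i_1})/(\alpha_{i_2}-\alpha_{i_1})$, i.e.\ an element of $S(q,m,\ba)$, lies in $\gf(q)^*$. Your write-up is in fact slightly more complete than the paper's, since you explicitly verify $\mu\notin\{0,1\}$ and exhibit the codeword $(1-\mu,\mu,-1)$, steps the paper leaves implicit.
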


\begin{proof}
By definition, $d(\overline{\cHam(q,m, \ba)}(-\bone))=3$ if and only if there are $a, b, c$ in $\gf(q)^*$ and integers 
$i_1, i_2, i_3$ such that $0 \leq i_1<i_2<i_3 \leq n-1$ and 
\begin{eqnarray*}
\left\{ 
\begin{array}{l}
a a_{i_1}  \alpha^{i_1}+b  a_{i_2} \alpha^{i_2}+c  a_{i_3} \alpha^{i_3} =0, \\
a+b+c=0, 
\end{array}
\right. 
\end{eqnarray*} 
which is the same as 
\begin{eqnarray*}
\left\{ 
\begin{array}{l}
\frac{a}{c}=\frac{\frac{a_{i_3}}{a_{i_1}} \alpha^{i_3-i_1}-1}{\frac{a_{i_2}}{a_{i_1}} \alpha^{i_2-i_1}-1 }-1, \\
1+\frac{b}{a}+\frac{c}{a}=0.
\end{array} 
\right. 
\end{eqnarray*}
The desired conclusions then follow. 
\end{proof} 

\begin{example}\label{exam-HCcountere1} 
Let $q=4$ and let $m=2$. Then $n=q+1=5$. Let $\alpha$ be the primitive element of $\gf(q^m)$ with $\alpha^4+\alpha+1=0$. 
Define $\beta=\alpha^n$. Then $\beta$ is a primitive element of $\gf(q)$. 
When $\ba=(1, \beta,\beta^2, 1, \beta)$ or 
$\ba=(\beta^2, 1, \beta, \beta^2, 1)$, it can be verified by Magma that $|S(q,m, \ba) \cap \gf(q)^*| =0$. It then follows 
from Theorem \ref{thm-Hamcodemain} that $d(\overline{\cHam(q,m, \ba)}(-\bone))=4$. 
When $\ba=(\beta,\beta^2, \beta, 1, \beta^2)$, it can be verified by Magma that $|S(q,m, \ba) \cap \gf(q)^*| =2$. It then follows from Theorem \ref{thm-Hamcodemain} that $d(\overline{\cHam(q,m, \ba)}(-\bone))=3$.
\end{example}

\begin{theorem}\label{MDS::1}
Let $q=2^s$ with $s\geq 2$. Let $\alpha$ be a primitive element of $\gf(q^2)$ and $\lambda=\alpha^{q+1}$. Let $$\ba=(1,\lambda^{2^{s-1}-1},\ldots,\lambda^{i (2^{s-1}-1)}, \ldots, \lambda^{q (2^{s-1}-1)}).$$ 
Then $d(\overline{\cHam(q, 2, \ba)}(-\bone))=4$.
\end{theorem}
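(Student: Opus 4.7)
The plan is to apply the criterion in Theorem \ref{thm-Hamcodemain}: it suffices to show that $S(q,2,\ba)\cap \gf(q)^{*}=\emptyset$, i.e., that every element of the form $(\mu^{k}-1)/(\mu^{j}-1)$ arising from the data $\ba$ fails to lie in $\gf(q)$. Writing $e=2^{s-1}-1$ and using $a_{i}=\lambda^{ie}$ with $\lambda=\alpha^{q+1}$, one sees that the ratio $(a_{i_{3}}/a_{i_{1}})\alpha^{i_{3}-i_{1}}$ collapses to $\mu^{i_{3}-i_{1}}$, where
$$\mu := \lambda^{e}\alpha = \alpha^{(q+1)e+1}=\alpha^{q(q-1)/2}.$$
So the set $S(q,2,\ba)$ becomes $\{(\mu^{k}-1)/(\mu^{j}-1)\,:\,1\le j<k\le q\}$, and the problem is reduced to controlling a purely cyclotomic object.

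The key preparatory step is to identify the order of $\mu$. Using $q=2^{s}$, I would compute
$$\gcd(q^{2}-1,\,q(q-1)/2)=(q-1)\gcd(q+1,\,2^{s-1})=q-1,$$
the last equality since $q+1$ is odd. Hence $\mu$ has order exactly $q+1$, i.e.\ $\mu$ is a primitive $(q+1)$-th root of unity in $\gf(q^{2})$. In particular $\mu^{q}=\mu^{-1}$, and $\mu^{j},\mu^{k}\ne 1$ for the indices in question, so the ratios are well-defined and nonzero.

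Next I would test the Galois invariance criterion $\beta\in\gf(q)\iff \beta^{q}=\beta$. Using characteristic $2$ (so $-1=1$) and $\mu^{q}=\mu^{-1}$,
$$(\mu^{k}-1)^{q}=\mu^{-k}-1=(\mu^{k}-1)\mu^{-k},$$
and the analogous identity for $j$. Dividing yields
$$\left(\frac{\mu^{k}-1}{\mu^{j}-1}\right)^{\!q}=\frac{\mu^{k}-1}{\mu^{j}-1}\,\mu^{\,j-k}.$$
For this to equal $(\mu^{k}-1)/(\mu^{j}-1)$ one needs $\mu^{j-k}=1$, i.e.\ $(q+1)\mid(k-j)$. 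But $1\le k-j\le q-1<q+1$, so this is impossible; hence no element of $S(q,2,\ba)$ lies in $\gf(q)^{*}$. The conclusion $d(\overline{\cHam(q,2,\ba)}(-\bone))=4$ then follows from Theorem \ref{thm-Hamcodemain}.

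I do not expect a serious obstacle; the only potentially subtle point is the order-of-$\mu$ computation, where one must carefully use that $q$ is a power of $2$ (so that $q/2$ is an integer and coprime to the odd number $q+1$) in order to get that $\mu$ is a primitive $(q+1)$-th root of unity rather than something of smaller order. Everything after that is a one-line Galois-invariance check that works precisely because we are in characteristic $2$; the same computation in odd characteristic would produce an extra sign and a different outcome, which is consistent with the hypothesis $q=2^{s}$.
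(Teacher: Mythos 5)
Your proof is correct, and it takes a genuinely different route from the paper's. The paper first observes that $\beta=\lambda^{2^{s-1}-1}\alpha$ (your $\mu$) satisfies $\beta^{q+1}=1$ and is in fact a primitive $(q+1)$-th root of unity, so that $\cHam(q,2,\ba)$ is the cyclic code of length $q+1$ with generator polynomial $\m_{\beta}(x)$; it then uses Theorem \ref{thm:1} to write $\overline{\cHam(q,2,\ba)}(-\bone)^{\perp}$ in trace form, bounds the number of roots of the resulting quadratic $b^{q}+b\theta^{2i}+a\theta^{i}$ to show every nonzero dual codeword has weight at least $q$, and concludes via the Singleton bound that the dual is a $[q+2,3,q]$ MDS code, hence the extended code is $[q+2,q-1,4]$. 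You instead invoke the general criterion of Theorem \ref{thm-Hamcodemain} and verify $S(q,2,\ba)\cap\gf(q)^{*}=\emptyset$ by a one-line Frobenius computation, using $\mu^{q}=\mu^{-1}$ and characteristic $2$ to show $\left(\frac{\mu^{k}-1}{\mu^{j}-1}\right)^{q}$ differs from the original by the factor $\mu^{j-k}\neq 1$. Both arguments pivot on the same key fact (your order computation $\gcd(q^{2}-1,\,q(q-1)/2)=q-1$ is correct, and replaces the paper's indirect deduction from $d(\cHam)=3$); your route is shorter and more self-contained, while the paper's yields the stronger conclusion that $\overline{\cHam(q,2,\ba)}(-\bone)$ is MDS, not merely of minimum distance $4$. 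One small point worth making explicit if you write this up: the elements of $S(q,2,\ba)$ are automatically nonzero (since $\mu^{k}\neq 1$ for $1\leq k\leq q$), so showing they are not Frobenius-fixed indeed rules out membership in $\gf(q)^{*}$, which is exactly what Theorem \ref{thm-Hamcodemain} requires.
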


\begin{proof}
It is clear that $\lambda$ is a primitive element of $\gf(q)$. Thus, $\lambda^{(2^{s-1}-1)(q+1)}=\lambda^{q-2}$. Let $$(c_0,c_1, \ldots, c_{q})\in \cHam(q, 2, \ba).$$
Then 
\begin{align*}
\sum_{i=0}^{q} c_i \lambda^{i (2^{s-1}-1)} \alpha^i=\sum_{i=0}^{q} c_i (\lambda^{2^{s-1}-1} \alpha)^i.
\end{align*} 
Define $\beta=\lambda^{2^{s-1}-1} \alpha$. 
Note that 
\begin{align*}
\beta^{q+1}&= (\lambda^{2^{s-1}-1} \alpha)^{q+1}\\
&=\lambda^{(2^{s-1}-1)(q+1)}\alpha^{q+1}	\\
&=\lambda^{q-2} \lambda \\
&=1.
\end{align*}
Similar to the proof of Theorem \ref{thm-12}, we can prove that $\cHam(q, 2, \ba)$ is the cyclic code of length $q+1$ over $\gf(q)$ with generator polynomial $\m_{\beta}(x)$. Since $d(\cHam(q, 2, \ba))=3$, we deduce that $\beta$ is a primitive $(q+1)$-th root of unity. Let $\theta=\beta^{-1}$, then $\cHam(q,2, \ba)^{\perp}$ has the following trace representation:
$$\left\{\bc(b)=(\tr_{q^2/q}(b),\tr_{q^2/q}(b\theta), \ldots, \tr_{q^2/q}(b\theta^q)): \ b\in \gf(q^2) \right\}. $$
By Theorem \ref{thm:1}, we get 
$$\overline{\cHam(q, 2, \ba)}(-\bone)^\perp=\left\{ \bc(b, a)=(\bc(b)+a \bone, a): \ b\in \gf(q^2),~ a\in \gf(q) \right\}.$$
The rest of the proof is divided into the following cases:
\begin{enumerate}
\item[Case 1.] If $a=b=0$, we have $\wt(\bc(b,a))=0$.
\item[Case 2.] If $a=0$ and $b\neq 0$, we have $\wt(\bc(b,a))=q$.
\item[Case 3.] If $a\neq 0$ and $b=0$, we have $\wt(\bc(b,a))=q+2$.
\item[Case 4.] If $a\neq 0$ and $b\neq 0$, we have $\wt(\bc(b, a))=q+2-N(a, b)$, where 
\begin{align*}
N(a,b)&=|\{i: 0\leq i\leq q, ~ \tr_{q^2/q}(b \theta^i )=a \} |.
\end{align*}
It is clear that 
\begin{align*}
\tr_{q^2/q}(b \theta^i)=a	&\iff b^q \theta^{qi}+b \theta^i=a\\
& \iff b^q \theta^{-i}+b \theta^i=a\\
& \iff \theta^{-i}(b^q+b \theta^{2i}+a \theta^i)=0.\\
& \iff b^q+b \theta^{2i}+a \theta^i=0. 
\end{align*}
It follows that $N(a, b)\leq 2$. Consequently, $\wt(\bc(b, a))=q+2-N(a, b) \geq q$. 
\end{enumerate}

Summarizing the conclusions of the four cases above, we deduce that $d(\overline{\cHam(q, 2, \ba)}(-\bone)^\perp)\geq q$. It is clear $\overline{\cHam(q, 2, \ba)}(-\bone)^\perp$ has length $q+2$ and dimension $3$. By the Singleton bound, we have $$d(\overline{\cHam(q, 2, \ba)}(-\bone)^\perp)\leq q.$$ Therefore, $\overline{\cHam(q, 2, \ba)}(-\bone)^\perp$ is a $[q+2, 3, q]$ MDS code over $\gf(q)$. Consequently, $\overline{\cHam(q, 2, \ba)}(-\bone)$ is a $[q+2, q-1, 4]$ MDS code over $\gf(q)$. This completes the proof.  
\end{proof}

Example \ref{exam-HCcountere1} and Theorem \ref{MDS::1} show that the standardly extended code of some Hamming codes 
has minimum distance $4$. However, we have the following conjecture. 

\begin{conj}\label{conj-HCextend} 
Let $m \geq 2$ and $q \geq 3$. For each $\ba=(a_0, a_1, \ldots, a_{n-1}) \in (\gf(q)^*)^n$, we have  
$$
d(\overline{\cHam(q,m, \ba)}(-\bone))=3, 
$$ 
provided that $(q, m)\neq (2^s, 2)$ with $s \geq 2$. 
\end{conj}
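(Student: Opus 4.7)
The plan is to recast Theorem~\ref{thm-Hamcodemain} as an affine-collinearity problem and then apply classical extremal bounds from projective geometry. Setting $\alpha_i:=a_i\alpha^i$, the condition $|S(q,m,\ba)\cap\gf(q)^*|\geq 1$ is equivalent to the existence of indices $0\leq i_1<i_2<i_3\leq n-1$ with $\alpha_{i_3}-\alpha_{i_1}=c(\alpha_{i_2}-\alpha_{i_1})$ for some $c\in\gf(q)^*$, i.e., to three of the representatives $\alpha_0,\ldots,\alpha_{n-1}$ being affinely $\gf(q)$-collinear inside $\gf(q^m)\cong\gf(q)^m$. I would view $\gf(q)^m$ as the standard affine chart of $\PG(m,\gf(q))$, under which affine $\gf(q)$-collinearity of three points coincides with projective collinearity in $\PG(m,\gf(q))$. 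A preliminary but critical observation is that a triple $\{\bzero,\alpha_i,\alpha_j\}$ with $i\neq j$ is never collinear, since that would place $\alpha_i,\alpha_j$ in a common $1$-dimensional subspace of $\gf(q)^m$, contradicting that they represent distinct projective points of $\PG(m-1,\gf(q))$. Consequently, any collinear triple inside $\{\bzero\}\cup\{\alpha_0,\ldots,\alpha_{n-1}\}$ automatically consists of three genuine transversal representatives.

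The proof then splits into two cases according to $m$. For $m=2$ with $q$ odd, $\{\bzero,\alpha_0,\ldots,\alpha_q\}$ is a set of $q+2$ affine points in $\PG(2,\gf(q))$; Segre's theorem asserts that for odd $q$ every arc in $\PG(2,\gf(q))$ has at most $q+1$ points (no hyperovals exist), so this set cannot be an arc and must contain a collinear triple --- necessarily three transversal representatives. For $m\geq 3$, I would pick any $3$-dimensional $\gf(q)$-subspace $W\subseteq\gf(q)^m$; the transversal representatives contained in $W$ number $(q^3-1)/(q-1)=q^2+q+1$, one for each projective point of $W$. Embedded affinely into $\PG(3,\gf(q))$, these $q^2+q+1$ points strictly exceed the classical cap bound $q^2+1$ of Bose (for $q$ odd) and Qvist/Panella (for even $q\geq 4$); thus they cannot form a cap, and three of them are projectively collinear --- hence affinely $\gf(q)$-collinear in $\gf(q^m)$. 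These two cases together cover exactly the range left open by the exception $(q,m)=(2^s,2)$ with $s\geq 2$, which is precisely when the $(q+2)$-point set admits a hyperoval configuration in $\PG(2,\gf(q))$.

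The hard part is the invocation of the sharp cap bound in $\PG(3,\gf(q))$: the naive tangent-counting (each of the $q^2+q+1$ lines through a fixed cap point meets the cap in at most one further point) yields only $|K|\leq q^2+q+2$, which is too weak here. The genuinely non-trivial input is the strengthening to $|K|\leq q^2+1$, which rests on a plane-section analysis reducing to arc bounds in $\PG(2,\gf(q))$; this sharp bound holds for every $q\geq 3$ but breaks down at $q=2$, which is exactly why the conjecture requires $q\geq 3$. Theorem~\ref{MDS::1} confirms independently that the excluded case $(q,m)=(2^s,2)$, $s\geq 2$, is a genuine exception, matching the cap/arc obstruction perfectly.
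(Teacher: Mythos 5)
Your proposal addresses a statement the paper itself leaves open: Conjecture \ref{conj-HCextend} is explicitly a conjecture, the authors write that it ``looks hard to prove or disprove,'' and they settle only the cases $q$ odd with $m$ even (Theorem \ref{thm15}), $m\equiv 0\pmod 3$ (Theorem \ref{thm16}), and $\ba=\bone$ (Theorem \ref{HAMM}). As far as I can check, your argument is a complete and correct proof of the full conjecture and strictly subsumes those partial results. The reduction is right: by Theorem \ref{thm-Hamcodemain} (equivalently, by exhibiting a weight-$3$ codeword of $\cHam(q,m,\ba)$ whose coordinates sum to zero), $d=3$ is equivalent to three of the representatives $\alpha_0,\dots,\alpha_{n-1}$ lying on a common affine $\gf(q)$-line in $\gf(q)^m$, and your remark that no collinear triple can contain $\bzero$ (pairwise non-proportionality of the $\alpha_i$) is exactly what makes the $m=2$ count work: for $q$ odd the $(q+2)$-point set $\{\bzero\}\cup\{\alpha_i\}$ exceeds the arc bound $q+1$ in $\PG(2,q)$, which is the same fact the paper extracts from Lemma \ref{lem:9} inside Theorem \ref{thm15}. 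For $m\ge 3$ your decisive new idea is to restrict to an \emph{arbitrary} $3$-dimensional $\gf(q)$-subspace $W$: it contains exactly $(q^3-1)/(q-1)=q^2+q+1$ of the representatives, one per point of $\PG(W)$, and since the maximum cap in $\PG(3,q)$ has $q^2+1$ points for every $q>2$ (Bose for odd $q$, Qvist for even $q$), three of them must be collinear. The paper's Theorem \ref{thm16} performs the same restriction but insists that $W$ be the subfield $\gf(q^3)$ so that the restricted configuration is again a Hamming code, which is why it needs $3\mid m$; the collinearity condition is purely affine-linear, so the multiplicative structure is irrelevant, and this single observation removes all divisibility hypotheses on $m$. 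The geometric input is the same one hiding behind the paper's Lemma \ref{AMDS} (the length bound $N\le q^2+1$ for $[N,N-4]$ AMDS codes is precisely the $\PG(3,q)$ cap bound).

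Two points to tighten before presenting this as a resolution of the conjecture. First, the plane bound $m_2(2,q)=q+1$ for odd $q$ is usually credited to Bose; Segre's theorem is the stronger classification of $(q+1)$-arcs as conics. Second, you should state explicitly the two hinges you use silently: three points of the affine chart that are projectively collinear in $\PG(3,q)$ are affinely $\gf(q)$-collinear, and an affine line of $W$ is an affine line of $\gf(q)^m$. Both are immediate, but the whole argument rests on them. I find no gap.
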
   

It looks hard to prove or disprove the conjecture above. The reader is cordially invited to attack this problem. Below we settle this conjecture in some cases.

\begin{theorem}\label{thm15}
Let $q$ be odd and $m\geq 2$ be even. Then $d(\overline{\cHam(q,m, \ba)}(-\bone))=3$. 
\end{theorem}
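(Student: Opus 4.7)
The plan is to exhibit a weight-$3$ codeword of $\overline{\cHam(q,m,\ba)}(-\bone)$; combined with $d \ge 3$ (which follows from $d(\cHam(q,m,\ba))=3$) this forces $d=3$. By Theorem~\ref{thm-Hamcodemain}, this amounts to finding three distinct indices $j_1, j_2, j_3 \in \{0,1,\ldots,n-1\}$ and nonzero $c_1, c_2, c_3 \in \gf(q)$ with $c_1+c_2+c_3=0$ and $c_1 a_{j_1}\alpha^{j_1}+c_2 a_{j_2}\alpha^{j_2}+c_3 a_{j_3}\alpha^{j_3}=0$.

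First I would exploit the hypothesis that $m$ is even to pass to the subfield $\gf(q^2)\subset\gf(q^m)$. One checks that $q+1\mid n$ whenever $m$ is even (from $n=1+q+q^2+\cdots+q^{m-1}\equiv 0\pmod{q+1}$), so $K:=n/(q+1)$ is a positive integer; moreover, the $q+1$ elements $\alpha^{0},\alpha^{K},\ldots,\alpha^{qK}$ all lie in $\gf(q^2)^*$ and form a complete set of representatives for the $q+1$ classes of $\gf(q^2)^*/\gf(q)^*$. Setting $\beta_j:=a_{jK}\alpha^{jK}$ for $0\le j\le q$, it therefore suffices to find three $\beta_j$'s that are $\gf(q)$-affinely collinear when $\gf(q^2)$ is identified with the affine plane $\gf(q)^2$: such a relation lifts directly to a weight-$3$ codeword of $\cHam(q,m,\ba)$ supported on the three distinct coordinates $j_1K,j_2K,j_3K\in\{jK:0\le j\le q\}$ whose coordinate-sum is zero.

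Suppose, toward contradiction, that no three of $\beta_0,\ldots,\beta_q$ are $\gf(q)$-affinely collinear. Then $\{\beta_0,\ldots,\beta_q\}$ is a $(q+1)$-arc in $\PG(2,\gf(q))$ (via the affine embedding $\gf(q)^2\hookrightarrow\PG(2,\gf(q))$). Because $q$ is odd, Segre's classical theorem says every $(q+1)$-arc in $\PG(2,\gf(q))$ is a non-degenerate conic $C$. The origin $O$ of $\gf(q^2)$ is not on $C$ (each $\beta_j$ is nonzero), and by the pairwise projective distinctness of the $\beta_j$'s each of the $q+1$ affine lines through $O$ contains exactly one $\beta_j$ and hence meets $C$ in exactly one point. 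Every line through $O$ is therefore tangent to $C$, so all $q+1$ tangent lines of $C$ are concurrent at $O$.

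The final contradiction comes from the impossibility of such concurrence. Writing $C$ as $\{P:P^T M P=0\}$ for a nondegenerate symmetric $3\times 3$ matrix $M$ over $\gf(q)$, the tangent at $P\in C$ is the line $P^T M x=0$, and concurrence at $O$ yields $(MO)^T P=0$ for every $P\in C$; this places all $q+1\ge 4$ points of $C$ on the single line $(MO)^T x=0$, contradicting the bound that a non-degenerate conic meets any line in at most two points. The main obstacle I anticipate is the careful bookkeeping for the subfield reduction, in particular verifying the projective distinctness of the classes $[\alpha^{jK}]$ for $0\le j\le q$ and checking that the collinearity relation produces a genuine weight-$3$ codeword (three distinct coordinates, three nonzero scalars summing to zero); the geometric heart of the argument—Segre's theorem combined with the non-concurrence of the tangents of a non-degenerate conic—is classical and is precisely what restricts the argument to $q$ odd, consistent with the contrary behaviour in Theorem~\ref{MDS::1}.
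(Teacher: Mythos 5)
Your proposal is correct, and it shares with the paper the key first step: using $m$ even to get $(q+1)\mid n$ and restricting attention to the $q+1$ coordinates $jK$, $0\le j\le q$, whose defining field elements are a system of representatives of $\gf(q^2)^*/\gf(q)^*$. Where you diverge is in how the required weight-$3$ codeword with zero coordinate sum is produced. The paper packages the $q+1$ selected coordinates as a sub-Hamming code $\C_1=\cHam(q,2,\ba')$ with parameters $[q+1,q-1,3]$ and observes that if $d(\overline{\C_1}(-\bone'))$ were $4$, then $\overline{\C_1}(-\bone')$ would be a $[q+2,q-1,4]$ MDS code, so its dual would be a $[q+2,3,q]$ MDS code over $\gf(q)$, contradicting Lemma \ref{lem:9} for odd $q$. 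You instead argue that failure would make $\{\beta_0,\ldots,\beta_q\}$ a $(q+1)$-arc, invoke Segre's theorem to identify it with a non-degenerate conic, and then derive a contradiction from the fact that all $q+1$ tangents would have to pass through the origin. Both arguments are sound, and they are essentially projectively dual to one another: the columns of a generator matrix of the paper's putative $[q+2,3,q]$ MDS code are precisely your $q+1$ affine points together with the origin, so the paper's contradiction is the non-existence of $(q+2)$-arcs (hyperovals) in $\PG(2,\gf(q))$ for odd $q$, while yours is Segre plus non-concurrence of the tangents of a conic. The paper's route is lighter — the non-existence of $(q+2)$-arcs for odd $q$ follows from an elementary parity count and is already quoted as Lemma \ref{lem:9} — whereas yours imports Segre's theorem, a genuinely deeper result; in exchange, your argument makes geometrically transparent why the origin (i.e., the extension coordinate) is the obstruction, and it would let you avoid citing the MDS lemma altogether. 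The bookkeeping points you flagged (distinctness of the $\gf(q)^*$-classes of the $\alpha^{jK}$, non-vanishing of all three coefficients in an affine collinearity relation among distinct points) are all routine and check out.
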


\begin{proof}
It is clear that $d(\overline{\cHam(q,m, \ba)}(-\bone))\geq 3$. We now prove that $d(\overline{\cHam(q,m, \ba)}(-\bone))\leq 3$. To prove the desired conclusion, we only need to prove that there is a codeword 
$$\bc=(c_0,c_1,\ldots,c_{n-1})\in \cHam(q,m, \ba)$$ 
with $\wt(\bc)=3$ and $\sum_{i=0}^{n-1} c_i=0$.  

Since $m$ is even, we deduce that $(q^2-1)\mid (q^m-1)$. Consequently, $q+1$ divides $n$. Let $\overline{n}=\frac{n}{q+1}$ and  
$$\C_1=\left\{ \bc=(c_{0},c_{\overline{n}},\ldots,c_{\overline{n} q})\in \gf(q)^{q+1}: \sum_{i=0}^{q} c_{\overline{n} i } \alpha_{\overline{n} i }=0  \right\}. $$
Notice that $\alpha^{\overline{n}}$ is a primitive element of $\gf(q^2)$. Let $$\ba'=(a_0,a_{\overline{n}}, a_{\overline{n}2 }, \ldots, a_{\overline{n}q})\in (\gf(q)^*)^{q+1},$$
then $\C_1$ is the Hamming code $\cHam(q, 2, \ba')$, and $\C_1$ has parameters $[q+1,q-1,3]$.

Let $\bone'$ denote the all-one vector over $\gf(q)$ with length $q+1$. Now consider the extended code $\overline{\C_1}(-\bone')$. It is easy to see that $\dim(\overline{\C_1}(-\bone'))=\dim(\C_1)=q-1$ and $d( \overline{\C_1}(-\bone') )= 3$ or $4$. If $$d(\overline{\C_1}(-\bone')) = 4,$$ then $\overline{\C_1}(-\bone')^\bot $ is a $[q+2, 3, q]$ MDS code over $\gf(q)$. This contradicts the conclusion of Lemma \ref{lem:9}. Therefore, $d(\overline{\C_1}(-\bone'))=3$. It follows that there is a codeword $\bc'=(c' _{0},c'_{\overline{n}},\ldots,c'_{\overline{n} q})\in \C_1$ with $\wt(\bc')=3$ and $\sum_{i=0}^{q} c' _{\overline{n} i}=0$.

Let $\bc=(c_0,c_1,\ldots, c_{n-1})$, where $c_i=0$ if $i\not\equiv 0 \pmod{\overline{n}}$, and $ c_{\overline{n} i}= c' _{\overline{n} i}$ for each $i\in \{0,1,\cdots, q\}$. It is clear that 
\begin{align*}
	\sum_{i=0}^{n-1}c_i \alpha_i=\sum_{i=0}^{q} c_{\overline{n} i}\alpha_{\overline{n}i}=0.
\end{align*}
Therefore, $\bc \in \cHam(q, m, \ba)$. Note that  $\wt(\bc)=3$ and 
$$\sum_{i=0}^{n-1}c_i=\sum_{i=0}^{q} c' _{\overline{n}  i}=0.$$
It then follows that $d(\overline{\cHam(q,m, \ba)}(-\bone))\leq 3$. This completes the proof. 
\end{proof}

\begin{theorem}\label{thm16}
Let $q>2$ be a prime power and $m\equiv 0\pmod{3}$. Then  $d(\overline{\cHam(q, m, \ba)}(-\bone))=3$. 
\end{theorem}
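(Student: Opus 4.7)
The plan is to adapt the reduction used in the proof of Theorem \ref{thm15}, replacing the intermediate field $\gf(q^2)$ with $\gf(q^3)$. Since $3\mid m$, the divisibility $(q^3-1)\mid (q^m-1)$ yields that $q^2+q+1$ divides $n$. Set $\overline{n}=n/(q^2+q+1)$ and $\beta=\alpha^{\overline{n}}$; then $\beta$ has multiplicative order $q^3-1$, so $\beta$ is a primitive element of $\gf(q^3)\subseteq\gf(q^m)$, and the $q^2+q+1$ elements $\beta^0,\beta^1,\ldots,\beta^{q^2+q}$ form a full set of representatives of the points of $\PG(2,\gf(q))$. Define the sub-code
\[
\C_1=\left\{\bc=(c_{0},c_{\overline{n}},\ldots,c_{\overline{n}(q^2+q)})\in\gf(q)^{q^2+q+1}:\sum_{j=0}^{q^2+q}c_{\overline{n}j}\,\alpha_{\overline{n}j}=0\right\}.
\]
With $\ba'=(a_0,a_{\overline{n}},\ldots,a_{\overline{n}(q^2+q)})$, this is precisely the Hamming code $\cHam(q,3,\ba')$ and therefore has parameters $[q^2+q+1,\,q^2+q-2,\,3]$.

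The key reduction is a lifting argument: given any weight-$3$ codeword $(c'_{0},c'_{\overline{n}},\ldots,c'_{\overline{n}(q^2+q)})\in\C_1$ whose coordinate sum is zero, define $\bc\in\gf(q)^n$ by setting $c_{\overline{n}j}=c'_{\overline{n}j}$ for $0\le j\le q^2+q$ and $c_i=0$ for all other $i$. Then $\sum_{i=0}^{n-1}c_i\alpha_i=\sum_{j=0}^{q^2+q}c'_{\overline{n}j}\alpha_{\overline{n}j}=0$, so $\bc\in\cHam(q,m,\ba)$, with $\wt(\bc)=3$ and $\sum_i c_i=0$. Consequently $(\bc,0)$ is a weight-$3$ codeword of $\overline{\cHam(q,m,\ba)}(-\bone)$, giving $d(\overline{\cHam(q,m,\ba)}(-\bone))\le 3$; combined with the trivial bound $d\ge d(\cHam(q,m,\ba))=3$, this would yield the theorem.

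It thus remains to establish that $\C_1$ contains a weight-$3$ codeword with zero coordinate sum, equivalently, that $d(\overline{\C_1}(-\bone'))=3$, where $\bone'\in\gf(q)^{q^2+q+1}$ denotes the all-one vector. The extended code $\overline{\C_1}(-\bone')$ has length $n'=q^2+q+2$ and dimension $q^2+q-2$; appending a single coordinate can add at most one to each weight, so the new minimum distance $d'$ satisfies $d'\in\{3,4\}$. If $d'=4$, then $\overline{\C_1}(-\bone')$ is an $[n',n'-4,4]$ AMDS code over $\gf(q)$, and Lemma \ref{AMDS} (applicable since $q>2$) forces $n'\le q^2+1$, contradicting $n'=q^2+q+2$. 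Hence $d'=3$, as desired.

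The main obstacle of the argument is the structural identification of $\C_1$ as a Hamming code of length $q^2+q+1$, which depends on verifying that $\beta=\alpha^{\overline{n}}$ is primitive in $\gf(q^3)$ and that the columns $a_{\overline{n}j}\beta^j$ span $\gf(q^3)$ over $\gf(q)$. Once this structural claim is in hand, the minimum-distance analysis of $\overline{\C_1}(-\bone')$ reduces to a short application of Lemma \ref{AMDS}, and the lifting argument transports the resulting weight-$3$ codeword back to $\cHam(q,m,\ba)$.
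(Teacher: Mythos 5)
Your proposal is correct and follows essentially the same route as the paper's own proof: reduce to the subfield $\gf(q^3)$ to extract a length-$(q^2+q+1)$ Hamming subcode $\C_1$, rule out $d(\overline{\C_1}(-\bone'))=4$ via the AMDS length bound of Lemma \ref{AMDS}, and lift the resulting weight-$3$ codeword with zero coordinate sum back to $\cHam(q,m,\ba)$. No gaps; the argument matches the paper step for step.
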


\begin{proof}
Similar to Theorem \ref{thm15}, we only need to prove that there is a codeword 
$$\bc=(c_0,c_1,\ldots,c_{n-1})\in \cHam(q,m, \ba)$$ 
with $\wt(\bc)=3$ and $\sum_{i=0}^{n-1} c_i=0$.  
  
Since $m\equiv 0\pmod{3}$, it follows that $(q^3-1)\mid (q^m-1)$. Consequently, $q^2+q+1$ divides $n$. Let $\overline{n}=\frac{n}{q^2+q+1}$, and  
$$\C_1=\left\{ (c_{0},c_{\overline{n}},\ldots,c_{\overline{n} (q^2+q) })\in \gf(q)^{q^2+q+1}: \sum_{i=0}^{q^2+q} c_{\overline{n} i} \alpha_{\overline{n} i}=0  \right\}. $$
Notice that $\alpha^{\overline{n}}$ is a primitive element of $\gf(q^3)$. Let $$\ba'=(a_0,a_{\overline{n}}, a_{\overline{n}2 }, \ldots, a_{\overline{n}(q^2+q)})\in (\gf(q)^*)^{q^2+q+1},$$
then $\C_1$ is the Hamming code $\cHam(q, 3, \ba')$, and $\C_1$ has parameters $[q^2+q+1,q^2+q-2,3]$.
 
Let $\bone'$ denote the all-one vector over $\gf(q)$ with length $q^2+q+1$. Now consider the extended code $\overline{\C_1}(-\bone')$. It is easily seen that 
$$\dim(\overline{\C_1}(-\bone'))=\dim(\C_1)=q^2+q-2$$
 and $d( \overline{\C_1}(-\bone') )=3$ or $4$. If $d( \overline{\C_1}(-\bone'))=4$, then $\overline{\C_1}(-\bone')$ is a $[q^2+q+2, q^2+q-2, 4]$ AMDS code over $\gf(q)$. According to Lemma \ref{AMDS}, a linear code over $\gf(q)$ with parameters $[q^2+q+2, q^2+q-2, 4]$ does not exist. Therefore, $d(\overline{\C_1}(-\bone'))=3$. It follows that there is a codeword $\bc'=(c' _{0},c'_{\overline{n}},\ldots,c'_{\overline{n}(q^2+q)})\in \C_1$ with $\wt(\bc')=3$ and $$\sum_{i=0}^{q^2+q} c'_{\overline{n} i }=0.$$

Let $\bc=(c_0,c_1,\ldots, c_{n-1})$, where $c_i=0$ if $i\not\equiv 0 \pmod{\overline{n}}$, and $ c_{\overline{n}  i}= c' _{\overline{n}  i}$ for each $i\in \{0,1,\cdots, q^2+q\}$. It is clear that 
\begin{align*}
	\sum_{i=0}^{n-1}c_i \alpha_i=\sum_{i=0}^{q^2+q} c_{\overline{n} i}\alpha_{\overline{n} i}=0.
\end{align*}
Therefore, $\bc \in \cHam(q, m, \ba)$. Note that $\wt(\bc)=3$ and 
$$\sum_{i=0}^{n-1}c_i=\sum_{i=0}^{q^2+q} c' _{\overline{n} i}=0.$$
It then follows that $d(\overline{\cHam(q,m, \ba)}(-\bone))\leq 3$. This completes the proof. 
\end{proof}

The minimum distance of the standardly extended codes of constacyclic Hamming codes $\cHam(q, m, \bone)$  is documented in the following theorem. 

\begin{theorem}\label{HAMM}
Let $q>2$ be a prime power and let $m\geq 2$. Then $d(\overline{\cHam(q, m, \bone)}(-\bone))=3$.
\end{theorem}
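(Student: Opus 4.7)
By Theorem \ref{thm-Hamcodemain}, the statement $d(\overline{\cHam(q, m, \bone)}(-\bone)) = 3$ is equivalent to the existence of integers $1 \le u < v \le n - 1$ such that $(\alpha^v - 1)/(\alpha^u - 1) \in \gf(q)^*$, where $n = (q^m-1)/(q-1)$. Geometrically, this asks that the set $A = \{1, \alpha, \alpha^2, \ldots, \alpha^{n-1}\}$, viewed as $n$ points in the affine $\gf(q)$-space $\gf(q^m)$, contains three points on a common affine $\gf(q)$-line; equivalently, that $A$ is not an affine cap. My plan is to split into cases based on the divisibility of $m$ and the parity of $q$.

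Two sub-cases are already handled by results proved earlier. If $m$ is even and $q$ is odd, Theorem \ref{thm15} applies with $\ba = \bone$. If $3 \mid m$, Theorem \ref{thm16} applies. In both proofs, one restricts codewords to coordinates in a suitable arithmetic progression, obtaining a Hamming sub-code $\cHam(q, d, \ba')$ for $d \in \{2, 3\}$, and then uses the MDS/AMDS non-existence results (Lemmas \ref{lem:9} and \ref{AMDS}) to force the sub-code's extended version to have minimum distance $3$, a property that lifts back to the original code.

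The remaining cases, namely $q$ even with $3 \nmid m$ and $q$ odd with $m$ odd and $3 \nmid m$, require a more hands-on argument, because $m$ has no useful small divisor to reduce to. For these I would exploit the $\lambda$-constacyclic structure of $\cHam(q, m, \bone)$ established in Theorem \ref{thm-12}, where $\lambda = \alpha^n$ is a primitive element of $\gf(q)^*$ with $\lambda \ne 1$. Given any weight-$3$ codeword $c(x) = ax^{i_1} + bx^{i_2} + dx^{i_3}$ in $\cHam(q, m, \bone)$, its $n$ constacyclic shifts produce codewords whose coordinate sums range over the set $\{a+b+d,\, a+b+\lambda d,\, a+\lambda b+\lambda d,\, \lambda(a+b+d)\}$, depending on how many of the nonzero positions wrap past $n-1$. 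A shifted codeword gives a weight-$3$ codeword in $\overline{\cHam(q, m, \bone)}(-\bone)$ precisely when one of the three distinct expressions $a+b+d$, $a+b+\lambda d$, $a+\lambda b+\lambda d$ vanishes.

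The hard step will be showing that, among all weight-$3$ codewords of $\cHam(q,m,\bone)$, at least one produces a ratio $(a : b : d) \in \PG(2, \gf(q))$ lying on one of the three associated lines; I expect this to follow from a counting argument once $n$ is sufficiently large compared to $q$, since each line of $\PG(m-1, \gf(q))$ contributes $\binom{q+1}{3}$ typically distinct ratios and there are $\Theta(n^2/(q(q+1)))$ such lines. The tight boundary case $(q, m) = (2^s, 2)$ would be dispatched by an explicit construction: for instance, I would analyze the polynomial identity $(\alpha^v - 1)/(\alpha - 1) = 1 + \alpha + \cdots + \alpha^{v-1}$ and choose $v$ so that this sum has $\gf(q)$-trace lying in the Frobenius-fixed subfield, using that $\tr_{q^2/q}(\alpha^v) + \tr_{q^2/q}(\alpha^u)$ can be matched against the Frobenius-symmetric expansion of $(\alpha^v - 1)(\alpha^{qu} - 1) - (\alpha^u - 1)(\alpha^{qv} - 1)$.
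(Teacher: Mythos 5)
Your reduction via Theorem \ref{thm-Hamcodemain} is correct, and citing Theorems \ref{thm15} and \ref{thm16} legitimately disposes of the sub-cases ($q$ odd, $m$ even) and $3\mid m$. Your observation about the four possible coordinate sums $a+b+d$, $a+b+\lambda d$, $a+\lambda b+\lambda d$, $\lambda(a+b+d)$ of the constacyclic shifts of a weight-$3$ codeword is also correct and is in fact the same mechanism the paper exploits. But the remaining cases --- which include all even $q$ with $3\nmid m$ and all odd $q$ with $m$ odd and $3\nmid m$ --- are not proved; they are only a plan, and the plan has a real flaw. The proposed counting argument does not work: the ratios $(a:b:d)$ of the collinear triples all live in $\PG(2,\gf(q))$, which has only $q^2+q+1$ points, while the union of the three target lines misses roughly $q^2-2q$ of them. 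Having $\Theta(n^2 q)$ triples therefore gives no contradiction if every ratio lands in that large complement; the map from triples to ratios is massively non-injective and you have no equidistribution statement to rule this out. The ``explicit construction'' you defer to for $(q,m)=(2^s,2)$ is likewise unexecuted, and this is exactly the delicate case: Theorem \ref{MDS::1} shows that for a different choice of $\ba$ the extended code there has minimum distance $4$, so the claim for $\ba=\bone$ cannot follow from any argument insensitive to the choice of representatives.

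For comparison, the paper's proof is short, uniform in $(q,m)$, and needs no case split on $m$. It considers the $n$ vectors $\alpha,\ \alpha-1,\ \alpha-\alpha^2,\ \ldots,\ \alpha-\alpha^{n-1}$. If some two are $\gf(q)$-proportional, one immediately obtains either a contradiction with $d(\cHam(q,m,\bone))=3$ or a weight-$3$ codeword $(1-c)x-x^i+cx^j$ whose coefficients already sum to zero. If no two are proportional, then by counting they represent all $n$ points of the projective space, so $\alpha-\lambda$ is proportional to some $\alpha-\alpha^i$ with $i\geq 2$; this yields the codeword $c\lambda+(1-c)x-x^i$, and multiplying by $x^{n-1}$ modulo $x^n-\lambda$ (your ``wrap'' observation, applied to this one concrete codeword) produces $(1-c)-x^{i-1}+cx^{n-1}$, whose coefficients sum to zero. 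The pigeonhole on the $n$ difference vectors is the idea your proposal is missing: it manufactures the specific weight-$3$ codeword to which the shift trick can be applied, rather than hoping a random one among many has the right ratio.
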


\begin{proof}
Recall that $\alpha$ is a primitive element of $\gf(q^m)$, $n=\frac{q^m-1}{q-1}$ and $\lambda=\alpha^n$, then $\lambda$ is a primitive element of $\gf(q)$. Similar to Theorem \ref{thm15}, we only need to prove that there are distinct integers $j_0, j_1, j_2$ with $0\leq j_i\leq n-1$ and $c\in \gf(q)\backslash \{0,1\}$ such that $(1-c) x^{j_0}+cx^{j_1}-x^{j_2}\in \cHam(q, m, \bone)$. 

By Theorem \ref{thm-12}, $\cHam(q, m, \bone)$ is the $\lambda$-constacyclic code of length $n$ over $\gf(q)$ with generator polynomial $\m_{\alpha}(x)$. Since $d(\C)=3$, we deduce $\alpha-\alpha^i \neq 0$ for each $i\in \{0,2, \cdots, n-1 \}$. The rest of the proof is divided into the following cases. 

\begin{enumerate}
\item[Case 1.] If there are $i \in \{0,2, \cdots, n-1\}$ and $c\in \gf(q)^*$ such that $\alpha-\alpha^i= c \alpha$, then $$(1-c)\alpha-\alpha^i=0,$$ 
which contradicts the fact that $d(\cHam(q, m, \bone))=3$.
\item[Case 2.] If there are $i, j \in \{0,2,\cdots, n-1\}$ with $i\neq j$ and $c\in \gf(q)^*$ such that $$\alpha-\alpha^i=c( \alpha-\alpha^j),$$ then $(1-c)\alpha-\alpha^i+c \alpha^j=0$. Since $d(\cHam(q, m, \bone))=3$, we have $c\neq 1$. Consequently, 
$$(1-c)x-x^i+c x^j\in \cHam(q, m, \bone).$$ 
\item[Case 3.] If $\alpha-\alpha^i \neq c (\alpha-\alpha^j)$ for any $c\in \gf(q)$ and $i, j \in \{0,2,\cdots, n-1\}$ with $i\neq j$, then $\{\alpha, \alpha-1, \alpha-\alpha^2, \cdots, \alpha-\alpha^{n-1}\}$ is the set of projective points in $\PG(1,\gf(q))$. Since $d(\cHam(q, m, \bone))=3$ and $\lambda\notin \{0,1\}$, we deduce $\alpha-\lambda \neq c \alpha$ for any $c\in \gf(q)^*$, and $\alpha-\lambda \neq c(\alpha-1)$ for any $c\in \gf(q)^*$. Therefore, there are $i$ with $2\leq i\leq n-1$ and $c\in \gf(q)^*$ such that $\alpha-\alpha^{i}=c(\alpha-\lambda)$. It follows that 
\begin{equation}\label{EEQQ48}
c(x):=c\lambda+(1-c)x-x^i\in \cHam(q, m, \bone).
\end{equation}
Since $d(\cHam(q, m, \bone))=3$, we have $c\neq 1$. It follows from (\ref{EEQQ48}) that 
\begin{align*}
x^{n-1} c(x)&\equiv c\lambda x^{n-1}+(1-c)\lambda -\lambda x^{i-1}\pmod{x^n-\lambda}.
\end{align*}
It then follows that 
$$(1-c)-x^{i-1}+c x^{n-1}\in \cHam(q, m, \bone).$$ 
\end{enumerate}
This completes the proof.
\end{proof}

\begin{example}
We have the following examples of the code of Theorem \ref{HAMM}.
\begin{enumerate}
\item If $(q,m)=(3, 2)$, the code $\overline{\cHam(q, m, \bone)}(-\bone)$ has parameters $[5, 2, 3]$ and is  distance-optimal \cite{Grassl}.	
\item If $(q,m)=(3, 3)$, the code $\overline{\cHam(q, m, \bone)}(-\bone)$ has parameters $[14, 10, 3]$ and is 
distance-optimal \cite{Grassl}.	
%\item If $(q,m)=(3, 4)$, the code $\overline{\cHam(q, m, \bone)}(-\bone)$ has parameters $[41, 36, 3]$ and is 
% distance-optimal \cite{Grassl}.	
% \item If $(q,m)=(3, 5)$, the code $\overline{\cHam(q, m, \bone)}(-\bone)$ has parameters $[122, 116, 3]$ 
% and is  distance-optimal \cite{Grassl}.	
\end{enumerate}	
\end{example}

The parameters of the standardly extended codes of $\cSim(q, m, \ba)$ are documented in the next theorem. 

\begin{theorem}\label{thm-nov251}
Let $\ba \notin \cHam(q,m, \bone) $. Then $\overline{\cSim(q, m, \ba)}(-\bone)$ has parameters 
$$\left[\frac{q^m+q-2}{q-1}, m, q^{m-1}\right]$$
and weight enumerator
$$1+(q^{m-1}-1)z^{q^{m-1}}+q^{m-1}(q-1)z^{q^{m-1}+1}.$$ 
The dual code $\overline{\cSim(q, m, \ba)}(-\bone)^{\bot}$ has parameters $$\left[\frac{q^m+q-2}{q-1},\frac{q^m+q-2}{q-1}- m, 2\right],$$ 
and is distance-optimal.
\end{theorem}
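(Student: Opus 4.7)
The plan is to use a trace representation of $\cSim(q,m,\ba)$ and observe that the hypothesis $\ba \notin \cHam(q,m,\bone)$ translates into a simple nonvanishing condition. Since $\cSim(q,m,\ba) = \cSim(q,m,\bone)\,D_{\ba}$ with $D_{\ba} = \mathrm{diag}(a_0,\ldots,a_{n-1})$, the standard trace description of the simplex code $\cSim(q,m,\bone)$ (whose generator matrix has the representatives $\alpha^i$ as columns) yields
\[
\cSim(q,m,\ba) = \left\{\left(a_i\,\tr_{q^m/q}(b\alpha^i)\right)_{i=0}^{n-1} : b \in \gf(q^m)\right\}.
\]
The extended coordinate under $-\bone$ is then
\[
-\sum_{i=0}^{n-1} a_i\,\tr_{q^m/q}(b\alpha^i) = -\tr_{q^m/q}(bA), \qquad A := \sum_{i=0}^{n-1} a_i\alpha^i,
\]
and $\ba \notin \cHam(q,m,\bone)$ is precisely $A \neq 0$.

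For the weight enumerator, the first $n$ coordinates of $\bc(b)$ already have weight $q^{m-1}$ for every $b \neq 0$ (the classical one-weight property of the simplex code, unaffected by scaling by nonzero $a_i$), so the extended codeword has weight $q^{m-1}$ if $\tr_{q^m/q}(bA) = 0$ and $q^{m-1}+1$ otherwise. Since $A \neq 0$, the map $b \mapsto bA$ is a bijection of $\gf(q^m)$, so the hyperplane $\{b : \tr(bA) = 0\}$ contains exactly $q^{m-1}$ elements; excluding $b=0$, this gives $q^{m-1}-1$ codewords of weight $q^{m-1}$ and $q^{m-1}(q-1)$ of weight $q^{m-1}+1$. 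The length is $n+1 = (q^m+q-2)/(q-1)$, the dimension is preserved under extension and equals $m$, and the minimum distance is $q^{m-1}$.

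For the dual, Theorem \ref{thm:1} immediately gives $d(\overline{\cSim(q,m,\ba)}(-\bone)^\perp) \le d(\cHam(q,m,\ba)) = 3$. To show the value is exactly $2$, apply Theorem \ref{lem:2}: it suffices to produce $a \in \gf(q)^*$ and $1 \le j \le n$ such that $-\bone + a\be_j \in \cHam(q,m,\ba)$. Substituting into the parity-check relation, this is equivalent to $a\cdot a_j\,\alpha^j = A$. Because $\{\alpha^j : 0 \le j \le n-1\}$ is a system of representatives for $\gf(q^m)^*/\gf(q)^*$ and $A \neq 0$, we may factor $A = \mu\alpha^j$ uniquely with $\mu \in \gf(q)^*$ and then take $a = \mu/a_j$, establishing the claim.

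Finally, distance-optimality of the dual follows from the sphere packing bound: a hypothetical $[n+1,\, n+1-m,\, 3]$ code over $\gf(q)$ would require $1 + (n+1)(q-1) \le q^m$; but $(n+1)(q-1) = q^m + q - 2$, so the left side equals $q^m + q - 1 > q^m$, a contradiction. There is no real obstacle here — the only care needed is in bookkeeping the bijection argument for the weight split; the hypothesis $A \neq 0$ is exactly what makes both the weight enumerator argument and the existence of the desired $(a,j)$ for the dual-distance $2$ computation go through.
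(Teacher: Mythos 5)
Your proof is correct, and it reaches the conclusion by a genuinely different route from the paper's. The paper argues indirectly: after observing (as you do, in the equivalent form $A\neq 0$) that $-\bone\notin\cHam(q,m,\ba)$ precisely when $\ba\notin\cHam(q,m,\bone)$, it notes that the only possible nonzero weights are $q^{m-1}$ and $q^{m-1}+1$, solves for the two frequencies using the first and second Pless power moments (which only need $\overline{A}_1^{\perp}=0$), and then extracts $\overline{A}_2^{\perp}=q-1$ from the third power moment to conclude that the dual distance equals $2$. You instead compute everything explicitly through the trace representation: writing the extended coordinate as $-\tr_{q^m/q}(bA)$ with $A=\sum_{i}a_i\alpha^i\neq 0$, you obtain the weight split directly from the size of the hyperplane $\{b\in\gf(q^m):\tr_{q^m/q}(bA)=0\}$, and you exhibit a weight-$2$ dual codeword by factoring $A=\mu\alpha^j$ with $\mu\in\gf(q)^*$ and invoking Theorem \ref{lem:2}. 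Your route is more constructive: it identifies exactly which codewords have which weight and produces the minimum-weight dual codeword explicitly (the uniqueness of the factorization even recovers $\overline{A}_2^{\perp}=q-1$, matching the paper's power-moment computation), whereas the paper's argument avoids the trace machinery and transfers more readily to codes without such an explicit parametrization. Both proofs finish identically with the sphere-packing bound.
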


\begin{proof}
 It is easily verified that $-\1 \in \cHam(q,m, \ba)$ if and only if $\ba \in \cHam(q,m, \bone)$.  Therefore, if $\ba \notin \cHam(q,m, \bone)$, we have $-\1 \notin \cHam(q,m, \ba)$. It follows from Theorem \ref{lem:2} that
  $$d(\overline{\cSim(q, m, \ba)}(-\bone)^{\bot})\geq 2.$$ 
Since $\cHam(q, m, \ba)^{\bot}$ has weight enumerator $1+(q^m-1)z^{q^{m-1}}$, the possible nonzero weights of $$\overline{\cSim(q, m, \ba)}(-\bone)$$ are $q^{m-1}$ and $q^{m-1}+1$. Let $w_1=q^{m-1}$ and $w_2=q^{m-1}+1$. Then
\begin{equation}\label{EQ9}
A_{w_1}+A_{w_2}=q^m-1.	
\end{equation}
Since $d(\overline{\cSim(q, m, \ba)} (-\bone)^{\bot})\geq 2$, by the second Pless power moment, we have  
\begin{equation}\label{EQ10}
w_1A_{w_1}+w_2A_{w_2}=q^{m-1}(q^m+q-2).	
\end{equation}
It follows from (\ref{EQ9}) and (\ref{EQ10}) that 
\begin{align*}
A_{w_1}&=q^{m-1}-1,\\
A_{w_2}&=q^{m-1}(q-1).	
\end{align*}
By the third Pless power moment, we can get $A_2^{\bot}=q-1$. By the sphere packing bound, a linear code over $\gf(q)$ with parameters $[n+1, n+1-m, d\geq 3]$ does not exist. This completes the proof. 
\end{proof}

It is easily seen that
\begin{align*}
\gcd(1+x+\cdots+x^{n-1},x^n-\lambda)=1	
\end{align*}
for $\lambda\neq 1$. Then $\1 \notin \cHam(q,m, \bone)$. The results in the following theorem then follow from Theorem \ref{thm-nov251}. 

\begin{corollary}
Let $q>2$. Then the extended code $\overline{\cSim(q, m, \bone)}(-\bone)$ has parameters 
$$\left[\frac{q^m+q-2}{q-1}, m, q^{m-1}\right]$$
and weight enumerator 
 $$1+(q^{m-1}-1)z^{q^{m-1}}+q^{m-1}(q-1)z^{q^{m-1}+1}.$$
 The dual code $\overline{\cSim(q, m, \bone)}(-\bone)^{\bot}$ has parameters $$\left[\frac{q^m+q-2}{q-1},\frac{q^m+q-2}{q-1}- m, 2\right],$$
and is distance-optimal.
\end{corollary}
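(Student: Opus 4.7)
The plan is to derive this corollary directly from Theorem \ref{thm-nov251} by verifying the single hypothesis that needs checking, namely $\bone \notin \cHam(q,m,\bone)$. Once this inclusion is ruled out, every assertion of the corollary — the length, dimension, minimum distance, weight enumerator, the dual parameters, and distance-optimality — is an instance of the corresponding assertion in Theorem \ref{thm-nov251} with $\ba=\bone$.

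First I would invoke Theorem \ref{thm-12} to identify $\cHam(q,m,\bone)$ with the $\lambda$-constacyclic code of length $n=(q^m-1)/(q-1)$ over $\gf(q)$ generated by $\m_{\alpha}(x)$, where $\lambda=\alpha^n$ is a primitive element of $\gf(q)$. Because $q>2$, we have $\lambda\neq 1$. Under the identification of $\gf(q)^n$ with $\gf(q)[x]/\langle x^n-\lambda\rangle$, the all-one vector $\bone$ corresponds to the polynomial $f(x):=1+x+\cdots+x^{n-1}$, so membership $\bone\in\cHam(q,m,\bone)$ is equivalent to $\m_{\alpha}(x)\mid f(x)$ in $\gf(q)[x]$, which in turn is equivalent to $\m_{\alpha}(x)$ dividing $\gcd(f(x),x^n-\lambda)$.

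Next I would establish that this gcd is trivial. Writing $f(x)=(x^n-1)/(x-1)$, one has
\[
\gcd(x^n-1,\,x^n-\lambda)=\gcd(\lambda-1,\,x^n-\lambda)=1,
\]
since $\lambda-1\in\gf(q)^*$. Therefore $\gcd(f(x),x^n-\lambda)=1$, so $\m_{\alpha}(x)\nmid f(x)$, and hence $\bone\notin\cHam(q,m,\bone)$. This is the only nontrivial check.

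Finally, applying Theorem \ref{thm-nov251} with the choice $\ba=\bone$ (which lies in $(\gf(q)^*)^n$ and satisfies $\ba\notin\cHam(q,m,\bone)$ by the previous step) yields immediately all the stated data: $\overline{\cSim(q,m,\bone)}(-\bone)$ has parameters $\left[\tfrac{q^m+q-2}{q-1},m,q^{m-1}\right]$ with the claimed weight enumerator $1+(q^{m-1}-1)z^{q^{m-1}}+q^{m-1}(q-1)z^{q^{m-1}+1}$, while its dual is $\left[\tfrac{q^m+q-2}{q-1},\tfrac{q^m+q-2}{q-1}-m,2\right]$ and is distance-optimal. No obstacle is really anticipated; the only potentially confusing step is translating the codeword/polynomial correspondence for a $\lambda$-constacyclic code (rather than a cyclic code) so that the factorization $(x^n-1)/(x-1)$ is used correctly modulo $x^n-\lambda$, but this is dispatched by the elementary gcd computation above.
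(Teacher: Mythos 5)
Your proposal is correct and follows essentially the same route as the paper: the paper also reduces the corollary to Theorem \ref{thm-nov251} by observing that $\gcd(1+x+\cdots+x^{n-1},\,x^n-\lambda)=1$ for $\lambda\neq 1$, hence $\bone\notin\cHam(q,m,\bone)$. Your write-up merely makes explicit the constacyclic identification via Theorem \ref{thm-12} and the elementary gcd computation $\gcd(x^n-1,x^n-\lambda)=\gcd(\lambda-1,x^n-\lambda)=1$ that the paper leaves as "easily seen."
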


\section{Summary of contributions and concluding remarks}\label{Sec:7}

It was demonstrated in Example \ref{exam-HCcountere1}  that $\overline{\C_1}(-\bone)$ and $\overline{\C_2}(-\bone)$ could be very different even if 
$\C_1$ and $\C_2$ are scalar-equivalent. The determination of the minimum distance of an extended code $\overline{\C}(\bu)$ 
is a very challenging problem in general, as it requires a lot of information on all the minimum weight codewords in $\C$. 
These may partially explain why the minimum distance of the sdandardly extended Hamming code 
$\overline{\cHam(q, m, \ba)}(-\bone)$ for $q >2$ remained open for over 70 years. 
As demonstrated 
earlier, the minimum distance $d(\overline{\cHam(q, m, \ba)}(-\bone))$ of the sdandardly extended Hamming code 
$\overline{\cHam(q, m, \ba)}(-\bone)$ could be $3$ or $4$. 
The reader is cordially invited to work on Conjecture \ref{conj-HCextend}.   

In this paper, several fundamental results about the extended codes $\overline{\C}(\bu)$ of linear codes over finite fields were derived and presented in Theorem \ref{thm-general111}, Theorem \ref{thm-general112} and Corollary \ref{thm-general113}. The dual of the extended codes was characterized in Theorem \ref{thm:1}, and the dual distance of the extended codes was analyzed in Theorem \ref{dual-distance} and Theorem \ref{lem:2}. A lower bound of the minimum distance of the extended codes of narrow-sense nonprimitive BCH codes was developed in Theorem \ref{NTHM}. The parameters of the sdandardly extended code of some nonbinary Hammming codes and some Simplex codes were investigated in Section \ref{Sec:6}. Many infinite families of optimal codes were obtained in this paper with the extending technique $\overline{\C}(\bu)$.

A summary of the major specific contributions of this paper goes as follows.
\begin{enumerate}
\item Some extended codes of a family of $[q, 3, q-2]$ MDS codes $\C_\alpha$ over $\gf(q)$ were studied in Section \ref{sec-nov24}, and the following main results were obtained:  
\begin{itemize}
\item Although the standardly extended code $\overline{\C_\alpha}(-\bone)$ is trivial,  there is a vector $\bu^{(1)}$ such that the extended code $\overline{\C_\alpha}(\bu^{(1)})$ is a $[q+1, 3, q-1]$ MDS code over $\gf(q)$ (see Theorem \ref{INF-TT}).	
\item The doubly extended code $\overline{\overline{\C_\alpha}(\bu^{(1)})}(-\bone)$ is a $[q+2,3,q-1]$ AMDS code over $\gf(q)$ and $d(\overline{\overline{\C_\alpha}(\bu^{(1)})}(-\bone)^\perp)=2$. This doubly extended code $\overline{\overline{\C_\alpha}(\bu^{(1)})}(-\bone)$ is not very special. However, 
there is a vector $\bu^{(2)}$ such that the following hold: 
\begin{itemize}
\item The doubly extended code $\overline{\overline{\C_\alpha}(\bu^{(1)})}(\bu^{(2)})$ is a $[q+2, 3, q-1]$ NMDS code over $\gf(q)$ for odd $q\geq 5$ and a $[q+2, 3, q]$ MDS code over $\gf(q)$ for even $q\geq 4$ (see Theorem \ref{ConicCode1}). Hence, an infinite family of NMDS codes and an infinite family of MDS codes were obtained.
\item The triply extended code $\overline{\overline{\overline{\C_\alpha}(\bu^{(1)})}(\bu^{(2)})}(-\bone)$ is a $[q+3, q, 3]$ NMDS code over $\gf(q)$ (see Theorem \ref{NNMDS::17}). Hence, another infinite family of NMDS codes were obtained and their weight distributions 
were settled. 
\end{itemize}	
\end{itemize} 
These results above show that multiply extended codes could be very interesting. 
\item The standardly extended codes of several families of cyclic codes were investigated in Section \ref{nsec::5},  and the following main results were obtained:  
\begin{itemize}
\item A family of $[\lambda(\frac{q^m-1}{q-1})+1, \lambda(\frac{q^m-1}{q-1})-m, 3]$ 	codes over $\gf(q)$ were discovered. These codes are dimension-optimal (see Theorem \ref{Cyclic-optimal}). 
\item A family of $[\frac{q^m-1}{e}+1, \frac{q^m-1}{e}-\frac{3m}2, 4]$ codes over $\gf(q)$ were discovered, where $m\geq 2$ is even. These codes are distance-optimal (see Corollary \ref{OPM4:1}).
\item A family of $[q^m+2, q^m+1-2m, 4]$ codes over $\gf(q)$ were discovered, where $q\geq 4$ is even. These codes are distance-optimal (see Corollary \ref{NCOR1}).  
\item A family of $[\frac{q^m+3}2, \frac{q^m+1}2-2m, 4]$ codes over $\gf(q)$ were discovered, where $q^m\equiv 1\pmod{4}$ and $q\geq 11$. These codes are distance-optimal (see Corollary \ref{NCOR2}). 
\item A family of $[\frac{q^m+3}2, \frac{q^m+1}2-2m, 4\leq d\leq 5]$ codes over $\gf(q)$ were discovered, where $q^m\equiv 1\pmod{4}$ and $q\in \{5,7,9\}$. Further, $d=4$ provided that $q^m\equiv -1\pmod{3}$. These codes are distance-almost-optimal with respect to the sphere packing bound (see Corollary \ref{NCOR2}). 
\item A family of $[\frac{3^m+3}2, \frac{3^m+1}2-2m, 5\leq d\leq 6]$ codes over $\gf(3)$ were discovered, where $m\geq 2$ is even. Further, $d=5$ provided that $m\equiv 4\pmod{8}$ or $m\equiv 6\pmod{12}$. These codes are distance-almost-optimal with respect to the sphere packing bound (see Corollary \ref{NCOR2}). 
\item A family of $[\frac{3^m+1}2, \frac{3^m-1}2-2m, 5]$ codes over $\gf(3)$ were discovered, where $m\geq 3$. These codes are distance-almost-optimal with respect to the sphere packing bound (see Corollary \ref{CEXT20}). 
\item A family of $[\frac{3^m-1}{e}+1, \frac{3^m-1}{e}-2m, 5]$ codes over $\gf(3)$ were discovered, where $m\geq 4$ is even. These codes are distance-almost-optimal with respect to the sphere packing bound for $m\geq 1+3\log_3e$ (see Corollary \ref{CC-2}).
\item A family of $[2\cdot 3^s-1, 2\cdot 3^s-2-3s, 5]$ codes over $\gf(3)$ were discovered, where $s\geq 2$. These codes are distance-almost-optimal with respect to the sphere packing bound (see Corollary \ref{CEX25}).
\item A family of $[2\cdot 3^s+3, 2\cdot 3^s+2-4s, 5]$ codes over $\gf(3)$ were discovered, where $s\geq 2$. These codes have good parameters (see Corollary \ref{CEX26}).
\end{itemize}
\item The extended codes of nonbinary Hamming codes $\C$ of length $n=(q^m-1)/(q-1)$ over $\gf(q)$ were investigated in 
Section \ref{Sec:6}, and the following main results were obtained:    
\begin{itemize}
\item For $q=2^s$ with $s \geq 2$ and $m=2$, there exists a family of standardly extended codes $\overline{\cHam(q,m,\ba)}(-\bone)$ of the Hamming codes $\cHam(q,m,\ba)$ with minimum distance $4$ (see Theorem \ref{MDS::1}).	
\item The extended codes $\overline{\cHam(q,m,\ba)}(-\bone)$  of nonbinary Hamming codes have minimum distance $3$, provided that $q$ is odd and $m\geq 2$ is even (see Theorem \ref{thm15}).

\item The extended codes $\overline{\cHam(q,m,\ba)}(-\bone)$  of nonbinary Hamming codes have minimum distance $3$, provided
that $q>2$ and $m\equiv 0\pmod{3}$ (see Theorem \ref{thm16}). 

\item The standardly extended codes of the nonbinary constacyclic Hamming codes have minimum distance $3$ (see Theorem \ref{HAMM}).  
\end{itemize}
\end{enumerate}
In summary,  four infinite families of dimension-optimal or distance-optimal codes were obtained in this paper with the extending technique $\overline{\C}(\bu)$.  In addition, Open Problem \ref{prob-1120} was solved for eleven infinite families of linear codes.

\

{\bf Data availability}~No data was used for the research described in the article.

\end{document}